\newtheorem{lemma}{Lemma}[section]
\newtheorem{proposition}{Proposition}[section]
\newtheorem{theorem}{Theorem}[section]
\newtheorem{corollary}{Corollary}[section]
\newtheorem{definition}{Definition}[section]
\newtheorem{remark}[theorem]{Remark}
\newcommand{\vecp}{{\mathbf p}}
\newcommand{\vecx}{{\mathbf x}}
\newcommand{\B}{{B}}
\newcommand{\G}{{G}}
\newcommand{\tB}{{\hat{B}}}
\newcommand{\tG}{{\hat{G}}}
\newcommand{\p}{{\mathbf p}}
\newcommand{\CG}{{\mathcal G}}
\newcommand{\CM}{{\mathcal M}}
\newcommand{\0}{{\mathbf 0}} 
\newcommand{\f}{{\mathbf f}}
\newcommand{\x}{{\mathbf x}}
\newcommand{\bc}{\bar{c}}
\newcommand{\ot}{\leftarrow}
\newcommand{\D}{\displaystyle}
\newcommand{\eps}{\varepsilon}
\newcommand{\classP}{{\sf P}}
\newcommand{\classNP}{{\sf NP}}
\newcommand{\classAPX}{{\sf APX}}
\newcommand{\classCLS}{{\sf CLS}}
\newcommand{\classPLS}{{\sf PLS}}
\newcommand{\classPPAD}{{\sf PPAD}}
\let\oldnl\nl
\newcommand{\nonl}{\renewcommand{\nl}{\let\nl\oldnl}}
\title{Satiation in Fisher Markets and Approximation of Nash Social Welfare}
\author{Jugal Garg\thanks{University of Illinois at Urbana-Champaign. Supported by NSF CRII Award 1755619.}\\
\texttt{jugal@illinois.edu}
\and 
Martin Hoefer\thanks{Institut f\"ur Informatik, Goethe-Universit\"at Frankfurt/Main} \\
\texttt{mhoefer@cs.uni-frankfurt.de}
\and
Kurt Mehlhorn\thanks{Max-Planck-Institut f\"ur Informatik}\\ 
\texttt{mehlhorn@mpi-inf.mpg.de}}
\date{}
\begin{document}
\maketitle

\begin{abstract}
We study linear Fisher markets with satiation. In these markets, sellers have \emph{earning limits} and buyers have \emph{utility limits}. Beyond natural applications in economics, these markets arise in the context of maximizing Nash social welfare when allocating indivisible items to agents. In contrast to markets with \emph{either} earning \emph{or} utility limits, markets with both limits have not been studied before. They turn out to have fundamentally different properties. 

In general, existence of competitive equilibria is not guaranteed. We identify a natural property of markets (termed \emph{money clearing}) that implies existence. We show that the set of equilibria is not always convex, answering a question of \cite{ColeDGJMVY17}. We design an FPTAS to compute an approximate equilibrium and prove that the problem of computing an exact equilibrium lies in the intersection of complexity classes \classPLS\ 
and \classPPAD. 
For a constant number of buyers or goods, we give a polynomial-time algorithm to compute an exact equilibrium.

We show how (approximate) equilibria can be rounded and provide the first constant-factor approximation algorithm (with a factor of $2.404$) for maximizing Nash social welfare when agents have budget-additive valuations. Finally, we significantly improve the approximation hardness for additive valuations to $\sqrt{8/7} > 1.069$ (over 1.00008 in~\cite{Lee17}).
\end{abstract}

\newcommand{\tU}{\tilde{U}}
\newcommand{\tu}{\tilde{u}}
\newcommand{\tCM}{\tilde{\mathcal M}}

\newcommand{\gijk}{\gamma_{ijk}}
\newcommand{\lk}{{\lambda_k}}
\newcommand{\tli}{{\tilde{\lambda}_i}}
\newcommand{\Li}{{\Lambda_i}}

\newcommand{\ai}{\alpha_i}
\newcommand{\ak}{\alpha_k}
\newcommand{\tai}{\tilde{\alpha}_i}

\newcommand{\1}{{\mathbf 1}}

\renewcommand{\P}{{\p^{min}}}
\renewcommand{\L}{{\sf LCP }}

\section{Introduction}

The Fisher market model was introduced by Irving Fisher in 1891~\cite{BrainardS00}, and it has been a prominent approach to study market properties ever since. A Fisher market consists of a set of buyers and a set of divisible goods. Buyers come to the market with money and have utility functions over allocations of goods. We assume that each good is brought by a seller and comes in unit supply. Given prices of goods, each buyer demands a bundle of goods that maximizes her utility. In a market equilibrium, the prices are such that all goods are fully sold. 

Market equilibrium is a central solution concept in economics that has found many surprising applications even in \emph{non-market settings}, which do not involve an exchange of money. The reason is that market equilibria exhibit remarkable fairness and efficiency properties -- the most prominent example is the popular fairness notion of \emph{competitive equilibrium with equal incomes} (CEEI)~\cite{Moulin03}. 

In a linear Fisher market model, each buyer has a linear utility function. Linear market models have been extensively studied since 1950s~\cite{EisenbergG59,Gale60}. Recently, two natural generalizations of linear Fisher markets, based on satiation, are introduced to model real-life preferences. In these models either $i)$ buyers have utility limits or $ii)$ sellers have earning limits.

In the first model, each buyer has an upper limit on the amount of utility that they want to derive. Each buyer spends the least amount of money to purchase a bundle of goods that maximizes their utility up to the limit. She takes back any unused part of her money. These utility functions with limits are also known as budget-additive utility functions, which arise naturally in cases where agents have an intrinsic upper bound on their utility. For example, if the goods are food and the utility of a food item for a particular buyer is its calorie content, calories above a certain threshold do not increase the utility of the buyer. In addition, there are a variety of further applications in adword auctions and revenue maximization problems; see~\cite{BeiGHM16} for details. Equilibrium in this model always exists and can be captured by a convex program~\cite{ColeDGJMVY17,BeiGHM16}, and there is a combinatorial polynomial-time algorithm to find it~\cite{BeiGHM16}. 

In the second model, each seller has an upper limit on the amount of money that they want to earn. Each seller sells the least amount of their good to earn the maximum amount of money up to the limit. He takes back any unsold portion of the good. This is a natural property in many settings, e.g., when sellers have revenue targets; see~\cite{ColeDGJMVY17} for further applications. Equilibrium in this model may not always exist. However, a necessary and sufficient condition for the existence, a convex programming formulation, and combinatorial polynomial-time algorithms for computation of equilibrium when it exists have been obtained~\cite{ColeDGJMVY17,BeiGHM17}. 

The natural generalization of the two models, where both buyers and sellers have limits, was only briefly introduced in~\cite{ColeDGJMVY17}. The authors posed an intriguing open question of obtaining a convex programming formulation for equilibria in this model. Markets with both utility and earning limits are the main subject of our paper. We study markets of this class that satisfy the sufficient condition for existence in the context of earning limits. In this class of markets, we show that the set of equilibria can be non-convex, thereby answering the question of~\cite{ColeDGJMVY17}. We design an FPTAS to compute an approximate equilibrium and prove that the problem of computing an exact equilibrium lies in the intersection of the complexity classes \classPLS\ (Polynomial Local Search) and \classPPAD\ (Polynomial Parity Arguments on Directed Graphs). For a constant number of buyers or goods, we give a polynomial time algorithm to compute an exact equilibrium. To the best of our knowledge, this is the first market equilibrium problem which lies in \classPPAD\ $\cap$ \classPLS\ and for which no polynomial-time algorithm is known. 

Beyond the economic interest in modeling buyers and sellers' preferences, generalized Fisher markets have also found further applications, especially for approximating the maximum Nash social welfare when allocating \emph{indivisible} items to a set of agents. The Nash social welfare is defined as the geometric mean of agents' valuations, which provides an interesting trade-off between the extremal objectives of social welfare and max-min welfare. In social welfare, the objective is to maximize the \emph{sum of valuations}, while in max-min welfare, the objective is to maximize the \emph{minimum of valuations}. The Nash social welfare objective has been proposed in the classic game theory literature by Nash~\cite{Nash50} when solving the bargaining problem. It is closely related to the notion of proportional fairness studied in networking~\cite{Kelly97}. Nash social welfare satisfies a set of desirable axioms such as independence of unconcerned agents, the Pigou-Dalton transfer principle, and independence of common utility scale (see, e.g.,~\cite{KanekoN79, Moulin88}). The latter implies that, in contrast to both social welfare and max-min fairness, it is invariant to individual scaling of each agent valuation with a possibly different constant factor. 

The problem of maximizing the Nash social welfare objective is known to be \classAPX-hard~\cite{Lee17}, even for additive valuations. As a remarkable result, Cole and Gkatzelis~\cite{ColeG18} gave the first constant-factor approximation algorithm for additive valuations. The constant was subsequently improved to 2~\cite{ColeDGJMVY17}. The algorithm computes and rounds an equilibrium of a Fisher market where sellers have earning limits. Moreover, the approach has been extended to provide a 2-approximation in multi-unit markets with agent valuations, which remain additive-separable over items~\cite{BeiGHM17}, but might be concave in the number of copies received for each item~\cite{AnariMGV18}.

In this paper, we show that (approximate and exact) market equilibria that are computed by our algorithms can be rounded to give the first constant-factor approximation algorithm for maximizing the Nash social welfare when agents have budget-additive valuation functions. These are \emph{a class of non-separable submodular} valuation functions. The analysis of budget-additive valuations significantly advances our understanding beyond additive and towards submodular ones. Moreover, budget-additive valuations are of interest in a variety of applications, most prominently in online advertising~\cite{MehtaSVV07, Mehta12}. They have been studied frequently in the literature, e.g., for offline social welfare maximization~\cite{AndelmanM04, AzarBKMN08, Srinivasan08, ChakrabartyG10, Kalaitzis16}, online algorithms~\cite{BuchbinderJN07,DevanurJSW19}, mechanism design~\cite{BuchfuhrerDFKMPSSU10}, Walrasian equilibrium~\cite{RoughgardenT15,FeldmanGL16}, and market equilibrium~\cite{BeiGHM16, ColeDGJMVY17}.

Finally, we also strengthen the existing hardness results for approximating Nash social welfare. We provide a new inapproximability bound of $1.069$ that applies even in the case of \emph{additive} valuations. This significantly improves the constant over 1.00008 in~\cite{Lee17}.

\subsection{Contribution and Techniques}\label{sec:technical}
\paragraph{Money-Clearing Markets and an FPTAS}

We study Fisher markets with additive valuations that have earning and utility limits. In markets with utility limits, a market equilibrium always exists. For markets with earning limits, equilibria exist if and only if the market satisfies a natural condition on budgets and earning limits (which we term \emph{money clearing}). This condition holds, in particular, for all market instances that arise in the context of computing approximate solutions for the Nash social welfare problem. In both markets models with \emph{either} earning \emph{or} utility limits, the set of equilibria is always convex. 

For \emph{both} earning \emph{and} utility limits, we also concentrate on markets with the money-clearing condition, which we show is sufficient (but not necessary) for the existence of an equilibrium. We prove that the set of market equilibria can be \emph{non-convex}. Hence, in contrast to the above cases, the toolbox for solving convex programs (e.g., ellipsoid~\cite{ColeDGJMVY17} or scaling algorithms~\cite{ColeG18,BeiGHM16}) is not directly applicable for computing an equilibrium.

Our main result is a new algorithm to compute an approximate equilibrium. Based on a constant $\eps > 0$, it perturbs the valuations and rounds the parameters $v_{ij}$ up to the next power of $(1+\eps)$. Then it computes an exact equilibrium of the perturbed market in polynomial time, which represents an approximate equilibrium in the original market. This yields a novel FPTAS for markets with earning and utility limits. We note that the non-convexity of equilibria also applies to perturbed markets, which is surprising since we show an exact polynomial-time algorithm for computing an equilibrium.

To compute an exact equilibrium in the perturbed market, we first obtain an equilibrium (prices $\p$, allocation $\x$) of a market that results from ignoring all utility limits~\cite{ColeDGJMVY17,BeiGHM17}. This is not an equilibrium of the market with both limits, because some buyers may be overspending. Let the \emph{surplus of a buyer} be the money spent minus the money needed to earn the optimal utility, and similarly let the \emph{surplus of a good} be the target earning minus the actual earning. Let $S$ be the set of buyers who have positive surplus at prices $\p$. Our idea is to pick a buyer, say $k$, in $S$ and decrease the prices of goods in a coordinated fashion. The goal is to make $k$'s surplus zero while maintaining the surpluses of all goods and all buyers not in $S$ to be zero. We show that after a polynomial number of iterations of price decrease, either the surplus of buyer $k$ becomes zero or we discover a good with price 0 in equilibrium. Picking a particular buyer is crucial in the analysis, because we rely on this buyer to show that a certain parameter strictly decreases. This implies substantial price decrease of goods and polynomial running time.

\paragraph{Complexity of Exact Equilibria} 
In addition to the FPTAS, we examine the complexity of computing an exact equilibrium in money-clearing markets. We show that this problem lies in $\classPPAD \cap \classPLS$. For membership in \classPLS\ we first design a finite-time algorithm to compute an exact equilibrium. We define a finite \emph{configuration} space such that the algorithm proceeds through a sequence of configurations. We show that configurations in the sequence do not repeat and the algorithm terminates with an equilibrium. By defining a suitable potential function over configurations, we show that the problem is in \classPLS. As a refinement, for a constant number of buyers or sellers, we show that the number of configurations is polynomially bounded using a cell decomposition technique. This implies that our algorithm computes an equilibrium in polynomial time if the number of buyers or goods is constant. 

For membership in \classPPAD\ we first derive a formulation as a linear complementarity problem (LCP). It captures all equilibria, but it also has non-equilibrium solutions. To discard the non-equilibrium solutions, we incorporate a positive lower bound on several variables. This turns out to be a non-trivial adjustment, because a subset of prices may be zero at all equilibria, so we must be careful not to discard equilibrium solutions. Then, we suitably add an auxiliary variable to the LCP and apply Lemke's algorithm~\cite{CottlePS92}. Under the money clearing condition, we can show that the algorithm is guaranteed to converge to an exact equilibrium. This, with a result of Todd~\cite{Todd76}, proves the problem lies in \classPPAD. 

\paragraph{Approximating Nash Social Welfare}
Finally, we consider the problem of maximizing Nash social welfare when allocating indivisible items to agents. We design an approximation algorithm that computes an equilibrium in a money-clearing market and rounds it to an integral allocation. Here we study the problem for agents with budget-additive valuation functions. For these instances, money-clearing markets with earning and utility limits represent a natural fractional relaxation. We use our algorithms to compute an exact equilibrium (in the FPTAS with respect to perturbed valuations). Given an exact equilibrium (with respect to either perturbed or original valuations), we provide a rounding algorithm that turns the fractional allocation into an integral one. While the algorithm exploits a tree structure of the equilibrium allocation as in~\cite{ColeG18}, the rounding becomes much more challenging, and we must be careful to correctly treat agents that reach their utility limits in the equilibrium. In particular, we first conduct several initial assignment steps to arrive at a solution where we have a set of rooted trees on agents and items, and each item $j$ has exactly one child agent $i$ who gets at least half of its fractional valuation from $j$. In the main step of the rounding algorithm, we need to ensure that the root agent $r$ receives one of its child items. Here we pick a child item $j$ that generates the most value for $r$. A problem arises at the child agent $i$ of $j$, since $r$ receiving $j$ could decrease $i$'s valuation by a lot more than a factor of 2. Recursively, we again need to enforce an allocation for the root agent, thereby ``stealing'' fractional value from one of its grandchildren agents. This approach may seem hopeless to yield any reasonable approximation guarantee, but we show that overall the agents only suffer by a small constant factor.

Our analysis of this rounding procedure provides a lower bound on the Nash social welfare obtained by the algorithm, which is complemented with an upper bound on the optimum solution. Both bounds crucially exploit the properties of agents (goods) that reach the utility (earning) limits in the market equilibrium. These bounds imply an approximation factor of $2e^{1/(2e)} < 2.404$. Since the equilibrium conditions apply with respect to perturbed valuations, we obtain a $(2e^{1/(2e)}+\varepsilon)$-approximation in polynomial time, for any constant $\varepsilon > 0$.

In terms of lower bounds, we strengthen the inapproximability bound to $\sqrt{8/7} > 1.069$. Our improvement is based on a construction for hardness of social welfare maximization for budget-additive valuations from~\cite{ChakrabartyG10}. For the Nash social welfare objective, we observe how to drop the utility limits and apply the construction even for additive valuations.

A preliminary version of this paper appeared at the 29th ACM-SIAM Symposium on Discrete Algorithms (SODA 2018)~\cite{GargHM18}.

\subsection{Related Work}

\paragraph{Market Equilibria} The problem of computing market equilibria is an intensely studied problem, so we restrict to previous work that appears most relevant. 

For linear Fisher markets, equilibria are captured by the Eisenberg-Gale convex program~\cite{EisenbergG59}. Later, Shmyrev~\cite{Shmyrev09} obtained another convex program for this problem. Cole et al.~\cite{ColeDGJMVY17} provide a dual connection between these and other convex programs. A combinatorial polynomial-time algorithm for computing an equilibrium in this model was obtained by Devanur et al.~\cite{DevanurPSV08}. Orlin~\cite{Orlin10} gave the first strongly polynomial-time algorithm using a scaling technique. More recently, V\'egh~\cite{Vegh14} gave another strongly-polynomial algorithm using a different scaling-based algorithm.

Fisher markets are a special case of more general Arrow-Debreu exchange markets. There are many convex programming formulations for linear exchange markets; see~\cite{DevanurGV16} for details. The first polynomial time algorithm was obtained by Jain~\cite{Jain07} based on the ellipsoid method. Ye~\cite{Ye07} obtained a polynomial time algorithm based on the interior-point method. The first combinatorial polynomial-time algorithm was developed by Duan and Mehlhorn~\cite{DuanM15}, and it was later improved in~\cite{DuanGM16}. More recently, Garg and V\'egh~\cite{GargV19} obtained the first strongly polynomial-time algorithm for this problem. 

Linear Fisher markets with either utility or earning limits were studied only recently~\cite{ColeDGJMVY17,BeiGHM16,BeiGHM17}, and equilibria in these models can be captured by extensions of Eisenberg-Gale and Shmyrev convex programs, respectively. In markets with utility limits, combinatorial polynomial time algorithms are obtained~\cite{BeiGHM16,Vegh14}, the set of equilibria forms a lattice, and equilibria with maximum or minimum prices can also be obtained efficiently~\cite{BeiGHM16}. In markets with earning limits, combinatorial polynomial-time algorithms are obtained in~\cite{ColeG18,BeiGHM17}. Any equilibrium can be refined to one with minimal or maximal prices in polynomial time~\cite{BeiGHM17}.  

\paragraph{Nash Social Welfare} The Nash social welfare is a classic objective for allocation of goods to agents. It was proposed by Nash~\cite{Nash50} for the bargaining problem as the unique objective that satisfies a collection of natural axioms. Since then it has received significant attention in the literature on social choice and fair division (see, e.g.\ \cite{CaragiannisKMPSW16, DarmannS15, RamezaniS10, FreemanZC17} for a subset of notable recent work, and the references therein).

For divisible items, the problem of maximizing the Nash social welfare is solved by competitive equilibria with equal incomes (CEEI)~\cite{Moulin03}. However, CEEI can provide significantly more value in terms of Nash social welfare than optimal solutions for indivisible items. To obtain an improved bound on the indivisible optimum, Cole and Gkatzelis~\cite{ColeG18} introduced and rounded \emph{spending-restricted equilibria}, i.e., equilibria in markets with an earning limit of 1 for every good. More generally, equilibria in linear markets with earning limits can be described by a convex program~\cite{ColeDGJMVY17} similar to the one by Shmyrev. 

For indivisible items and general non-negative valuations, the problem of maximizing the Nash social welfare is hard to approximate within any finite factor~\cite{NguyenNRR14}. For additive valuations, the problem is \classAPX-hard~\cite{Lee17}, and efficient 2-approximation algorithms based on market equilibrium~\cite{ColeG18,ColeDGJMVY17} and stable polynomials~\cite{AnariGSS17, AnariMGV18} exist. These algorithms have been extended to give a 2-approximation also in markets with multiple copies per item~\cite{BeiGHM17} and additive-separable concave valuations~\cite{AnariMGV18}. Barman et al.~\cite{BarmanKV18} introduced another technique based on limited envy and obtained a $1.45$-approximation for additive valuations. Very recently, Chaudhury et al.~\cite{ChaudhuryCGGHM18} generalized this result to obtain a $1.45$-approximation for a common generalization of both budget-additive and additive-separable concave valuations. 

\subsection{Outline}
The rest of the paper is structured as follows. We introduce notation and preliminaries in the following Section~\ref{sec:Prelim}. In Section~\ref{sec:exist} we discuss the existence of market equilibria under the money clearing condition. The FPTAS for perturbed markets is discussed in Section~\ref{sec:FPTAS}. The following sections contain our results on computing exact equilibria -- membership in \classPLS\ (Section~\ref{sec:PLS}), the polynomial-time algorithms for a constant number of buyers or goods (Section~\ref{sec:Constant}), and membership in PPAD (Section~\ref{sec:PPAD}). The rounding algorithm for maximizing the Nash social welfare and the analysis of its approximation factor are presented in Section~\ref{sec:NSWalgo}. In Section~\ref{sec:LB} we present the improved hardness bound for approximation of Nash social welfare with additive valuations. Finally, we conclude in Section~\ref{sec:Future} with a discussion of directions for future research.


\section{Preliminaries}\label{sec:Prelim}

\paragraph{Fisher Markets with Earning and Utility Limits} 
In such a market, there is a set $B$ of $n$ \emph{buyers} and a set $G$ of $m$ \emph{divisible goods}. Each good is owned by a separate seller and comes in unit supply. Each buyer $i \in B$ has a \emph{value} $u_{ij} \ge 0$ for a unit of good $j \in G$ and an \emph{endowment} $m_i \ge 0$ of money. Suppose buyer $i$ receives a bundle of goods $\vecx_i = (x_{ij})_{j \in G}$ with $x_{ij} \in [0,1]$, then the \emph{utility function} is budget-additive $u_i(\vecx_i) = \min\left(c_i, \sum_{j} u_{ij}x_{ij}\right)$, where $c_i > 0$ is the \emph{utility cap}.

The vector $\vecx = (\vecx_i)_{i \in B}$ with $\sum_{i \in B} x_{ij} = 1$ for every $j \in G$ denotes a \emph{(fractional) allocation} of goods to buyers. For an allocation, we call $i$ a \emph{capped buyer} if $u_i(\vecx_i) = c_i$. We also maintain a vector $\vecp = (p_1,\ldots,p_m)$ of \emph{prices} for the goods. Given price $p_j$ for good $j$, a buyer needs to pay $x_{ij} p_j$ when getting $x_{ij}$ allocation of good $j \in G$. Given a vector of prices $\vecp$, a \emph{demand bundle} $\vecx_i^*$ of buyer $i$ is a bundle of goods that maximizes the utility of buyer $i$ for its budget, i.e., $\vecx_i^* \in \arg\max_{\vecx_i} \left\{ u_i(\vecx_i) \mid \sum_{j} p_j x_{ij} \le m_i \right\}$. For price vector $\vecp$ and buyer $i$, we use $\lambda_i = \min_j p_j/u_{ij}$ and denote by $\alpha_i = 1/\lambda_i$ the \emph{maximum bang-per-buck (MBB)} ratio (where we assume $0/0 = 0$). Given prices $\vecp$ and allocation $\vecx$, the \emph{money flow} $f_{ij}$ from buyer $i$ to seller $j$ is given by $f_{ij} = p_j x_{ij}$. If price $p_j > 0$, then $x_{ij}$ uniquely determines $f_{ij}$ and vice versa. 

For the sellers, let $x_j = \sum_i x_{ij}$, then the seller utility is $u_j(x_j, p_j) = \min(d_j, p_j x_j)$ $= \min\left(d_j, \sum_{i} f_{ij}\right)$, where $d_j > 0$ is the \emph{earning} or \emph{income cap}. We call seller $j$ a \emph{capped seller} if $u_j(x_j, p_j) = d_j$. An \emph{optimal supply} $e_j^*$ allows seller $j$ to obtain the highest utility, i.e., $e^*_j \in \arg \max \left\{ u_j(e_j,p_j) \mid e_j \le 1 \right\}$.

We consider three natural properties for allocation and supply vectors:

\begin{enumerate}
\item An allocation $\vecx_i$ for buyer $i$ is called \emph{modest} if $\sum_j u_{ij}x_{ij} \le c_i$. By definition, for uncapped buyers every demand bundle is modest. For capped buyers, a modest bundle of goods $\vecx_i$ is such that $c_i = \sum_j u_{ij}x_{ij}$. 
\item A demand bundle $\vecx_i$ is called \emph{thrifty} or \emph{MBB} if it consists only of MBB goods: $x_{ij} > 0$ only if $u_{ij}/p_j = \alpha_i$. For uncapped buyers every demand bundle is MBB. 
\item A supply $e_j$ for seller $j$ is called \emph{modest} if $e_j = \min(1, d_j/p_j)$.
\end{enumerate}

Given a set of prices, a thrifty and modest demand bundle for buyer $i$ minimizes the amount of money required to obtain optimal utility. A modest supply for seller $j$ minimizes the amount of supply required to obtain optimal utility in equilibrium. Our interest lies in market equilibria that have thrifty and modest demands and modest supply. Note that they also emerge when earning and utility caps are not satiation points but \emph{limits in the form of hard constraints} on the utility in equilibrium (c.f.\ \cite{ColeDGJMVY17}).
\begin{definition}[Thrifty and Modest Equilibrium]
\label{def:equilibrium}
A \emph{thrifty and modest (market) equilibrium} is a pair $(\vecx,\vecp)$, where $\vecx$ is an allocation and $\vecp$ a vector of \emph{prices} such that the following conditions hold: (1) $\vecp \ge 0$ (prices are nonnegative), (2) $e_j$ is a modest supply for every $j \in G$, (3) $x_j \le e_j$ for every $j \in G$ (no overallocation), (4) $\vecx_i$ is a thrifty and modest demand bundle for every $i \in B$, and (5) Walras' law holds: $p_j (e_j - x_j) = 0$ for every $j \in G$. 
\end{definition}
Note that in equilibrium, if $x_j  < e_j$, then $p_j = 0$, due to Walras law. Moreover, we assume that all parameters of the market $u_{ij}$, $c_i$, $d_j$ and $m_i$ for all $i \in B$ and $j \in G$ are non-negative integers. Let $U = \max_{i \in B, j \in G} \{u_{ij}, m_i, c_i, d_j\}$ be the largest integer in the representation of the market.

Consider the following condition termed \emph{money clearing}: For each subset of buyers and the goods these buyers are interested in, there must be a feasible allocation of the buyer money that does not violate the earning caps. More formally, let $\tB \subseteq B$ be a set of buyers, and $N(\tB) = \{j\in G\ |\ u_{ij}>0 \text{ for some } i\in \tB\}$ be the set of goods such that there is at least one buyer in $\tB$ with positive utility for the good.
\begin{definition}[Money Clearing]
	A market is \emph{money clearing} if 
	\begin{equation}\label{eqn:nsc}
		\forall \tB \subseteq B,\; \sum_{i\in \tB}m_i \le \sum_{j\in N(\tB)} d_j \enspace.
	\end{equation}
\end{definition}
When there are only earning limits, money clearing is a precise characterization of markets that have thrifty and modest equilibria~\cite{BeiGHM17}. For markets with both limits, it is sufficient for existence (see Section~\ref{sec:exist}).

\paragraph{Perturbed Markets}
Our FPTAS in Section~\ref{sec:FPTAS} computes a thrifty and modest equilibrium in a \emph{perturbed market} $\tCM$.

\begin{definition}[Perturbed Utility, Perturbed Market]
For a market $\CM$ and a parameter $\eps > 0$, the \emph{perturbed utility} of buyer $i$ is given by $\tu_i(\vecx_i) = \sum_j \tu_{ij} x_{ij}$, where $\tu_{ij} \in \{0,(1+\eps)^{k} \; | \text{ integer } k \ge 0\}$ such that
\begin{equation}\label{eqn:approxU}
\tu_{ij}/(1+\eps) \le u_{ij} \le \tu_{ij}, \;\; \forall i\in B, j\in G.
\end{equation}
The \emph{perturbed market} $\tCM$ is exactly the market $\CM$, in which every buyer $i \in B$ has perturbed utilities $\tu_i$.
\end{definition}
In Section~\ref{sec:FPTAS} we observe that an exact equilibrium in $\tCM$ represents an $\eps$-approximate equilibrium for the unperturbed market $\CM$. 

\paragraph{Nash Social Welfare} 
There is a set $B$ of $n$ \emph{agents} and a set $G$ of $m$ \emph{indivisible items}, where we assume $m \ge n$. We allocate the items to the agents, and we represent an allocation $S = (S_1,\ldots,S_n)$ using a characteristic vector $\x^S$ with $x^S_{ij} = 1$ iff $j \in S_i$ and 0 otherwise. Agent $i \in B$ has a \emph{value} $v_{ij} \ge 0$ for item $j$ and a global \emph{utility cap} $c_i > 0$. The \emph{budget-additive valuation} of agent $i$ for an allocation $S$ of items is $v_i(\x_i^S) = \min\left( c_i, \sum_{j \in G} v_{ij}x_{ij}^S \right)$. The goal is to find an allocation that approximates the optimal Nash social welfare, i.e., the optimal geometric mean of valuations
 \[\max_{S} \left(\prod_{i \in B} v_i(\x_i^S)‚ \right)^{1/n}\enspace.\]
Our approximation algorithm in Section~\ref{sec:NSW} relies on rounding an equilibrium for a linear Fisher market with earning and utility limits. Our rounding algorithm loses a constant factor in the Nash social welfare. More precisely, we round an exact equilibrium of the perturbed market $\tCM$. The fact that this equilibrium satisfies the properties in Definition~\ref{def:equilibrium} with respect to perturbed utilities deteriorates the approximation factor only by a small constant (see Section~\ref{sec:perturbNSW}).


\newcommand{\q}{{\mathbf q}}
\newcommand{\g}{{\mathbf g}}
\newcommand{\s}{{\mathbf s}}

\newcommand{\nfrac}{\nicefrac}

\newcommand{\bDelta}{{\bm{\delta}}}
\newcommand{\bBeta}{{\bm{\beta}}}
\renewcommand{\l}{{\bm{\lambda}}}
\newcommand{\li}{{\lambda_i}}
\renewcommand{\L}{{\Lambda}}

\renewcommand{\P}{{\p^{min}}}

\newcommand{\J}[1]{{\color{blue} #1}}
\newcommand{\defeq}{\stackrel{\textup{def}}{=}}

\section{Computing Equilibria}

\subsection{Existence and Structure of Equilibria}
\label{sec:exist}

In this section, we briefly discuss existence and structure of thrifty and modest equilibria in markets with utility and earning limits. The set of equilibria in these markets has interesting and non-trivial structure. For markets with utility limits, an equilibrium always exists~\cite{BeiGHM16}. For markets with earning limits, an equilibrium may not exist, because uncapped buyers always spend all their money. In these markets, the money-clearing condition is necessary and sufficient for the existence of a thrifty and modest equilibrium~\cite{BeiGHM17} (see also~\cite{ColeDGJMVY17} for the case that $u_{ij} > 0$ for all $i \in B$, $j \in G$). 

We observe that in a market $\CM$ with both limits, money clearing is sufficient but not necessary for the existence of a thrifty and modest equilibrium. Our FPTAS below gives an $\eps$-approximate equilibrium in money-clearing markets, for arbitrarily small $\eps$. Since market parameters are finite integers, for sufficiently small $\eps$ this implies existence of an exact equilibrium. 

This is interesting since the structure of equilibria in such markets can be quite complex. For example, in money-clearing markets $\CM$ there can be no convex program describing thrifty and modest equilibria. This holds even if we restrict to the ones that are Pareto-optimal with respect to the set of all thrifty and modest equilibria. Equilibria for the corresponding markets \emph{without} caps, or with \emph{either} earning \emph{or} utility caps might not remain equilibria in the market with \emph{both} sets of caps. Hence, existence of a thrifty and modest equilibrium in money-clearing markets $\CM$ follows neither from a convex program nor by a direct application of existing algorithms for markets with only one set of either utility or earning caps. The following proposition summarizes our observations.

\begin{proposition}
\label{prop:structure}
There are markets $\CM$ with utility and earning limits such that the following holds:
\begin{enumerate}
\setlength{\tabcolsep}{0pt}
\setlength{\itemsep}{0pt}
\setlength{\parsep}{0pt}
\item $\CM$ is not money-clearing and has a thrifty and modest equilibrium.
\item $\CM$ is money-clearing, and the set of thrifty and modest equilibria is not convex. Among these equilibria, there are multiple Pareto-optimal equilibria, and their set is also not convex.
\item For a money-clearing market $\CM$ and the three related markets -- (1) with only utility caps, (2) with only earning caps, (3) without any caps -- the sets of equilibria are mutually disjoint.
\end{enumerate}
\end{proposition}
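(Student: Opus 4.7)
The plan is to prove each of the three parts by exhibiting a small explicit market and verifying the claimed properties directly against Definition~\ref{def:equilibrium}.

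For part~1, the mechanism I would exploit is that money clearing is forced only by buyers who must spend their entire budget; a utility cap can let a buyer saturate her cap without exhausting her budget. A one-buyer, one-good witness suffices: set $m_1$ strictly larger than the good's earning cap $d_1$, take $u_{11} = 1$ and utility cap $c_1 = d_1$. Then $p_1 = d_1$, $x_{11} = 1$ is thrifty and modest — the buyer saturates her cap while spending only $d_1 < m_1$, and the seller earns exactly $d_1$ — yet money clearing fails on $\tilde{B} = \{1\}$ since $m_1 > d_1$.

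For part~2, I would construct a small instance (two or three buyers and goods) where the MBB correspondence changes discontinuously with prices. Concretely, I would seek two equilibria $(\x^1,\p^1)$ and $(\x^2,\p^2)$ with $\p^1 \neq \p^2$ in which the capped buyers are supported by different MBB goods. At averaged prices some good that was MBB at one endpoint is no longer MBB, so any convex combination of allocations fails thriftiness (and, in some cases, modesty $\sum_j u_{ij} x_{ij} = c_i$). For the Pareto-optimality claim I would pick parameters so that the two utility vectors are incomparable: this makes both equilibria Pareto-optimal, and the fact that no strict midpoint is an equilibrium at all implies non-convexity of the Pareto-optimal subset as well.

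For part~3, I would design $\CM$ so that every thrifty and modest equilibrium $(\x,\p)$ of $\CM$ simultaneously features (i)~a strictly capped buyer with $u_i(\x_i) = c_i$ and $\sum_j p_j x_{ij} < m_i$ and (ii)~a strictly capped seller with $p_j > 0$, $p_j x_j = d_j$, and $x_j < 1$. Property~(ii) rules out $(\x,\p)$ being an equilibrium of the utility-cap-only market~(1), where $e_j = 1$ combined with Walras' law forces $x_j = 1$ whenever $p_j > 0$; property~(i) rules it out of the earning-cap-only market~(2), where uncapped buyers must spend their full budget; both together exclude it from the cap-free market~(3). A further choice of parameters — engineered so that removing only one side of the caps shifts equilibrium prices — makes the equilibrium sets of (1), (2), and (3) mutually distinct as well, giving pairwise disjointness among all four sets.

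The main difficulty will be part~2: simultaneously producing two equilibria that are each Pareto-optimal yet whose midpoint is not even an equilibrium. The tool that makes this possible is exactly that the MBB correspondence is not convex-valued once both kinds of caps interact — the phenomenon absent in the single-cap settings where equilibria are characterized by a convex program~\cite{ColeDGJMVY17,BeiGHM16}. Once the right parameter regime is identified, the verifications in all three parts reduce to inspecting Definition~\ref{def:equilibrium} case by case.
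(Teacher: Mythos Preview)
Your overall approach---exhibit explicit small markets and verify Definition~\ref{def:equilibrium} by hand---is exactly what the paper does, so the strategy is right.

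Two concrete issues, though.

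For part~1, your witness only works when $d_1 = 1$. With $u_{11} = 1$, $c_1 = d_1$, $p_1 = d_1$, $x_{11} = 1$, the buyer's raw utility is $u_{11}x_{11} = 1$, not $d_1$. If $d_1 < 1$ this violates modesty ($\sum_j u_{ij}x_{ij} > c_i$); if $d_1 > 1$ the buyer is uncapped and $x_{11} = 1$ is not a demand bundle (she would strictly prefer to buy more, and the demand definition has no supply constraint). Taking $d_1 = 1$ (or setting $u_{11} = d_1$ instead of $u_{11} = 1$) fixes this. The paper's own witness is similar in spirit but arranged so that both buyer and seller are strictly at their caps with $x_{11} < 1$.

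For parts~2 and~3, the proposition is proved \emph{by} the explicit examples; describing the mechanism you would exploit is not yet a proof. In particular, for part~2 you should be aware that in the paper's construction the Pareto incomparability is between buyers and \emph{sellers} (one equilibrium is strictly better for both buyers, the other strictly better for a seller), not between two buyers' utility vectors---so ``pick parameters so that the two utility vectors are incomparable'' is aiming at a slightly different target. For part~3, your structural criteria (a strictly capped buyer and a strictly capped seller in every equilibrium of $\CM$) are a nice way to separate $\CM$ from each of the three reduced markets, but the paper simply exhibits a market where all four variants have \emph{unique} and pairwise distinct equilibrium prices, which handles all six pairwise disjointness claims at once without further engineering.
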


\begin{proof}
We provide an example market for each of the three properties. \\

\noindent \textbf{Property 1:} Consider a linear market with one buyer and one good. The buyer has $m_1 = 2$, utility $u_{11} = 2$, and utility cap $c_1 = 1$. The good has earning cap $d_1 = 1$. The unique thrifty and modest equilibrium has price $p_1 = 2$ and allocation $x_{11} = 1/2$. Both seller and buyer exactly reach their cap. The active budget $m_i^a = 1$ equals the earning cap. Conversely, due to price 2, the supply is 1/2, for which the achieved utility equals the utility cap. Note that the money clearing condition~\eqref{eqn:nsc} is violated. \\

\noindent \textbf{Property 2:}
Consider the following example. There are two buyers and two goods. The buyer budgets are $m_1 = 2$ and $m_2 = 32$. The utility caps are $c_1 = \infty$, $c_2 = 32$, the earning caps are $d_1 = 8$, $d_2 = 26$. The linear utilities are given by the parameters $u_{11} = u_{22} = 32$, $u_{12} = 128$, and $u_{21} = 2$.

If we ignore all caps, the unique equilibrium has prices $(2,32)$ and buyer utilities $(32,32)$. If we ignore the utility caps and consider only earning caps, the equilibrium prices are $(8y,128y)$ and buyer utilities are $(\nicefrac{8}{y}, \nicefrac{8}{y})$, for $y \ge 1$. If we ignore the earning caps and consider only utility caps, the equilibrium prices are $(2,x)$ and buyer utilities are $(32,32)$, for $x \in [8,32]$. 

With all caps, the equilibria form two disjoint convex sets: either prices $(2,x)$ and buyer utilities $(32,32)$, for $x \in [8,26]$; or prices $(8y,128y)$ and buyer utilities $(\nicefrac{8}{y}, \nicefrac{8}{y})$, for $y \ge 1$. Note that $(2,x)$ for $x \in (26,32]$ are not equilibrium prices, since this would violate the earning cap of seller 2.

Observe that there are exactly two Pareto-optimal equilibria: prices $(1,6)$ (which also represents income for the sellers) and buyer utilities $(1,1)$; and prices $(5,50)$ (with income $(3,6)$ for the sellers) and buyer utilities $(\nicefrac{1}{5},\nicefrac{1}{5})$. The first equilibrium is strictly better for both buyers, the second one strictly better for seller 1.\\

\noindent \textbf{Property 3:}
Consider the following market with 2 buyers and 2 goods. The buyer budgets are $m_1 = 100$ and $m_2 = 11$. The utility caps are $c_1 = 0.9$, $c_2 = \infty$. The earning caps are $d_1 = 9$, $d_2 = \infty$. The utilities are $u_{11} = u_{22} = u_{12} = u_{21} = 1$.

If we ignore all caps, the unique equilibrium prices are $(55.5, 55.5)$. If we ignore the buyer caps and consider only seller caps, the unique equilibrium prices are $(102, 102)$. If we ignore the seller caps and consider only buyer caps, the unique equilibrium prices are $(10, 10)$. For both buyer and seller caps, the unique equilibrium prices are $(20, 20)$.
\end{proof}

\subsection{Computing Equilibria in Perturbed Markets}
\label{sec:FPTAS}

In this section, we describe and analyze Algorithm~\ref{alg:FPTAS}, an FPTAS for computing an approximate equilibrium in money-clearing markets $\CM$. The input parameters for such a market are $u_{ij}, m_i, c_i, d_j, \forall i\in B, j\in G$, where $u_{ij}$ is the utility derived by buyer $i$ for a unit amount of good $j$, $m_i$ is the budget of buyer $i$, $c_i$ is the utility cap of buyer $i$, and $d_j$ is the earning cap of seller $j$. For any $\eps > 0$, Algorithm~\ref{alg:FPTAS} computes an exact equilibrium in a perturbed market $\tCM$, where we increase every non-zero parameter $u_{ij}$ to the next-larger power of $(1+\eps)$. 

\begin{algorithm}[t]
\caption{\label{alg:FPTAS} FPTAS for $\CM$ with Earning and Utility Caps}
\DontPrintSemicolon
\SetKwInOut{Input}{Input}\SetKwInOut{Output}{Output}
\Input{Market $\CM$ given by budgets $m_i$, utility caps $c_i$, earning caps $d_j$, utilities $u_{ij}, \forall i \in B, j \in G$, approximation parameter $\eps$;}
\Output{Equilibrium $(\x,\p)$ of the perturbed market $\tCM$}
Construct $\tCM$ by increasing each non-zero $u_{ij}$ to the next-larger power of $(1+\epsilon)$, set $\tU \ot \max_{ij} \tu_{ij}$, and run the rest of the algorithm on $\tCM$\;
$(\f, \p) \ot$ equilibrium of $\tCM$ when ignoring all utility caps \; 
$Z \ot \{i\in B \mid s(i) = 0\}$ \tcp*{set of zero surplus buyers}
\While{$Z \neq B$}{
	$k \ot$ a buyer in $B\setminus Z$\tcp*{$s(k) > 0$}
	\While{$(s(k) > 0)$ and $(\min_{j\in G:p_j > 0} p_j > 1/n\tU^n)$}{
		$\tB \ot \{k\} \cup \{i\in B \mid i \text{ can reach $k$ in the MBB residual graph}\}$\;
		$\tG \ot \{j\in G \mid j \text{ can reach $k$ in the MBB residual graph}\}$\;
		$\p'\ot \p$ and $\gamma\ot 1$\; 
        Define $p_j \ot \gamma \cdot p_j, \forall j\in \tG$, and adjust active prices and budgets accordingly as a function of $\gamma$\;
		Decrease $\gamma$ continuously down from 1 until one of these events occurs:\;
		\ \ \ \ \ {\bf Event 1:} A new MBB edge appears\;
		\ \ \ \ \ {\bf Event 2:} $\gamma =$ MinFactor$(\p', \f, \tB, \tG, Z)$ \tcp*{Algorithm \ref{alg:minFactor}} 
		$\f \ot$ FeasibleFlow$(\p, Z)$\tcp*{Algorithm \ref{alg:FeasibleFlow}}
	}
	\If{$\min_{j:p_j > 0} p_j \le 1/n\tU^n$}{
		Choose any good $\ell \in \arg\min \{ p_j \mid p_j > 0 \}$\;
		$\tG \ot \{\ell\} \cup \{j\in G \mid j \text{ is connected to $\ell$ in the MBB graph } \}$\;
		$\tB \ot \{i\in B \mid \tu_{ij} > 0 \text{ for some } j\in \tG\}$\;
          Assign $(\x_i)_{i \in \tB}$ according to $\f$\;
		$s(i) \ot 0, \forall i\in \tB$ and $p_j \ot 0, \forall j \in \tG$\;
	}
	$Z \ot Z \cup \{i\in B\ |\ s(i) = 0\}$\;\label{line:Zinc}
}
Assign $\x_i$ according to $\f$ for all buyers $i \in B$ that have not been assigned yet.\;
\Return $(\x,\p)$
\end{algorithm}

\paragraph{Additional Concepts}
Our algorithm steers prices and money flow towards equilibrium by monitoring the surplus of buyers and sellers. Note that a buyer $i$ is capped if $m_i \alpha_i \ge c_i$. 

\begin{definition}[Active Budget, Active Price, Surplus]
Given prices $\p$ and flow $\f$, the \emph{active budget} of buyer $i$ is $m_i^a = \min(m_i, c_i/\alpha_i)$, the \emph{active supply} of seller $j$ is $e_j^a = \min(1, d_j/p_j)$, and the \emph{active price} is $p_j^a = p_j e_j^a = \min(p_j,d_j)$. The \emph{surplus of buyer $i$} is $s(i) = \sum_{j \in \G} f_{ij} - m_i^a$, and the \emph{surplus of good $j$} is $s(j) = p_j^a - \sum_{i\in \B} f_{ij}$.
\end{definition}

Several graphs connected to the MBB ratio are useful here. As argued in~\cite{Orlin10,DuanGM16}, we can assume w.l.o.g. that the MBB graph is \emph{non-degenerate}, i.e., it is a forest. 

\begin{definition}[MBB edge, MBB graph, MBB residual graph] 	
\label{def:MBBgraphs}
\label{def:MBBedge}
	Given prices $\p$, an undirected pair $\{i,j\}$ is an \emph{MBB edge} if $i \in B$, $j \in G$, and $u_{ij}/p_j = \alpha_i$. The \emph{MBB graph} $\CG(\p) = (B \cup G, E)$ is an undirected graph that contains exactly the MBB edges. Given prices $\p$ and money flow $\f$, the \emph{MBB residual graph} $\CG_r(\f,\p) = (B \cup G, A)$ is a directed graph with the following arcs: If $\{i,j\}$ is MBB, then $(i,j)$ is an arc in $A$; if $\{i,j\}$ is MBB and $f_{ij} > 0$, then $(j,i)$ is an arc in $A$.
\end{definition}

Let us also define a \emph{reverse flow network} $N^-(\p,Z)$ by adding a sink $t$ to the MBB graph. The network has nodes $G \cup B \cup \{t\}$, edges $(i, t)$ for $i\in B\setminus Z$, and the reverse MBB edges $(j,i)$ if $(i,j)$ is an MBB edge. All edges have infinite capacity. The supply at node $j \in G$ is $p_j^a$, demand at node $i \in B$ is $m_i^a$, and demand at node $t$ is $\sum_j p_j^a - \sum_i m_i^a$. The flow in the network corresponds to money. Given a money flow $\f$ in the network $N^-(\p, Z)$, the surplus of buyer $i \in B\setminus Z$ corresponds to flow on $(i,t)$
\[s(i) = \sum_{j \in G} f_{ij} - m_i^a = f_{it}\enspace.\] 
Buyers in $Z$ do not have edges to the sink. Hence, their surplus is fixed to 0 at every feasible flow. 

\begin{algorithm}[t]
\caption{MinFactor \label{alg:minFactor}}
\DontPrintSemicolon
\SetKwInOut{Input}{Input}\SetKwInOut{Output}{Output}
\Input{Prices $\p$, flow $\f$, set of buyers $\tB$, set of goods $\tG$, set of zero-surplus buyers $Z$}
\Output{Minimum price decrease consistent with the input configuration}
$E \ot$ Set of MBB edges at prices $\p$ between $\tB$ and $\tG$\;
$G_c \ot$ Set of goods from $\tG$ that are capped at $(\f,\p)$\;
$B_c \ot$ Set of buyers from $\tB$ that are capped at $(\f,\p)$\;
$\li \ot \min_{k\in G}p_k/u_{ik}, \forall i\in \tB$\;
Set up the following LP in flow variables $\g$ and $\gamma$:\;
\vspace{0.1cm} \nonl \ \ \ \ 
\begin{tabular}{|ll|}
\hline
$\min \gamma$ & \\
$\sum_{i\in\tB} g_{ij} = d_j$, & $\forall j\in G_c$\\
$\sum_{i\in\tB} g_{ij} = \gamma p_j$, & $\forall j\in \tG\setminus G_c$\\
$\sum_{j\in\tG} g_{ij} = \gamma c_i\li$, & $\forall i\in B_c \cap Z$\\
$\sum_{j\in\tG} g_{ij} \ge \gamma c_i\li$, & $\forall i\in B_c \setminus Z$\\
$\sum_{j\in\tG} g_{ij} = m_i$, & $\forall i\in (\tB\setminus B_c) \cap Z$\\
$\sum_{j\in\tG} g_{ij} \ge m_i$, & $\forall i\in (\tB\setminus B_c) \setminus Z$\\
$g_{ij} = 0$, & $\forall (i,j) \not\in E$\\
$g_{ij} \ge 0$, & $\forall i\in \tB, j\in \tG$\\
\hline
\end{tabular} \; 
\vspace{0.1cm}
\Return{Optimal solution $\gamma$ of above LP}
\end{algorithm}

\begin{algorithm}[t]
\caption{FeasibleFlow \label{alg:FeasibleFlow}}
\DontPrintSemicolon
\SetKwInOut{Input}{Input}\SetKwInOut{Output}{Output}
\Input{Perturbed market $\tCM$, prices $\p$, and set of zero-surplus buyers $Z$}
\Output{Feasible flow consistent with the input configuration}
$E \ot$ Set of MBB edges at prices $\p$\; 
$\li \ot \min_{k\in G}p_k/u_{ik}, \forall i\in B$\;
$B_c \ot$ Set of capped buyers at $\p$\;
$G_c \ot$ Set of capped goods at $\p$\;
Set up the following feasibility LP in flow variables $\f$:\;
\vspace{0.1cm}\nonl \ \ \ \ 
\begin{tabular}{|ll|}
\hline
$\sum_{i\in\tB} f_{ij} = d_j$, & $\forall j\in G_c$\\
$\sum_{i\in\tB} f_{ij} = p_j$, & $\forall j\in G\setminus G_c$\\
$\sum_{j\in\tG} f_{ij} = c_i\li$, & $\forall i\in B_c \cap Z$\\
$\sum_{j\in\tG} f_{ij} \ge c_i\li$, & $\forall i\in B_c \setminus Z$\\
$\sum_{j\in\tG} f_{ij} = m_i$, & $\forall i\in (\B\setminus B_c) \cap Z$\\
$\sum_{j\in\tG} f_{ij} \ge m_i$, & $\forall i\in (\B\setminus B_c) \setminus Z$\\
$f_{ij} = 0$, & $\forall (i,j) \not\in E$\\
$f_{ij} \ge 0$, & $\forall i\in \B, j\in \G$\\
\hline
\end{tabular} \\
\vspace{0.1cm}
\Return{Feasible solution $f$ of above LP}
\end{algorithm}

\paragraph{Algorithm and Analysis}
Algorithm \ref{alg:FPTAS} computes an exact equilibrium of $\tCM$. For convenience, it maintains a money flow $\f$. For goods with non-zero price, $\f$ is equivalent to an allocation $\x$. When the algorithm encounters a set of goods with price 0, the buyers interested in these goods must be capped, and the algorithm determines a suitable allocation for them by solving a system of linear equations.

The algorithm first calls a subroutine to compute a market equilibrium ignoring the utility caps of the buyers. Such an equilibrium exists because the market is money-clearing, can be computed in polynomial time~\cite{ColeDGJMVY17,BeiGHM17}, and consists of a pair $(\f, \p)$ of flow and prices such that the outflow of every good $j$ is $p_j^a$ and the inflow of every buyer $i$ is $m_i$. Given this equilibrium, the algorithm then initializes $Z$ to the set of buyers with surplus zero in $(\f, \p)$. 

The following {\bf Invariants} are maintained during the run of Algorithm~\ref{alg:FPTAS}:
\begin{itemize}
\setlength{\tabcolsep}{0cm}
\setlength{\itemsep}{0cm}
\setlength{\parsep}{0cm}
\item no price ever increases.
\item if $s(i) = 0$ for a buyer $i$, it remains 0. $Z$ is monotonically increasing. 
\item $N^-(\p,Z)$ allows a feasible flow, i.e., $s(i) \ge 0$ for every buyer $i \in B$ and $s(j) = 0$ for every good $j \in G$. 
\end{itemize}
More formally, the algorithm uses a descending-price approach. There is always a flow in $N^-(\p, Z)$ with outflow of a good $j\in G$ equal to $p_j^a$, in-flow into buyer $i\in B\cap Z$ equal to $m_i^a$, and in-flow into buyer $i\in B\setminus Z$ at least $m_i^a$. Descending prices imply that if a good (buyer) becomes uncapped (capped), it remains uncapped (capped). 

The algorithm ends when $Z=B$, i.e., all buyers have surplus zero, and hence $(\f,\p)$ is an equilibrium of $\tCM$. In the body of the outer while-loop, we first pick a buyer $k$ whose surplus is positive. The inner while loop ends when either the surplus of $k$ becomes zero or the minimum positive price of a good, say $\ell$, is at most $1/n\tU^n$, where $\tU$ is the maximum parameter value of the perturbed utilities. In the former case, the size of $Z$ increases (in line~\ref{line:Zinc}). In the latter case, we obtain a set $\tG$ of goods connected to $\ell$ through MBB edges and a set $\tB$ of buyers who have non-zero utility for some good in $\tG$. Since the price of each good in $\tG$ is so low and their surplus is zero, each buyer in $\tB$ must be capped. Hence we fix the allocation of buyers in $\tB$ according to the current money flow $\f$, and set the prices of all goods in $\tG$ and surplus of all buyers in $\tB$ to zero. Since the algorithm maintains goods with price 0 and buyers with surplus 0, the inner-while loop is executed at most $m + n$ times. 

In the body of inner while-loop, we construct the set $\tB$ of buyers and $\tG$ of goods that can reach buyer $k$ in the MBB residual graph (see Definition \ref{def:MBBgraphs}). We then continuously decrease the prices of all goods in $\tG$ by a common factor $\gamma$, starting from $\gamma=1$. This may destroy MBB edges connecting buyers in $\tB$ with goods in $G\setminus\tG$. However, by definition of $\tG$ there is no flow on such edges. For uncapped goods in $\tG$ (capped buyers in $\tB$), this decreases the active price (budget) by a factor of $\gamma$. We stop if one of the two events happens: 
$(1)$ a new MBB edge appears, and $(2)$ $\gamma$ is equal to the minimum factor possible that allows a feasible flow with the current MBB edges, i.e., in-flow into a good $j\in \tG$ is equal to $p_j^a$, out-flow of a buyer in $\tB\cap Z$ is equal to $m_i^a$, and out-flow of a buyer in $\tB\setminus Z$ is at least $m_i^a$. While the value of $\gamma$ for event (1) results from ratios of $\tu_{ij}$, the value of $\gamma$ for event (2) is found by Algorithm \ref{alg:minFactor} based on a linear program (LP). Observe that the flow $\f$ and $\gamma=1$ are a feasible initial solution for the LP. 

After the event happened, we update to a new feasible flow $\f$ using Algorithm \ref{alg:FeasibleFlow}. For prices $\p$ and the set $Z$ of zero-surplus buyers, the in-flow into a good $j\in G$ must be equal to $p_j^a$, out-flow of a buyer in $Z$ must be equal to $m_i^a$, and out-flow of a buyer in $B\setminus Z$ must be at least $m_i^a$. Algorithm \ref{alg:FeasibleFlow} sets up a feasibility LP to find such a feasible flow. Observe that this feasibility set is non-empty due to Event $2$. 

The following lemma is straightforward, we omit the proof. 

\begin{lemma}\label{lem:inv}
	The Invariants hold during the run of Algorithm \ref{alg:FPTAS}. 
\end{lemma}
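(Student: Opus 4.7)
The plan is to verify each of the three invariants in turn, in each case tracing through the places in Algorithm~\ref{alg:FPTAS} where the relevant quantities can change and arguing that the invariant is preserved. All three are straightforward to state, but Invariant~3 carries the real content, so I would structure the proof around it.

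\textbf{Invariant 1.} I would simply inspect every assignment to a price. Prices are modified in only two places: in the inner while-loop, where for $j\in \tG$ we set $p_j \leftarrow x p_j$ with $x$ decreasing continuously from $1$, so $p_j$ is non-increasing; and inside the \textbf{if}-block that handles the minimum-price case, where prices of a set of goods are set to $0$. Both operations can only decrease prices, and goods outside $\tG$ keep their prices unchanged because the inner loop was deliberately restricted to the set that can reach $k$ in the MBB residual graph.

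\textbf{Invariant 2.} The set $Z$ is only modified on line~\ref{line:Zinc} and inside the minimum-price branch, and both modifications are additions; hence $Z$ is monotone. For the first part, I would verify that once $i \in Z$, neither MinFactor nor FeasibleFlow can produce a flow with $\sum_j f_{ij} \ne m_i^a$ (the LPs enforce equality for $i \in Z$, cases split on capped vs.\ uncapped). Buyers added to $Z$ through line~\ref{line:Zinc} have $s(i)=0$ by the exit condition of the inner loop; buyers added through the minimum-price branch are explicitly assigned flow consistent with $s(i)=0$.

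\textbf{Invariant 3.} This is the main obstacle, and I would handle it by induction on the events that update $(\f,\p,Z)$. For the base case, the initial $(\f,\p)$ returned by the equilibrium computation ignoring utility caps clearly yields a feasible flow in $N^-(\p,Z)$: goods are exactly cleared (so $s(j)=0$), uncapped buyers spend their whole budget (so $s(i)=0$), and $Z$ is initialized precisely to the set of buyers already at zero surplus. For the inductive step I would analyze each event. At \textbf{Event~1} (a new MBB edge appears), the flow does not change, prices change continuously, and the only active-budget/active-price quantities that matter to feasibility are those of goods in $\tG$ and capped buyers in $\tB$, which were decreased by the common factor $x$; since the stopping rule for Event~2 was not yet triggered, the required flow in the decreased network still exists, so the call to FeasibleFlow returns a valid flow. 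At \textbf{Event~2}, the MinFactor LP is precisely the feasibility LP at the scaled prices, and $x$ is chosen as the smallest value where this LP is still feasible, so FeasibleFlow is guaranteed to succeed. Finally, when the algorithm zeroes the prices of a set $\tG$ and places the associated buyers $\tB$ into $Z$, I would argue that all buyers in $\tB$ were already capped (because their MBB good has price $\le 1/(n\tU^n)$, and capping is monotone in decreasing prices), so fixing their allocation via the current $\f$ leaves them with surplus $0$, while the goods in $\tG$ now have $p_j^a = 0 = \sum_i f_{ij}$ trivially, and feasibility for the remaining buyers and goods is inherited from the previous flow.

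The subtle point, and what I would spell out carefully, is that throughout the continuous price decrease the flow $\f$ is kept fixed, so feasibility of $N^-(\p,Z)$ is tested against a changing right-hand side; the role of MinFactor is precisely to stop before this right-hand side becomes infeasible. Once this is established, Invariants~1--3 all follow by routine case analysis over the updates in the algorithm.
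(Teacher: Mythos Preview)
The paper omits the proof of this lemma entirely, remarking only that it is straightforward; your proposal is a correct and appropriately structured verification of exactly the kind the paper leaves implicit. The case analysis you outline (price updates, LP constraints for $Z$, and the role of MinFactor as a feasibility safeguard for Invariant~3) matches the mechanics of Algorithm~\ref{alg:FPTAS}, so there is nothing to correct.
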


Next we bound the running time of Algorithm \ref{alg:FPTAS}. Event 1 provides a new MBB edge between a buyer in $B\setminus \tB$ and a good in $\tG$. Event 2 restricts the price decrease in $\gamma$ such that the Invariants are maintained. The event happens only if (1) at the value of $\gamma$ there is a subset of buyers $S\subseteq \tB$ such that $\sum_{i\in S} m_i^a = \sum_{j\in \Gamma(S)} p_j^a$, where $\Gamma(S)$ is the set of goods to which buyers in $S$ have MBB edges, and (2) further decrease of prices would make the total active budget of buyers in $S$ more than the total active prices of $\Gamma(S)$. This condition would violate the invariant that $N^-(\p, Z)$ has a feasible flow where the surplus of each good is zero. 

If the subset $S$ is equal to $\tB$ or $S$ contains buyer $k$, then the surplus of $k$ in every feasible flow is zero at such a minimum $\gamma$, and hence the inner-while loop ends. Otherwise, the MBB edges between buyers in $B\setminus S$ and goods in $\Gamma(S)$ will become non-MBB in the next iteration. So in each event of the inner-while loop, either a new MBB edge evolves or an existing MBB edge vanishes. Next, we show that for a given buyer $k$, the total number of iterations of the inner-while loop is polynomially bounded. For this, we first show that price of a good strictly decreases during each iteration of inner-while loop. 

\begin{lemma}\label{lem:price}
 	In each iteration of inner-while loop, the MBB ratio of buyer $k$ strictly increases. 
\end{lemma}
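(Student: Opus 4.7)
My plan is to reduce the lemma to two subclaims: (a) the set $\tG$ contains at least one MBB good of $k$, and (b) the scaling factor $x$ produced by the iteration is strictly less than $1$. Together (a) and (b) imply the lemma, since the new MBB ratio $\alpha_k^{\text{new}} = \max_{j} \tu_{kj}/p_j^{\text{new}}$ is at least $\tu_{kj^\ast}/(x p_{j^\ast}) = \alpha_k/x > \alpha_k$, where $j^\ast\in\tG$ is the MBB good from (a). Goods $j \notin \tG$ have unchanged prices, so they contribute at most the old $\alpha_k$ to the max; hence the max strictly grows.

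\textbf{Subclaim (a).} Because the inner while-loop runs only when $s(k)>0$, we have $\sum_{j\in G} f_{kj}=m_k^a+s(k)>0$, so there exists a good $j^\ast$ with $f_{kj^\ast}>0$. By thriftiness of $\f$ (maintained as an invariant), $\{k,j^\ast\}$ is an MBB edge, and therefore $(j^\ast,k)$ is a backward arc in the MBB residual graph. Hence $j^\ast$ reaches $k$ in a single step and belongs to $\tG$.

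\textbf{Subclaim (b).} The inner loop terminates the current price-descent at the first of Event~1 or Event~2. Event~1 requires the appearance of a \emph{new} MBB edge, which cannot occur at $x=1$ since prices have not yet changed. For Event~2 it suffices to show that $\text{MinFactor}(\p',\f,\tB,\tG,Z) < 1$, i.e.\ the LP in Algorithm~\ref{alg:minFactor} is feasible for some $x<1$. At $x=1$ the assignment $\g=\f$ is a feasible solution with every equality constraint tight. The crucial point is that $k$ has a $\geq$ constraint (because $k\in\tB\setminus Z$) and $\sum_j f_{kj}=m_k^a+s(k)>m_k^a$, so the constraint on $k$ has strict slack $s(k)>0$. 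I would then construct a perturbation $\g=\f-\delta\bDelta$ that remains feasible at $x=1-\delta'$ for some $\delta,\delta'>0$: using the fact that by definition every good $j\in\tG$ has a directed MBB-residual path to $k$, one can route a (small) supply reduction at each uncapped $j\in\tG\setminus G_c$ backward along such a path into $k$, where the slack $s(k)$ absorbs it. Because capped goods in $G_c$ and buyers in $Z$ have equality constraints that also scale linearly with $x$, the bookkeeping balances, and feasibility holds on a right-neighborhood of $x=1$. Consequently the LP optimum is $<1$.

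\textbf{Main obstacle.} The technical hurdle is the feasibility-propagation argument in subclaim~(b): showing that the single point of slack $s(k)>0$ can be used globally to support a uniform reduction $x = 1-\delta'$ throughout the subnetwork on $\tB\cup\tG$. This is a Hall/max-flow style argument, and its validity hinges entirely on the structural property used to define $\tG$ and $\tB$ (every good in $\tG$ and every buyer in $\tB$ has a directed MBB-residual route to $k$), together with the fact that no flow leaves $\tB$ or enters $\tG$ from outside (as argued from thriftiness). Once this is in hand, subclaim (a) is immediate and the lemma follows.
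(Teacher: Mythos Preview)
Your decomposition into (a) and (b) is sound, and subclaim (a) is immediate. The difficulty is all in (b), and here your argument has a concrete error and a genuine gap.

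The error: you write that ``capped goods in $G_c$ and buyers in $Z$ have equality constraints that also scale linearly with $x$''. This is false. For a capped good $j\in G_c$ the constraint in Algorithm~\ref{alg:minFactor} is $\sum_i g_{ij}=d_j$, independent of $x$; likewise an \emph{uncapped} buyer $i\in Z$ has $\sum_j g_{ij}=m_i$, also independent of $x$. So the bookkeeping does not balance in the way you describe. The perturbation idea can in principle be salvaged---since every node of $\tG\cup\tB$ has a directed residual path to $k$, one can try to route all the supply/demand changes induced by lowering $x$ into the slack $s(k)$---but precisely because capped goods and uncapped $Z$-buyers must keep their totals \emph{fixed} while neighbouring quantities move, this requires a careful case analysis that you have not carried out. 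You correctly flag this as the main obstacle; as written, (b) is not proved.

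The paper avoids building a perturbed flow altogether and argues more directly. Event~1 forces $x<1$ trivially. For Event~2 the paper uses the structural characterization stated just before the lemma: the LP optimum is attained exactly when some Hall-type subset $S\subseteq\tB$ becomes tight, i.e.\ $\sum_{i\in S} m_i^a=\sum_{j\in\Gamma(S)} p_j^a$. It then checks in two short cases that this equality was strict at the start of the iteration: if $k\in S$, tightness forces $s(k)=0$, contradicting $s(k)>0$ at the start; if $k\notin S$, then by the residual-reachability construction of $\tB,\tG$ there was positive flow from $\tB\setminus S$ into $\Gamma(S)$ at the start, again giving strict inequality. Either way prices strictly decreased, hence $\alpha_k$ strictly increased. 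This tight-set argument is both shorter and sidesteps the flow-perturbation bookkeeping you would otherwise have to do.
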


\begin{proof}
Each iteration of the inner while-loop ends with one of the two events. Clearly, Event 1 can occur only when the prices of goods in $\tG$ strictly decrease, and this implies that the MBB of buyer $k$ strictly increases. In case of Event 2, as argued above, there is a subset $S\subseteq \tB$ of buyers such that $\sum_{i\in S} m_i^a = \sum_{j\in \Gamma(S)} p_j^a$, where $\Gamma(S)$ is the set of goods to which $S$ have MBB edges. 

If $k\in S$, then $s(k) = 0$ in this iteration. This implies that $\sum_{i\in S} m_i^a < \sum_{j\in \Gamma(S)}p_j^a$ at the beginning of this iteration, and since equality emerges, prices must have strictly decreased and the MBB of $k$ strictly increased. 

If $k\not\in S$, then $S \neq \tB$ and flow on all MBB edges from $\tB\setminus S$ to $\Gamma(S)$ has become zero. Note that there is at least one such edge due to the construction of $\tB$ and $\tG$. Using the fact that there was a non-zero flow on these edges and $\sum_{i\in S} m_i^a < \sum_{j\in \Gamma(S)}p_j^a$ at the beginning of this iteration, we conclude that prices of goods must have strictly decreased and the MBB of $k$ strictly increased.
\end{proof}

Next we show that the price of a good substantially decreases after a certain number of iterations. For this, we partition the iterations into \emph{phases}, where every phase has $n^2$ iterations of the inner while-loop. 

\begin{lemma}\label{lem:key}
Let $\p$ and $\p'$ be the prices at the beginning and end of a phase, respectively. Then $p'_j \le p_j, \forall j\in G$, and there exists a good $\ell$ such that $p'_{\ell} \le p_{\ell}/(1+\eps)$. 
\end{lemma}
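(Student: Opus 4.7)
}

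The first part, $p'_j \le p_j$ for every $j$, is immediate from the invariant stated right before Lemma \ref{lem:inv}: prices never increase during Algorithm \ref{alg:FPTAS}. So throughout, I focus on exhibiting a good $\ell$ with $p'_\ell \le p_\ell/(1+\eps)$.

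My plan is to route everything through the MBB ratio $\alpha_k$ of the pivot buyer $k$ that was fixed at the start of the current outer iteration. Lemma \ref{lem:price} guarantees that $\alpha_k$ strictly increases in every iteration of the inner while-loop, so over a phase of $n^2$ iterations it takes $n^2$ distinct, strictly increasing values. The key claim I aim to prove is
\[
  \alpha_k^{\,\text{end}} \;\ge\; (1+\eps)\,\alpha_k^{\,\text{start}}.
\]
Once this is established, pick any good $\ell$ that is MBB for $k$ at the end of the phase, so $\alpha_k^{\,\text{end}} = \tu_{k\ell}/p'_\ell$. Since at the start of the phase $\alpha_k^{\,\text{start}} = \max_j \tu_{kj}/p_j \ge \tu_{k\ell}/p_\ell$, dividing the two identities gives $p'_\ell/p_\ell \le \alpha_k^{\,\text{start}}/\alpha_k^{\,\text{end}} \le 1/(1+\eps)$, which is exactly the good promised by the lemma.

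To establish the $(1+\eps)$-lower bound on $\alpha_k^{\,\text{end}}/\alpha_k^{\,\text{start}}$, I will exploit the discrete structure of the perturbed market: every $\tu_{ij}$ is an integer power of $(1+\eps)$. Under the non-degeneracy assumption the MBB graph is a forest, so for any two goods $j_1, j_2$ lying in a common MBB component there is a unique MBB path between them; along that path every edge contributes a factor $\tu_{ij_a}/\tu_{ij_b}$ to the price ratio, which is a power of $(1+\eps)$. Hence $p_{j_1}/p_{j_2}$ is itself a power of $(1+\eps)$, and in particular for every good $j$ in $k$'s MBB component, the product $\alpha_k \cdot p_j$ equals a power of $(1+\eps)$. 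In addition, while we continuously scale prices in $\tG$ by a factor $x \le 1$ within a single iteration, $\alpha_k$ scales by $1/x$ and therefore $\alpha_k \cdot p_j$ is invariant during the scaling and can change only at events.

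I then argue via a pigeonhole over the $n^2$ events in the phase (each iteration adds or removes exactly one MBB edge): because the number of choices for "MBB good of $k$ together with the MBB tree local to $k$" is bounded, two iterations $t_1 < t_2$ must coincide in this local configuration, and in particular share the same MBB good $j$ for $k$. At both iterations $\alpha_k^{(t_s)} \cdot p_j^{(t_s)} = \tu_{kj}$, so $p_j^{(t_1)}/p_j^{(t_2)} = \alpha_k^{(t_2)}/\alpha_k^{(t_1)}$; combined with the discreteness observation and the strict monotonicity from Lemma \ref{lem:price}, the only way the ratio can fail to reach $1+\eps$ over all $n^2$ iterations is a contradiction with the fact that $\alpha_k \cdot p_j$ jumps between distinct powers of $(1+\eps)$ whenever the MBB good of $k$ actually changes.

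The main obstacle is precisely this pigeonhole/discreteness step: a priori, the per-iteration increase in $\alpha_k$ can be arbitrarily small, because the event-triggering scaling factor $x_t$ is a real number with no a priori lower bound on $1 - x_t$. The gain of a full factor $(1+\eps)$ must therefore be extracted from the combinatorial evolution of the MBB graph together with the fact that every ``resting'' value of $\alpha_k \cdot p_j$ for $j$ in $k$'s component lies in the discrete set of powers of $(1+\eps)$. Getting this accounting tight for a phase length of exactly $n^2$ iterations (as opposed to a looser $\mathrm{poly}(n,m)$) is the technical heart of the proof.
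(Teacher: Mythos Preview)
Your overall strategy matches the paper's: reduce to showing $\alpha_k^{\text{end}} \ge (1+\eps)\,\alpha_k^{\text{start}}$, then pick an MBB good $\ell$ of $k$ at the end of the phase and conclude $p'_\ell \le p_\ell/(1+\eps)$. You also correctly identify the key structural fact, namely that for any good $j$ in $k$'s MBB component the quantity $\alpha_k\, p_j$ is an integer power of $(1+\eps)$.

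The gap is in your pigeonhole step. You pigeonhole on ``the MBB good of $k$ (plus local tree) recurs at times $t_1 < t_2$'' and then write $\alpha_k^{(t_2)}/\alpha_k^{(t_1)} = p_j^{(t_1)}/p_j^{(t_2)}$. That identity is true but tautological: while $j$ remains in $k$'s component its price is being continuously scaled, so $p_j^{(t_1)}/p_j^{(t_2)}$ is an arbitrary real number greater than $1$ and need not be at least $1+\eps$. Your remark that ``$\alpha_k\cdot p_j$ jumps between distinct powers of $(1+\eps)$ whenever the MBB good of $k$ changes'' is not correct either: when $j$ is $k$'s MBB good, $\alpha_k p_j = \tu_{kj}$ is a \emph{fixed} power of $(1+\eps)$, and a change of MBB good does not by itself force $\alpha_k$ to cross a discrete threshold.

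The paper's argument supplies the missing idea. Instead of looking at $k$'s MBB good, it finds a good $j$ that \emph{leaves} $k$'s MBB component and later \emph{rejoins} it. The crucial observation is that while $j$ is disconnected from $k$ it lies outside $\tG$, so its price is \emph{frozen}. Hence at the moment of leaving $\alpha_k\, p_j = (1+\eps)^{c_1}$ and at the moment of rejoining $\alpha_k'\, p_j = (1+\eps)^{c_2}$ with the \emph{same} $p_j$; since $\alpha_k' > \alpha_k$ strictly (Lemma~\ref{lem:price}), one gets $c_2 > c_1$ and therefore $\alpha_k' \ge (1+\eps)\,\alpha_k$. The pigeonhole is then over leave/rejoin events of goods rather than over recurring local MBB configurations, which is what makes the $n^2$ phase length sufficient. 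Your plan would be repaired by replacing the ``same MBB good recurs'' pigeonhole with this ``some good leaves and rejoins, with price frozen in between'' argument.
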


\begin{proof}
Due to Lemma \ref{lem:inv}, we have $p'_j \le p_j, \forall j\in G$. For the second part, note that $\tB$ always contains buyer $k$ during an entire run of inner while-loop. Since prices monotonically decrease, the MBB $\ak$ of buyer $k$ monotonically increases. Further, if there is a MBB path from buyer $k$ to a good $j$, then we have, for some $(i_1, j_1),\ldots,(i_a,j_a),(i'_1,j'_1),\ldots,(i'_b,j'_b)$ and an integer $c$
\[\ak p_j = \frac{\prod \tu_{i_1j_1}\dots\tu_{i_aj_a}}{\prod \tu_{i'_1j'_1}\dots\tu_{i'_bj'_b}} = (1+\eps)^c \enspace.\]
In each iteration, either a new MBB edge evolves or an existing MBB edge vanishes. When a new MBB edge evolves, a new MBB path from buyer $k$ to a good $j$ gets established. When an existing MBB edge vanishes, then an old MBB path from $k$ to a good $j$ gets destroyed. Further, if there is an MBB path from a good $j$ to buyer $k$, then price of good $j$ monotonically decreases. If there is no MBB path from a good $j$ to buyer $k$, then price of good $j$ does not decrease. After $O(n)$ events, 
there has to be a good $j$ such that initially there is an MBB path from $k$ to $j$, then no MBB path between them for some iterations, then again an MBB path between them. Let $p_j$ be the price of good $j$ at the time when there is no path between $k$ and $j$, and let $\ak$ and $\ak'$ be the MBB for buyer $k$ at the time the MBB path between $j$ and $k$ was broken and when it was later again established, respectively. Since $p_j$ does not change unless there is a path between $k$ and $j$, we have
\[ \ak p_j = (1+\eps)^{c_1} \text{ and } \ak'p_j = (1+\eps)^{c_2}, \text{ for some integers } c_1 \text{ and } c_2.\]
Since $\ak' > \ak$ due to Lemma \ref{lem:price}, we have $\ak' \ge \ak (1+\eps)$. Let good $l$ give the MBB to buyer $k$ at $\ak'$, and let $p_l$ and $p_l'$ be the prices of good $l$ when the MBB path between $j$ and $k$ was broken and when it was later established. This implies  
\[u_{il}/p'_l = \ak' \ge \ak(1+\eps) \ge (1+\eps)u_{il}/p_l,\]
and $p'_l \le p_l/ (1+\eps)$. 
\end{proof}

\begin{lemma}\label{lem:niter}
The number of iterations of inner while-loop of Algorithm~\ref{alg:FPTAS} is $O(n^3\log_{1+\eps}(n\tU^n\sum_i m_i))$. 
\end{lemma}

\begin{proof}
From Lemma \ref{lem:key}, in each phase the price of a good decreases by a factor of $(1+\eps)$. The number of iterations in a phase is $O(n^2)$. The starting price is at most $\sum_i m_i$. If a price becomes at most $1/n\tU^n$, the inner while-loop ends for a particular buyer $k$. Hence, the number of phases is at most $n\log_{1+\eps}{n\tU^n\sum_i m_i}$, and the number of iterations of the inner while-loop is at most $O(n^3\log_{1+\eps}{n\tU^n\sum_i m_i})$. 
\end{proof}

\begin{theorem}
For every $\eps > 0$, Algorithm \ref{alg:FPTAS} computes a thrifty and modest equilibrium in the perturbed market $\tCM$ in time polynomial in $n$, $U$ and $1/\eps$.
\end{theorem}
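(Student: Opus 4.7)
My plan is to separate the claim into correctness of the returned pair $(\vecx,\vecp)$ and a bound on the running time.

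For correctness, I would first verify by induction over iterations that the three Invariants listed before Lemma~\ref{lem:inv} are preserved. The initial pair $(\f,\p)$ comes from an equilibrium of the utility-cap-free market, in which each good has zero surplus and each buyer's money is fully placed on MBB edges, so the Invariants hold at entry. Inductively, each iteration of the inner while-loop only decreases prices; the feasibility LP of Algorithm~\ref{alg:minFactor} is the exact condition for $N^-(\p,Z)$ to continue admitting a feasible flow through the price decrease, and Algorithm~\ref{alg:FeasibleFlow} then exhibits such a flow, which witnesses surplus $0$ for every good and $s(i)=0$ for every $i\in Z$. Upon exit of the outer while-loop we have $Z=B$, so every buyer has zero surplus; converting the money flow $\f$ to an allocation $\vecx$ then gives a thrifty (flow only on MBB edges by construction) and modest (capped buyers reach $c_i$, uncapped buyers spend their full budget) demand pattern, a modest supply pattern ($x_j=e_j^a$ for $p_j>0$), and Walras' law, since any good with $x_j<e_j$ must have $p_j=0$ by the way the low-price branch zeroes prices.

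The main obstacle, and the most delicate step, is justifying the low-price branch entered when $\min\{p_j:p_j>0\}\le 1/(n\tU^n)$. Here I would argue that, because the perturbed utilities are integer powers of $1+\eps$, nonzero equilibrium prices in $\tCM$ are rational numbers whose numerators and denominators are subdeterminants in the $\tu_{ij}$, hence bounded below by $1/(n\tU^n)$ (a Cramer's-rule style argument). Consequently, once the minimum positive price in the MBB component $\tG$ reachable from $k$ drops below this threshold, every good in $\tG$ must in fact have equilibrium price $0$, and every buyer in $\tB$ must be capped. I would then show that the current flow $\f$ already satisfies $\sum_j f_{ij}=m_i^a=c_i\lambda_i$ for $i\in\tB$, so fixing the allocation of $\tB$ from $\f$ and setting $p_j\ot 0$ for $j\in\tG$ preserves thriftiness and modestness while trivially satisfying Walras on $\tG$. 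This legitimises freezing $\tB,\tG$ and continuing on the remainder.

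For the running time, the outer while-loop performs at most $n+m$ iterations: each iteration either strictly enlarges $Z$ (when the inner loop exits by $s(k)=0$) or freezes a nonempty set of goods at price $0$ (in the low-price branch), and neither action is ever undone by the Invariants. Lemma~\ref{lem:niter} bounds the inner while-loop by $O(n^3\log_{1+\eps}(n\tU^n\sum_i m_i))$; using $\tU\le(1+\eps)U$ and $\log_{1+\eps}(x)=O(\log x/\eps)$, this is polynomial in $n$, $\log U$, and $1/\eps$. Each iteration performs a constant number of max-flow and LP computations on a graph of size $O(n+m)$, which is polynomial, and the initial cap-free equilibrium computation is polynomial by \cite{ColeDGJMVY17,BeiGHM17}. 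Multiplying out yields running time polynomial in $n$, $U$ and $1/\eps$.
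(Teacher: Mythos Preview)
Your overall skeleton—establish the Invariants (this is exactly Lemma~\ref{lem:inv}), deduce correctness once $Z=B$, bound the outer loop by monotone growth of $Z$ and of the zero-price set, and bound the inner loop via Lemma~\ref{lem:niter}—coincides with the paper's proof, and the running-time portion is fine.

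The genuine gap is your handling of the low-price branch. You justify it by asserting that nonzero equilibrium prices in $\tCM$ are bounded below by $1/(n\tU^n)$ via a Cramer's-rule/subdeterminant argument, and then conclude that once the algorithm's minimum positive price falls below this threshold the equilibrium price on $\tG$ must be $0$. This does not go through. First, the $\tu_{ij}=(1+\eps)^{k_{ij}}$ are not integers and in general share no common denominator, so the standard subdeterminant lower bound does not yield anything like $1/(n\tU^n)$. Second, even granting such a bound on \emph{equilibrium} prices, the prices the algorithm is holding when the branch is entered are \emph{not} equilibrium prices (buyer $k$ still has $s(k)>0$), so their dropping below a putative equilibrium lower bound implies nothing by itself; you would additionally need, and do not supply, an argument that the algorithm's current prices always dominate the prices of some equilibrium. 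Your follow-up claim that the current flow already satisfies $\sum_j f_{ij}=m_i^a$ for every $i\in\tB$ is likewise unsupported—$k$ itself can lie in $\tB$ with $s(k)>0$.

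The paper handles this branch by a direct combinatorial argument about the \emph{current} state, with no reference to hypothetical equilibrium prices: along any MBB path the ratio of two goods' prices is a product of at most $n$ factors $\tu_{ij}/\tu_{ij'}\le\tU$, so every good MBB-connected to $\ell$ has price at most $\tU^n\cdot p_\ell\le 1/n$. Hence every buyer $i$ with $\tu_{ij}>0$ for some such $j$ has a large MBB ratio at the current prices and, using $m_i\ge 1$, is capped; this is what legitimises fixing the allocation on $\tB$ from $\f$ and then setting $p_j\ot 0$ on $\tG$. You should replace the Cramer's-rule paragraph with this argument.
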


\begin{proof}
From Lemma~\ref{lem:inv}, all invariants are maintained throughout the algorithm. Hence, the surplus of each good is $0$, the surplus of each buyer is non-negative, and prices decrease monotonically. The algorithm ends when surplus of all buyers is zero. During the algorithm, when the price of a good, say $\ell$, becomes at most $1/n\tU^n$, where $\tU$ is the largest perturbed utility parameter, then the price of all the goods connected to $\ell$ by MBB edges is at most $1/n$. Since the minimum budget of a buyer is at least $1$, all buyers buying these goods have to be capped. That implies that there is an equilibrium where prices of these goods are zero. 

Lemma~\ref{lem:niter} shows that there are at most $O(n^3\log_{1+\eps}{n\tU^n\sum_i m_i})$ iterations, which can be upper bounded by $O(\nicefrac{n^4}{\eps} \log (nU))$. Each iteration can be implemented in polynomial time. 
\end{proof}

\paragraph{Approximate Equilibrium}
Our algorithm computes an exact equilibrium in $\tCM$ in polynomial time. We show that such an exact equilibrium of $\tCM$ represents an $\eps$-approximate equilibrium of $\CM$, thereby obtaining an FPTAS for the problem. Based on $\eps$, let us define the precise notion of $\eps$-approximate market equilibrium, which is based on a notion of $\eps$-approximate demand bundle.

\begin{definition}[Approximate Demand]
For a vector $\vecp$ of prices, consider a demand bundle $\vecx_i^*$ for buyer $i$. An allocation $\vecx_i$ for buyer $i$ is called an \emph{$\eps$-approximate (thrifty and modest) demand bundle} if (1) $\sum_j u_{ij} x_{ij} \le c_i$, (2) $\sum_j x_{ij} p_j \le m_i^a$, and (3) $u_i(\vecx_i) \ge (1-\eps) u_i(\vecx_i^*)$.
\end{definition}

An $\eps$-approximate (thrifty and modest) equilibrium differs from an exact equilibrium only by a relaxation of condition (4) to \emph{$\eps$-approximate} demand (c.f.\ Definition~\ref{def:equilibrium})

\begin{definition}[Approximate Equilibrium]
An \emph{$\eps$-approximate (thrifty and modest) equilibrium} is a pair $(\vecx,\vecp)$, where $\vecx$ is an allocation and $\vecp$ a vector of \emph{prices} such that conditions (1)-(3), (5) from Definition~\ref{def:equilibrium} hold, and (4) $\vecx_i$ is an $\eps$-approximate demand bundle for every $i \in B$.
\end{definition}
Note that our definition is rather demanding, since there are many further relaxations (e.g., we require exact market clearing, modest supplies, exact earning and utility caps, etc), some of which are found in other notions of approximate equilibrium in the literature.

\begin{lemma}
\label{lem:red}
An exact equilibrium $(\x,\p)$ of $\tCM$ is an $\eps$-approximate equilibrium of $\CM$. 
\end{lemma}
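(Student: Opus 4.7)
The plan is to verify the five conditions of the $\eps$-approximate equilibrium definition for $(\x,\p)$ in $\CM$. The first observation is that conditions (1), (2), (3), (5) of Definition~\ref{def:equilibrium} concern only prices, modest supplies, no-overallocation, and Walras' law; none of them mention buyer utilities. Since $\tCM$ and $\CM$ share the same budgets $m_i$, supply of 1, utility caps $c_i$ and earning caps $d_j$, these four properties carry over verbatim from the exact equilibrium of $\tCM$ to $\CM$. The only work is to show that each $\vecx_i$ is an $\eps$-approximate thrifty and modest demand bundle under the original utilities.

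Let $\tilde\alpha_i = \max_j \tu_{ij}/p_j$ and $\alpha_i = \max_j u_{ij}/p_j$ denote the MBB ratios at prices $\p$ in $\tCM$ and $\CM$ respectively. From~\eqref{eqn:approxU} we have $\alpha_i \le \tilde\alpha_i \le (1+\eps)\alpha_i$, and hence $\tilde m_i^a = \min(m_i, c_i/\tilde\alpha_i) \le m_i^a$. Condition (1) of approximate demand (modesty in $\CM$) is immediate: since $\vecx_i$ is modest in $\tCM$, $\sum_j u_{ij} x_{ij} \le \sum_j \tu_{ij} x_{ij} \le c_i$. Condition (2) (budget) is also immediate: since $\vecx_i$ is a thrifty and modest demand bundle in $\tCM$, it spends exactly $\tilde m_i^a \le m_i^a$.

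The heart of the proof is condition (3), bounding $u_i(\vecx_i)$ from below by $(1-\eps) u_i(\vecx_i^*)$, where $u_i(\vecx_i^*) = \min(c_i, m_i \alpha_i)$ is the optimum under $u$ at prices $\p$. I would split into two cases. If $i$ is capped in $\tCM$, then $\sum_j \tu_{ij} x_{ij} = c_i$, so combined with condition (1) we get $u_i(\vecx_i) = \sum_j u_{ij} x_{ij} \ge \sum_j \tu_{ij} x_{ij}/(1+\eps) = c_i/(1+\eps) \ge (1-\eps) c_i \ge (1-\eps) u_i(\vecx_i^*)$, using $u_i(\vecx_i^*) \le c_i$ and $1/(1+\eps) \ge 1-\eps$. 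If $i$ is uncapped in $\tCM$, then $\tilde\alpha_i m_i < c_i$, so thriftiness and full spending give $\sum_j \tu_{ij} x_{ij} = \tilde\alpha_i m_i$; this in turn forces $\alpha_i m_i \le \tilde\alpha_i m_i < c_i$, so $u_i(\vecx_i^*) = \alpha_i m_i$, and
\[
u_i(\vecx_i) \ge \sum_j u_{ij} x_{ij} \ge \tilde\alpha_i m_i/(1+\eps) \ge \alpha_i m_i/(1+\eps) \ge (1-\eps)\, u_i(\vecx_i^*).
\]

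There is no deep obstacle here; the only thing to be careful about is distinguishing the capped/uncapped status of $i$ in $\tCM$ (not $\CM$) when plugging in thriftiness, and consistently using the one-sided bounds $u_{ij} \le \tu_{ij}$ and $\tu_{ij} < (1+\eps) u_{ij}$ in the right places. The argument is tight in the sense that each use of the perturbation loses only one factor of $(1+\eps)$, so a single $(1-\eps)$ slack on the utility side suffices to absorb the rounding.
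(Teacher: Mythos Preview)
Your proof is correct and follows essentially the same approach as the paper: carry over the price/supply conditions unchanged, and for the demand condition use the one-sided bounds $u_{ij}\le\tu_{ij}<(1+\eps)u_{ij}$ to show modesty, budget feasibility, and the $(1-\eps)$ utility guarantee. Your explicit capped/uncapped case split is somewhat more detailed than the paper's terse argument, but the underlying reasoning is the same; one tiny cosmetic point is that in the uncapped case you should write $u_i(\vecx_i)=\sum_j u_{ij}x_{ij}$ (equality, from the modesty you already established) rather than $\ge$.
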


\begin{proof}
Let $\ai$ and $\tai$ be the MBB of buyer $i$ at prices $\p$ w.r.t.\ utility $u_i$ and perturbed utility $\tu_i$, respectively. Formally, $\ai = \max_{k\in G} \nfrac{u_{ik}}{p_k}$ and $\tai = \max_{k\in G} \nfrac{\tu_{ik}}{p_k}$. At prices $\p$, let $u_i^*$ and $\tu_i^*$ be the maximum utility buyer $i$ can obtain in $\CM$ and $\tCM$, respectively. Clearly $u_i^* = \min\{c_i, m_i\ai\}$, and $\tu_i^* = \min\{c_i, m_i\tai\}$.

Since $(\x, \p)$ is an exact equilibrium of $\tCM$, the MBB condition implies that $x_{ij} > 0$ only if $\tu_{ij}/p_j = \max_{k\in G} \tu_{ik}/p_k$. Further, using~\eqref{eqn:approxU} we get $\tai (1+\eps) > \ai, \forall i$. This implies that $\tu_i^* > u_i^*/(1+\eps) \ge u_i^* (1-\eps)$. Further, since $\tu_{ij} \ge u_{ij}$, we have $\sum_j x_{ij}p_j = m_i^a, \forall i$. In addition, since $(\x,\p)$ is an exact equilibrium for $\tCM$, we obtain $\sum_i x_{ij} = \min\{1, p_j^a/p_j\}, \forall \{j\in G\ |\ p_j > 0\}$ and $\sum_i x_{ij} \le 1, \forall \{j\in G\ |\ p_j = 0\}$. This proves the claim. 
\end{proof}

\begin{corollary}
 Algorithm \ref{alg:FPTAS} is an FPTAS for computing an $\eps$-approximate equilibrium for money-clearing markets with earning and utility limits.
\end{corollary}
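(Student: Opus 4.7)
The plan is to combine the two preceding results directly: the theorem establishing that Algorithm~\ref{alg:FPTAS} computes an exact thrifty and modest equilibrium of the perturbed market $\tCM$ in time polynomial in $n$, $U$, and $1/\eps$, and Lemma~\ref{lem:red} which states that such an equilibrium of $\tCM$ is an $\eps$-approximate thrifty and modest equilibrium of the original market $\CM$. Since the corollary is essentially the conjunction of these two facts, no new conceptual step is required.

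Concretely, I would proceed as follows. First, observe that the construction of $\tCM$ from $\CM$ takes polynomial time in $n$, $m$, and $\log U$, since each $\tu_{ij}$ is obtained by rounding $u_{ij}$ up to the next power of $(1+\eps)$ and therefore the exponent $k_{ij}$ is bounded by $O(\log_{1+\eps} U) = O(\eps^{-1} \log U)$. Moreover, $\tU = \max_{ij} \tu_{ij} \le (1+\eps) U$, so $\log \tU = O(\log U + \eps)$. Thus the perturbed market has encoding size polynomial in the encoding size of $\CM$ and in $1/\eps$. Second, invoke the earlier theorem to conclude that Algorithm~\ref{alg:FPTAS} produces an exact equilibrium $(\x,\p)$ of $\tCM$ in time polynomial in $n$, $\tU$, and $1/\eps$, hence polynomial in the input size of $\CM$ and in $1/\eps$. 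Third, invoke Lemma~\ref{lem:red} to translate this exact equilibrium of $\tCM$ into an $\eps$-approximate equilibrium of $\CM$.

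No step here is a real obstacle; the only thing to be careful about is that the running time bound from the theorem, which is stated in terms of $\tU$, yields a polynomial bound in the original $U$ once the relation $\tU \le (1+\eps) U$ is used. Putting the pieces together directly gives the claimed FPTAS.
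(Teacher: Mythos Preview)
Your proposal is correct and follows exactly the route the paper intends: the corollary is stated without proof immediately after Lemma~\ref{lem:red}, and is meant to be the direct combination of that lemma with the preceding theorem on the running time of Algorithm~\ref{alg:FPTAS}.

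One small caveat worth tightening: you write that the theorem gives running time ``polynomial in $n$, $\tU$, and $1/\eps$'' and then conclude this is polynomial in the input size of $\CM$. Taken literally, polynomial in the \emph{magnitude} $\tU$ would only be pseudo-polynomial. What actually justifies the FPTAS claim is the explicit bound $O(\nicefrac{n^4}{\eps}\log(nU))$ on the number of iterations established in the proof of that theorem (via Lemma~\ref{lem:niter}), which is polynomial in $\log U$ rather than $U$. The paper's theorem statement is simply loose on this point; your argument should cite the sharper bound rather than the theorem's phrasing to make the FPTAS conclusion airtight.
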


\subsection{Membership in PLS}
\label{sec:PLS}

In this section we show that the problem of computing an exact equilibrium in a money-clearing market $\CM$ is in the class \classPLS. We first design Algorithm \ref{alg:PLS}, a finite-time descending-price algorithm. It again relies on the \emph{reverse flow network} $N^-(\p) = N^-(\p, \emptyset)$ defined in the previous section where $Z$ is an empty set.
The algorithm starts by computing a market equilibrium ignoring the utility caps of the buyers. This equilibrium exists since the market is money-clearing. It is a pair $(\f,\p)$ of flow and prices for which the outflow of good $j$ is equal to $p_j^a$ and the inflow into buyer node $i$ is $m_i$. 
\medskip

\noindent{We will maintain the following \textbf{Invariants} during the while-loop in Algorithm~\ref{alg:PLS}:\\
$-$ No price ever increases.\\
$-$ $N^-(\p)$ allows a feasible flow.
\medskip

In other words, our algorithm is descending-price, and there is always a flow in $N^-(\p)$ with out-flow of good $j \in G$ equal to $p_j^a$ and in-flow into buyer $i \in B$ at least $m_i^a$. The first invariant implies that once a good becomes uncapped, it remains uncapped, and once a buyer becomes capped, it remains capped.

\begin{algorithm}[t]
\caption{\label{alg:PLS} Finite-Time Algorithm for Money-Clearing Markets 
}
\DontPrintSemicolon
\SetKwInOut{Input}{Input}\SetKwInOut{Output}{Output}
\Input{Market $\CM$ given by budgets $m_i$, utility caps $c_i$, earning caps $d_j$, utilities $u_{ij}, \forall i \in B, j \in G$;}
\Output{Equilibrium prices $\p$ and allocation $\x$}
$(\f, \p) \ot$ equilibrium of $\CM$ when ignoring all utility caps \; 
\While{$\sum_i s(i) > 0$}{
	$\f \ot$ balanced flow in $N^-(\p)$  \tcp*{surpluses change similarly} 
	$\delta \ot \max_i s(i)$\; 
	$\tB \ot$ Set of buyers with surplus $\delta$\tcp*{$\delta > 0$}
	$\tG \ot \{k \in G\ |\ f_{ki} > 0, i\in \tB\}$\;
Set $\gamma\ot 1$, define $p_j \ot \gamma \cdot p_j, \forall j\in \tG$, and adjust active prices and budgets accordingly as a function of $\gamma$\;
Decrease $\gamma$ continuously down from 1 until one of these events occurs:\;\label{line:priceDecStart} 
\ \ \ \ \ {\bf Event 1:} An uncapped buyer becomes capped\;
\ \ \ \ \ {\bf Event 2:} A capped good becomes uncapped\;
\ \ \ \ \ {\bf Event 3:} A new MBB edge appears\;
\ \ \ \ \ {\bf Event 4:} A subset of $\tB$ becomes \emph{tight}\tcp*{$N^-(\p)$ is feasible}\label{line:priceDecEnd}
$\f \ot$ feasible flow in $N^-(\p)$\;
$(\f, \p)\ot$ MinPrices$(\f, \p)$\;
}
$\x \ot$ FindAllocation$(\f, \p)$\; 
\Return $(\x, \p)$
\end{algorithm}

\begin{algorithm}[t]
\caption{MinPrices \label{alg:6}}
\DontPrintSemicolon
\SetKwInOut{Input}{Input}\SetKwInOut{Output}{Output}
\Input{Market $\CM$, prices $\p$, flow $\f$}
\Output{Minimum prices consistent with input configuration, feasible money flow}
$E \ot$ Set of MBB edges at prices $\p$\;
$G_c \ot$ Set of capped goods at $(\f,\p)$\;
$B_c \ot$ Set of capped buyers at $(\f,\p)$\;
Solve the following LP in price variables $\q$ and flow variables $\g$:\;
\vspace{0.1cm} \nonl \ \ \ \ 
\begin{tabular}{|ll|}
\hline
$\min \sum_j q_j$ & \\
$u_{ij}q_k =  u_{ik}q_j,$ & for each pair of edges $(i, j), (i,k) \in E$\\
$u_{ij}q_k \ge u_{ik}q_j,$ & for each pair of edges $(i,j) \in E$ and $(i,k)\not\in E$\\
$q_j \le d_j,$ & $\forall j \in G\setminus G_c$\\
$q_j \ge d_j,$ & $\forall j \in G_c$\\
$\sum_i g_{ij}= d_j,$ & $\forall j\in \G_c$\\
$\sum_i g_{ij}= q_j,$ & $\forall j \in G\setminus \G_c$\\
$\sum_j g_{ij}\ge m_i,$ & $\forall i\in B\setminus \B_c$\\
$\sum_j g_{ij}\ge c_iq_j/u_{ij},$ & $\forall i\in \B_c$ where $(i,j) \in E$\\
$m_i \ge c_iq_j/u_{ij},$ & $\forall i\in \B_c$ where $(i,j) \in E$\\
$g_{ij} = 0$, & $\forall (i,j) \not\in E$\\
$q_j \ge 0;\ g_{ij} \ge 0$ & $\forall i\in B, j\in G$\\
\hline
\end{tabular} \; 
\vspace{0.1cm}
\Return{Optimal solution $(\g^*, \q^*)$ of above LP}
\end{algorithm}

\begin{algorithm}
\caption{FindAllocation \label{alg:7}}
\DontPrintSemicolon
\SetKwInOut{Input}{Input}\SetKwInOut{Output}{Output}
\Input{Market $\CM$, prices $\p$, flow $\f$}
\Output{Allocation $\x$}
$\tG \ot \{j\in G\ |\ p_j = 0\}$\;
$\tB \ot \{i\in B\ |\ u_{ij} > 0,\ j\in \tG\}$\;
Solve the following feasibility LP in allocation variables $(x_{ij})_{i\in \tB, j\in\tG}$:\;
\vspace{0.1cm}\nonl \ \ \ \ 
\begin{tabular}{|ll|}
\hline
$\sum_{j\in \tG} u_{ij}x_{ij} = c_i$, & $\forall i\in\tB$\\
$\sum_{i\in \tB} x_{ij} \le 1,$ & $\forall j\in \tG$\\
$x_{ij} \ge 0$ & $\forall i\in\tB, j\in\tG$\\
\hline
\end{tabular}\;
\vspace{0.1cm}
$x_{ij} \ot f_{ij}/p_j, \forall i \in B \setminus \tB, j \in G \setminus \tG$\;
\Return{$\x$}
\end{algorithm}
In the body of the while-loop, we first compute a balanced flow $\f$. A \emph{balanced flow} is a maximum feasible flow in $N^-(\p)$ which minimizes the $2$-norm of surplus vector $\s = (s(1), s(2), \dots, s(|B|))$. The notion of balanced flow was introduced in~\cite{DevanurPSV08} for equilibrium computation in linear Fisher markets. It can be computed by $n$ maxflow computations. Consider two buyers $i$ and $k$ with different surplus, say $s(i) > s(k)$. If there is a good $j$ connected to $i$ and $k$ by MBB edges, then there is no flow from $j$ to $i$. Otherwise, we could decrease $f_{ji}$, increase $f_{jk}$ by the same amount, and thus decrease the $2$-norm of the surplus vector.

Let $\delta$ be the maximum surplus of any buyer, and let $\tB$ be the set of buyers with surplus $\delta$. We let $\tG$ be the set of goods $k$ that have non-zero flow to some buyer in $\tB$. 
We then decrease the prices of all goods in $\tG$ by a common factor $\gamma$. Starting with $\gamma = 1$, we decrease it continuously. This may destroy MBB edges connecting buyers in $\tB$ with goods in $G \setminus \tG$, but, by definition of $\tG$, there is no flow on such edges. For uncapped goods and capped buyers, this decreases the active price, respectively budget by a factor of $\gamma$. 
We stop if one of four events happens: (1) an uncapped buyer becomes capped, (2) a capped good becomes uncapped, (3) a new MBB edge appears, 
and (4) a subset of $\tB$ becomes tight. A subset $T$ of buyers is called \emph{tight} with respect to prices $\p$ if $\sum_{i\in T} m_i^a = \sum_{j\in \Gamma(T)} p_j^a$, where $\Gamma(T) \subseteq \tG$ is the set of goods connected to $T$ in the MBB graph. Observe that there is a feasible flow in $N^-(\p)$ iff we have 
\[
\sum_{i\in S} m_i^a \le \sum_{j\in \Gamma(S)}p_j^a, \hspace{0.5cm} \forall S\subseteq B.
\]

Next we obtain a feasible flow $\f$ in $N^-(\p)$, which is guaranteed by Event 4. 
We then use an LP (Algorithm~\ref{alg:6}) to compute the pair $(\g,\q)$ of flow and prices which minimizes $\sum_j q_j$ subject to the constraints that (1) the same buyers are capped, (2) the same goods are capped, and (3) the same edges are MBB as with respect to $(\f,\p)$. 
Since the ratio of any two prices in a connected component of the MBB graph is constant and $\f$ is a feasible solution to the LP, $\q \le \p$ component-wise. These observations imply:
\begin{lemma}\label{lem:invPLS}
The Invariants hold during the run of Algorithm~\ref{alg:PLS}. 
\end{lemma}
Finally, we find an equilibrium allocation using Algorithm~\ref{alg:7}. Here, we first obtain the set $\tG$ of zero-priced goods and the set $\tB$ of buyers who have non-zero utilities for some good in $\tG$. Clearly, the buyers of $\tB$ must be capped. For the buyers and goods in $\tB$ and $\tG$ respectively, we find an allocation by solving a feasibility LP which allocates each buyer $i$ a bundle of goods worth $c_i$ amount of utility. Note that this feasibility LP is non-empty, because we always maintain all the Invariants (Lemma~\ref{lem:invPLS}) throughout the algorithm.

We call the tuple $(E, B_c, G_c)$ a \emph{configuration}, where $E\subseteq B\times G$ is a set of MBB edges, $B_c\subseteq B$ is a set of capped buyers, and $G_c\subseteq G$ is a set of capped sellers. At the beginning of each iteration, we have a configuration based on the current prices. The following lemma ensures that our algorithm makes progress towards an equilibrium.
\begin{lemma}\label{lem:conf}
During the run of Algorithm~\ref{alg:PLS}, no configuration repeats. 
\end{lemma}
\begin{proof}
An iteration ending with Event 1 or 2 grows the set $B_c$ of capped buyers or the set $G_c$ of capped goods. Since these sets never loose members, no preceding configuration can repeat. If the sum of prices is strictly decreased before an event, i.e., $\gamma < 1$, none of the preceding configuration can repeat, since we find the minimum possible prices for the current configuration at the end of each iteration. We will show below that prices of the goods in $\tG$ are strictly decreased when an iteration ends with Event 3 or 4. 

In case of Event 3, a new MBB edge appears from a buyer $k$ in $B\setminus \tB$ to a good $j$ in $\tG$. For such an edge to become MBB, $\gamma$ must be strictly less than $1$: Since $k\not\in\tB$, $s(k) < \delta$ in the balanced flow. Suppose $\gamma=1$ then using this MBB edge from $k$ to $j$, we can increase the surplus of $k$ and decrease the surplus of a buyer in $\tB$. This decreases the 2-norm of the surplus vector, a contradiction. 

Next consider an iteration that ends due to Event 4, and suppose prices of goods in $\tG$ are not decreased. Note that the surplus of each buyer in $\tB$ is $\delta > 0$ and the surplus of each good is $0$. Hence, before we decrease prices in lines \ref{line:priceDecStart}-\ref{line:priceDecEnd} of Algorithm~\ref{alg:PLS}, we have
\[
\sum_{j\in\tG} p_j^a - \sum_{i\in\tB} m_i^a  = \delta\cdot|\tB|\enspace.
\]
When Event 4 occurs, a subset 
$T \subseteq \tB$ becomes tight, i.e., $\sum_{j\in\Gamma(T)}p_j^a - \sum_{i\in T} m_i^a = 0$, where $\Gamma(T)$ is the set of goods connected to $T$. However, $\delta\cdot|T| = \sum_{i \in T} s(i) + \sum_{j\in \Gamma(T)} s(j) = \sum_{j\in\Gamma(T)}p_j^a - \sum_{i\in T}m_i^a - \sum_{i\not\in T, j\in\Gamma(T)} f_{ji} = - \sum_{i\not\in T, j\in\Gamma(T)} f_{ji}$, which is a contradiction. 
\end{proof}
\begin{theorem}\label{thm:pls}
Algorithm~\ref{alg:PLS} computes in exponential time a thrifty and modest equilibrium in money-clearing markets. 
\end{theorem}

\begin{proof}
In each iteration, the balanced flow can be obtained in polynomial time \cite{DevanurPSV08}. Consider the maximum $\gamma$ at which an event occurs. The maximum $\gamma$ for the first three events can be easily obtained in polynomial time. For Event 4, we need to find the maximum $\gamma$ when a set of buyers become tight, which can be computed using at most a linear number of max-flow computations~\cite{DevanurPSV08,BeiGHM16}. Finally, the LP in Algorithm~\ref{alg:6} can be solved in polynomial time, hence each iteration can be implemented such that it needs only polynomial time. 

Due to Lemma~\ref{lem:conf}, we have a different configuration at the beginning of each iteration. The number of distinct configurations is finite, so Algorithm~\ref{alg:PLS} terminates with an equilibrium. The running time depends polynomially on $n$, $m$, $U$, and the number of distinct configurations, which is at most $2^{O(n\cdot m \cdot (nm))}$.
\end{proof}

Observe that our algorithm constructs an initial configuration in polynomial time. Then, for each configuration, we can interpret the sum of consistent prices as objective function, which can be found by algorithm MinPrices. Furthermore, we can define a suitable neighborhood among configurations. Algorithm~\ref{alg:PLS} can be adapted to efficiently search the neighborhood for a configuration that decreases the sum of consistent prices. Also, we can compute in polynomial time an equilibrium for a market $\CM^{s}$ as a starting configuration for our algorithm. As such, our algorithm implements the oracles for the class~\classPLS. 

\begin{corollary}
	The problem of computing a thrifty and modest equilibrium in money-clearing markets is in the class \classPLS.
\end{corollary} 
\begin{proof}
We call a configuration \emph{feasible} if the LP in Algorithm~\ref{alg:6} is feasible and its output makes the feasibility-LP of Algorithm~\ref{alg:7} non-empty. Otherwise, we call the configuration infeasible. For membership in~\classPLS, we construct polynomial-time computable neighborhood and cost functions on the set of configurations such that the following property holds: A configuration has lowest cost among all its neighbors (local optimum) if and only if it is an equilibrium. 

For each feasible configuration, let the cost be the optimum value of the corresponding LP. For each infeasible configuration, we define its cost to be prohibitively high $(n+m)U^{n+m+1}\sum_{i\in B} m_i$. Each infeasible configuration has a unique neighbor the starting configuration of Algorithm \ref{alg:PLS} (in line 1). Observe that we can take any feasible configuration as the starting configuration in Algorithm \ref{alg:PLS}. Hence, we define the unique neighbor of each feasible configuration $\mathcal C$ as the next configuration in Algorithm \ref{alg:PLS} when it is started with $\mathcal C$. Clearly, both cost and neighborhood functions are polynomial-time computable, and a configuration is a local optimum if and only if it is a thrifty and modest equilibrium. This proves the claim. 
\end{proof}

\begin{remark} \rm 
It is not clear how to use Algorithm~\ref{alg:FPTAS} to show membership in class \classPLS. The difficulty lies in defining a suitable configuration space and a potential function.
\end{remark}

\renewcommand{\l}{{\bm{\lambda}}}

\subsection{Constant Number of Buyers or Goods}
\label{sec:Constant}

In this section, we show that Algorithm~\ref{alg:PLS} runs in polynomial time when either the number of buyers or the number of sellers is constant. Consider the number of MBB graphs for a fixed set of capped buyers and capped sellers. Using a cell decomposition technique, we show it is polynomial when $|B|$ or $|G|$ is constant. We create regions in a constant dimensional space by introducing polynomially many hyperplanes. The number of non-empty regions formed by $N$ hyperplanes in $\mathbb R^d$ is $O(N^d)$. Thus, we get a polynomial bound on the number of regions. 

Next we show that each MBB graph maps to a particular region thus created. Since the number of regions is polynomial, we get a polynomial bound on the number of different MBB graphs. This implies that for any given set of capped buyers and capped sellers, Algorithm~\ref{alg:PLS} examines only polynomially many configurations. Since the set of capped buyers only grows and the set of capped sellers only shrinks, this implies a polynomial running time for Algorithm~\ref{alg:PLS}. 

\begin{theorem}
Algorithm~\ref{alg:PLS} computes in polynomial time a thrifty and modest equilibrium in money-clearing markets with constant number of buyers or sellers. 
\end{theorem}
\begin{proof}
For constant number of goods, consider the following set of hyperplanes in $(p_1, \dots, p_{|G|})$-space, where $p_j$ denotes the price of good $j$. 
\[u_{ij}p_{j'} - u_{ij'}p_j = 0, \forall i\in B, \forall j, j' \in G.\]
These hyperplanes partition the space into cells, and each cell has one of the signs $<, =, >$ for each hyperplane. Further, each MBB graph $(B\cup G, E)$ satisfies the following constraints in $\p$ variables: 
\begin{eqnarray}
\forall (i,j),(i, j') \in E & : & u_{ij}p_{j'} - u_{ij'}p_{j} = 0\notag\\
\forall (i,j) \in E\ \&\ \forall (i,j')\not\in E & : & u_{ij}p_{j'} - u_{ij'}p_j \ge 0.\notag
\end{eqnarray}
Now, for constant number of buyers consider the following set of hyperplanes in $(\lambda_1, \dots, \lambda_{|B|})$-space, where $1/\lambda_i$ denotes the MBB of buyer $i$. 
\[\lambda_i u_{ij} - \lambda_{i'}u_{i'j} = 0, \forall i, i' \in B, \forall j\in G.\]
These hyperplanes partition the space into cells, and each cell has one of the signs $<, =, >$ for each hyperplane. Further, each MBB graph $(B\cup G, E)$ satisfies the following constraints in $\l$ variables: 
\begin{eqnarray}
\forall (i,j),(i', j) \in E & : & \lambda_i u_{ij} - \lambda_{i'}u_{i'j} = 0\notag\\
\forall (i,j) \in E\ \&\ \forall (i',j)\not\in E & : & \lambda_i u_{ij} - \lambda_{i'}u_{i'j} \ge 0\notag
\end{eqnarray}

In both cases, each MBB graph maps to a particular cell in the cell decomposition. Since the number of cells are polynomially bounded for constant $|G|$ or $|B|$, this implies a polynomial bound on the number of different MBB graphs. Thus, since the set of capped buyers only grows and the set of capped sellers only shrinks, we get a polynomial running time for Algorithm~\ref{alg:PLS}.  
\end{proof}

\renewcommand{\l}{{\bm{\lambda}}}
\renewcommand{\L}{{\Lambda}}
\newcommand{\ttb}{{\tiny\textbullet \ \ }}

\renewcommand{\P}{{\p^{min}}}
\newcommand{\LCP}{{\sf LCP }}

\subsection{Membership in PPAD}
\label{sec:PPAD}
In this section, we show that computing a thrifty and modest equilibrium in money-clearing markets $\CM$ is in the class \classPPAD. We first derive a formulation as a linear complementarity problem (LCP). It captures all thrifty and modest equilibria of $\CM$, but also has non-equilibrium solutions. To discard the non-equilibrium solutions, we incorporate a positive lower bound on variables representing prices of goods and MBB of buyers. This turns out to be a non-trivial adjustment, because a subset of prices may be zero at all equilibria, so we must be careful not to discard equilibrium solutions. Our approach is based on our previous work~\cite{BeiGHM16}, in which we gave a polynomial-time algorithm for markets $\CM^b$ with utility limits (and without earning limits) to find an equilibrium, whose prices are coordinate-wise smallest among all equilibria. 

Our approach can be summarized as follows. Consider a money-clearing market $\CM$. Now suppose we remove all earning caps to obtain a market $\CM^b$. To this market we apply the algorithm of~\cite{BeiGHM16} and obtain a min-price equilibrium $(\x^{min},\P)$. We show that using $\p^{min}$, the market $\CM$ can be partitioned into two separate markets $\CM_1$ and $\CM_2$. Market $\CM_1$ consists of all goods with price 0 in $\P$ and all buyers having non-zero utility for these goods. $\CM_2$ consists of the remaining buyers and goods. Since all buyers in $\CM_2$ have no utility for goods in $\CM_1$, we have that $\CM_2$ is money clearing if and only if $\CM$ is money clearing.

Based on these two markets, we show that there is\footnote{In fact, it can be shown that every equilibrium of $\CM$ has this property, but this is not necessary for membership in \classPPAD.} an equilibrium of $\CM$ that is an equilibrium of $\CM_1$ and an equilibrium of $\CM_2$. We already know an equilibrium for $\CM_1$ with price of 0 for every good. For $\CM_2$ we show that at every equilibrium, the price of a good $j$ is at least $p^{min}_j$. Using this lower bound on the equilibrium prices in $\CM_2$, we construct a modified LCP formulation $\CM$-\LCP which exactly captures all equilibria of $\CM_2$. We suitably add an auxiliary scalar variable to $\CM$-\LCP and apply Lemke's algorithm. If $\CM_2$ is money clearing, we show that Lemke's algorithm is guaranteed to converge to an equilibrium of $\CM_2$. Composing this with the equilibrium of $\CM_1$ gives an equilibrium of $\CM$. Further, using a result of Todd~\cite{Todd76}, this proves that computing an equilibrium in money-clearing markets $\CM$ is in \classPPAD.

\subsubsection{LCP Formulation}
We start our analysis by deriving an LCP formulation to capture equilibria of $\CM$. The LCP has the following variables 

\ttb $\p=(p_j)_{j\in G}$, where $p_j$ is the price of good $j$,

\ttb $\f=(f_{ij})_{i\in B, j\in G}$, where $f_{ij}$ is the money spent on good $j$ by buyer $i$,

\ttb $\l=(\li)_{i\in B}$, where $1/\li$ is the MBB of buyer $i$ at prices $\p$,

\ttb $\bDelta=(\delta_i)_{i\in B}$, where $(m_i - \delta_i)$ is the active budget of buyer $i$,

\ttb $\bBeta=(\beta_j)_{j\in G}$, where $(p_j - \beta_j)$ is the active price of good $j$.\\

Let $\perp$ denote a complementarity constraint between the inequality and the variable (e.g., $u_{ij}\li - p_j \le 0 \perp f_{ij} \ge 0$ is a shorthand for $u_{ij}\li - p_j \le 0;\ f_{ij} \ge 0;\ f_{ij}(u_{ij}\li - p_j) = 0$). 

\begin{eqnarray}
\forall (i,j)\in (B, G): &  u_{ij}\li - p_j \le 0  &  \perp  \ \ f_{ij} \ge 0 \label{eqn:opt}\\
\forall i\in B: & \delta_i \ge m_i - c_i\li & \perp \ \ \delta_i\ge 0 \label{eqn:alpha}\\
\forall i\in B: & -\sum_{j}f_{ij} \le - (m_i - \delta_i) & \perp \ \  \li\ge 0 \label{eqn:mcb}\\
\forall j\in G: & \beta_j \ge  p_j - d_j & \perp \ \ \beta_j \ge 0 \label{eqn:compg} \\
\forall j\in G: & \sum_{i}f_{ij} \le p_j-\beta_j  & \perp \ \  p_j\ge 0 \label{eqn:mcg}
\end{eqnarray}

\begin{lemma}\label{lem:capeq}
The LCP defined by \eqref{eqn:opt}-\eqref{eqn:mcg} captures all equilibria of $\CM$.  
\end{lemma}

\begin{proof}
Let $(\f,\p)$ be an equilibrium of $\CM$. Let $1/\li$ capture the MBB of buyer $i$ at prices $\p$. Clearly, $\li = \min_{j: u_{ij} > 0} p_j/u_{ij}$. From the optimal bundle constraint, it is clear that $f_{ij} > 0$ only if $u_{ij}\li - p_j = 0$. This implies that $(\p, \f, \l)$ satisfies \eqref{eqn:opt}. The active budget $m_i^a$ of buyer $i$ is $\min\{m_i, c_i\li\}$ and we have $\sum_j f_{ij} = m_i^a$. Setting $\delta_i = \max\{0, m_i - c_i\li\}$ satisfies \eqref{eqn:alpha}. This further implies that $m_i^a = m_i - \delta_i$, and we get \eqref{eqn:mcb}. Similarly, the active price $p_j^a$ of good $j$ is $\min\{p_j, d_j\}$ and we have $\sum_i f_{ij} = p_j^a$. Setting $\beta_j = \max\{0, p_j - d_j\}$ satisfies \eqref{eqn:compg}. This further implies that $p_j^a  = p_j - \beta_j$ and we also get \eqref{eqn:mcg}. This proves the claim. 
\end{proof}

Lemma \ref{lem:capeq} shows that all equilibria of $\CM$ are captured by the LCP \eqref{eqn:opt}-\eqref{eqn:mcg}. However, there are solutions to this LCP which are not market equilibria, e.g., $\li = p_j = \beta_j = f_{ij} = 0$, $\forall i,\forall j$, and $\delta_i = m_i, \forall i$ is an LCP solution but not an equilibrium. To discard these non-equilibrium solutions in the LCP, we strive to include a positive lower bound to all $p_j$ and $\li$. 

\begin{remark} \rm
Previous constructions~\cite{GargMSV15,GargMV18} of LCPs for market equilibria use only a positive lower bound of $1$ to the prices. In our case this is not sufficient, since there are solutions where $\li =  f_{ij} = 0, \forall(i,j)$, and $\delta_i = m_i, \forall i$. As a consequence, we need to establish positive lower bounds to all $\li$. Another difficulty arises from the fact that a positive constant cannot be used as a lower bound, because there might be prices that are zero in all equilibria. A positive lower bound for these prices would discard all equilibria as solutions of the LCP.
\end{remark}

To handle these difficulties we use our polynomial-time algorithm~\cite{BeiGHM16} for markets $\CM^b$ with utility caps. Consider market $\CM$ and disregard all earning caps. The resulting market is a market $\CM^b$, for which our algorithm from~\cite{BeiGHM16} can compute a price vector $\p^{min}=(p^{min}_j)_{j\in G}$ of a min-price equilibrium. A min-price equilibrium has coordinate-wise smallest prices, i.e., for every good $j$ the price $p^{min}_j$ is the smallest price of good $j$ in all equilibria. As a consequence, the set $S=\{j\in G\ |\ p^{min}_j = 0\}$ includes all goods that have price zero in every equilibrium of market $\CM^b$. Let $\Gamma(S)$ be the set of buyers who derive non-zero utility from goods in $S$, i.e., $\Gamma(S) = \{i\in B\ |\ u_{ij} > 0, j\in S\}$. 

We partition the market into two disjoint markets. Market $\CM_1$ includes exactly the goods of $S$ and the buyers in $\Gamma(S)$. Market $\CM_2$ has the remaining goods and buyers. By definition $u_{ij} = 0$ for every $j \in S$ and $i \not\in \Gamma(S)$ and hence no buyer in $\CM_2$ will ever spend on goods in $\CM_1$. The min-price equilibrium $(\vecx^{min}, \vecp^{min})$ yields an equilibrium for $\CM_1$, since all utility caps for all $i \in \Gamma(S)$ are reached and all earning caps for all $j \in S$ are satisfied. We will next establish that there is an equilibrium for $\CM_2$ in which every good $j \not\in S$ has a price $p_j \ge p^{min}_j$ for all $j \not\in S$. Then no buyer $\CM_1$ will ever spend on goods from $\CM_2$, justifying the separation of the markets.

\begin{lemma}\label{lem:gtp} In every equilibrium for market $\CM_2$, $p_j \ge p^{min}_j, \forall j\not\in S$.   
\end{lemma}

\begin{proof}
Suppose there is an equilibrium of $\CM_2$ where the price of some good $j$ is $p_j < p^{min}_j$. Note that at $\p^{min}$, the sum of prices of goods in $\CM_2$ is exactly equal to the sum of active budgets of buyers in $\CM_2$. Since $(\x^{min},\p^{min})$ is a min-price equilibrium, there is an uncapped buyer in every MBB component of $\CM_2$. Let $\gamma = \min_{k\in G} p_k/p^{min}_k$ and let $j$ be a good for which $p_j/p^{min}_j = \gamma$. Now consider an MBB component $C$ containing good $j$ at $\p$. Since $\gamma <1$ and there is an uncapped buyer in every MBB component of $\CM_2$ at $\p^{min}$, we can conclude that the total prices of goods in $C$ will be less than the total active budgets of buyers in $C$, which is a contradiction. 
\end{proof}

Next we derive an LCP for market $\CM_2$ using the lower bound on the price of each good as given in Lemma~\ref{lem:gtp}. At this point, we need to solve the equilibrium problem for $\CM_2$ only, so let $B_2 = B \setminus \Gamma(S)$ and $G_2 = G \setminus S$ denote the sets of buyers and goods in $\CM_2$, respectively. For the lower bound on $\li$'s we define
\begin{equation}\label{eqn:defL}
\Lambda \defeq \nicefrac{1}{2}\min_{i \in B_2, j \in G_2:\ u_{ij} > 0} \{p^{min}_j/u_{ij}\}.
\end{equation}

Consider the following modified LCP in variables $(\l, \p', \f, \bDelta, \bBeta)$, where price of good $j$ is $p'_j + p^{min}_j$:
\begin{eqnarray}
\forall (i,j)\in (B_2, G_2): &  u_{ij}(\li + \L) - (p'_j + p^{min}_j) \le 0  &  \perp  \ \ f_{ij} \ge 0 \label{eqn:opt2}\\
\forall j\in G_2: & \sum_{i}f_{ij} \le p'_j + p^{min}_j -\beta_j  & \perp \ \  p'_j\ge 0 \label{eqn:mcg2}\\
\forall i\in B_2: & -\sum_{j}f_{ij} \le - (m_i - \delta_i) & \perp \ \  \li\ge 0 \label{eqn:mcb2}\\
\forall i\in B_2: & \delta_i \ge m_i - c_i(\li + \L) & \perp \ \ \delta_i\ge 0 \label{eqn:alpha2}\\
\forall j\in G_2: & \beta_j \ge  p'_j + p^{min}_j - d_j & \perp \ \ \beta_j \ge 0 \label{eqn:compg2} 
\end{eqnarray}

The constraints \eqref{eqn:opt2}-\eqref{eqn:compg2} represent the $\CM$-LCP. Next we show that this LCP exactly captures all market equilibria of $\CM_2$. 
\begin{lemma}
A solution of $\CM$-LCP is a thrifty and modest equilibrium of $\CM_2$ and vice-versa. 
\end{lemma}

\begin{proof}
Lemmas \ref{lem:capeq} and \ref{lem:gtp} show that every equilibrium of $\CM_2$ is a solution of $\CM$-LCP. For the other direction, consider a solution $(\l, \p', \f, \bDelta, \bBeta)$ of $\CM$-LCP. The active price $p_j^a$ of good $j$ is $p'_j + p^{min}_j - \beta_j$ and the active budget $m_i^a$ of buyer $i$ is $m_i - \delta_i$. Clearly, $m_i^a > 0, \forall i \in B_2$ and $p_j^a > 0, \forall j\in G_2$. 

Next we claim that $\li > 0, \forall i$. Suppose $\li = 0$ for a buyer $i$, then $f_{ij} = 0, \forall j$ due to \eqref{eqn:opt2} and \eqref{eqn:defL} which violates the left inequality of \eqref{eqn:mcb2}. Hence $\li > 0, \forall i\in B_2$. 

The constraints \eqref{eqn:opt2} ensure that $f_{ij} > 0$ then $u_{ij}/(p'_j + p^{min}_j) = \max_{k\in G_2} u_{ik}/(p'_k + p^{min}_k)$, which implies that each buyer buys an optimal bundle. 

Further, the constraints \eqref{eqn:mcb2} together with the fact that $\li > 0, \forall i$ ensure that each buyer spends its entire active budget. Now we only need to show that each good receives money equal to its active price, i.e., $\sum_i f_{ij} = p_j^a, \forall j\in G_2$. 

Note that the prices $\p^{min}$ impose an equilibrium for $\CM_2$ without the earning caps. Let $S' = \{j\in G_2 \ |\ p_j' = 0\}$. Clearly, $\sum_i f_{ij} = p_j^a, \forall j\in G_2 \setminus S'$ due to \eqref{eqn:mcg2}. Let $\Gamma(S')$ be the set of buyers having at least one MBB good in $S'$ at prices $\p^{min}$.  Since $\p^{min}$ is the min-price equilibrium without earning caps, the total active budget of buyers in $\Gamma(S')$ is at least the total prices of goods in $S'$, i.e., $\sum_{j\in S'} p^{min}_j$. 

Now suppose we set prices $p_j = p'_j + p^{min}_j \ge p^{min}_j, \forall j \in G_2$. Since $p_j > p^{min}_j, \forall j\in G_2\setminus S'$ and $p_j = p^{min}_j, \forall j\in S'$, buyers in $\Gamma(S')$ have all their MBB goods in $S'$ at prices $\p$ and the active budget of the buyers in $\Gamma(S')$ is at least $\sum_j p^{min}_j$. Thus, we have 
\begin{equation}\label{eqn:mcs}
\sum_{j\in S'} p^{min}_j \le \sum_{i\in\Gamma(S')} m_i^a.
\end{equation}
Further, summing the left hand side inequality of constraints~\eqref{eqn:mcg2} and~\eqref{eqn:mcb2} for buyers in $\Gamma(S')$ and using the fact that $\li > 0, \forall i$, we get 
\[\sum_{i\in \Gamma(S')} m_i^a = \sum_{i\in \Gamma(S'),j\in S'} f_{ij} \le \sum_{j\in S'} p_j^a = \sum_{j\in S'} p^{min}_j - \sum_{j\in S'} \beta_j \]
The above with \eqref{eqn:mcs} imply that $\beta_j = 0, \forall j\in S'$ and all inequalities are equalities. Hence we have $\sum_i f_{ij} = p_j^a, \forall j\in G_2$. 
\end{proof}

\subsubsection{Lemke's Algorithm} 
In this section, we apply Lemke's algorithm (see Appendix~\ref{app:lcp} for details) on the $\CM$-LCP. For this, we first add an auxiliary, non-negative, scalar variable $z$ in \eqref{eqn:mcb} and consider 
\begin{equation}
-\sum_{j}f_{ij} - z \le - (m_i - \delta_i) \ \ \perp \ \ \li\ge 0  \ \ \ \text{ and } \ \ \  z \ge 0 \label{eqn:mcbz}
\end{equation}
We denote by $\CM$-LCP2 the constraints (\ref{eqn:opt2}-\ref{eqn:mcg2}), \eqref{eqn:mcbz}, (\ref{eqn:alpha2}-\ref{eqn:compg2}). The primary ray of $\CM$-LCP2 is $z \ge m_i-\delta_i, \forall i$. The other variables are set to $\l = \0$, $\p'=\0$, $\f = \0$, $\delta_i = \max\{0, m_i-\L c_i\}, \forall i$ and $\beta_j = \max\{0,p^{min}_j - d_j\}, \forall j$. In the proof of the following theorem we show that under the money-clearing condition, there are no secondary rays in $\CM$-LCP2. Hence, applied to this LCP Lemke's algorithm will converge to an equilibrium. 

\begin{theorem}\label{thm:conv}
Lemke's algorithm applied to $\CM$-LCP2 converges to an equilibrium in money-clearing markets.
\end{theorem}

\begin{proof}
We prove the result by contradiction. Suppose Lemke's algorithm converges on a secondary ray $R$, which starts at a vertex $(\l_*, \p'_*, \f_*, \bDelta_*, \bBeta_*, z_*)$ where $z_* > 0$ and the direction vector is $(\l_o, \p'_o, \f_o, \bDelta_o, \bBeta_o, z_o)$, i.e., 
\[R = \{(\l_*, \p'_*, \f_*, \bDelta_*, \bBeta_*, z_*) + \alpha(\l_o, \p'_o, \f_o, \bDelta_o, \bBeta_o, z_o), \forall \alpha \ge 0\}.\] 
Observe that $(\l_o, \p'_o, \f_o, \bDelta_o, \bBeta_o, z_o) \ge 0$. We consider three cases and show a contradiction in each of them. 
\begin{description}
\item[Case 1:] In this case $\p'_o > \0$, i.e., all prices are increasing on $R$. Since $p_j^a = p'_j + p^{min}_j - \beta_j = \min\{p'_j + p^{min}_j, d_j\}$, $p_j^a = d_j, \forall j$ on $R$. Further, $\sum_{i} f_{ij} = p_j^a, \forall j$ on $R$. Note that if $\li = 0$, then $f_{ij} = 0, \forall j$ due to \eqref{eqn:opt}. This further implies that $f_{ij} > 0$ only if $\li > 0$. Hence, using \eqref{eqn:mcbz} we get $\sum_{i\in S} (m_i - \delta_i) - z|S| = \sum_j d_j$, where $S = \{i\in B_2 \mid \li > 0\}$. By money clearing~\eqref{eqn:nsc}, $z = 0$ and hence $z_* = 0$; a contradiction.
\item[Case 2:] In this case $\p'_0 = \0$, i.e., all prices are constant on $R$. Since $\p'_o = \0$, we have $\l_o = \0$ and $\f_o = \0$ due to~\eqref{eqn:opt} and \eqref{eqn:mcg}, respectively. This further implies that $\bDelta_o = \0$ and $\bBeta_o = \0$. Together, they imply that $z_o > 0$ because the direction vector cannot be $\0$. Since $z_o > 0$, we get $\l_* = \0$ due to~\eqref{eqn:mcbz}, which subsequently implies $\f_* = \0$, $\p'_* = \0$, $(\bDelta_*)_i = \max\{0, m_i-c_i\L\}$, and $(\bBeta_*)_j = \max\{0, p^{min}_j - d_j\}$. This means that the secondary ray is a primary ray; a contradiction. 
\item[Case 3:] In this case $\p'_o \ngtr \0$ and $\p'_o \neq \0$, i.e., some prices are increasing and some are constant on $R$. Let $S' = \{j \in G_2 \mid (p'_o)_j > 0\}$. This implies that $p_j^a = d_j$ and $\sum_{i} f_{ij} = d_j$ $\forall j\in S'$ due to~\eqref{eqn:mcg2}. Prices of goods in $S'$ are increasing to infinity on $R,$ and these goods are sold upto their maximum possible revenue. Hence, the buyers who buy these goods have zero utility for the goods outside $S'$, because each buyer buys an optimal bundle, and the prices of goods outside $S'$ are constant on $R$. Let $\Gamma(S')$ be the set of buyers buying goods in $S'$ on $R$. This implies that $\li > 0, \forall i\in \Gamma(S')$ due to~\eqref{eqn:opt}. Using \eqref{eqn:mcbz} we get that $\sum_{i\in\Gamma(S')} (m_i - \delta_i) - z|\Gamma(S')| = \sum_{j\in S'} d_j$. This implies that $z=0$ using the money-clearing condition for the buyers in $\Gamma(S')$; a contradiction. \qedhere
\end{description}
\end{proof}

\begin{corollary}
The problem of computing a thrifty and modest equilibrium in money-clearing markets is in the class \classPPAD. 
\end{corollary}

\begin{proof}
By Theorem \ref{thm:conv}, Lemke's algorithm must converge to an equilibrium for money-clearing markets $\CM$. Note that Lemke's algorithm traces a path on the $1$-skeleton of a polyhedron. Let $v$ be a vertex on the path found by Lemke's algorithm. To prove membership in \classPPAD, we need to show that the unique predecessor and successor of $v$ on this path can be found efficiently. Clearly, these two vertices, say $u$ and $w$, can be found simply by pivoting. To determine which vertex leads to the start of the path, i.e., the primary ray, and which leads to the end, we use a result by Todd~\cite{Todd76} on the orientability of the path followed by a complementary pivot algorithm. It shows that the signs of the sub-determinants of tight constraints satisfied by the vertices $u$, $v$ and $w$ provide the orientation of the path. This concludes the proof of membership in \classPPAD.
\end{proof}

\begin{remark} \rm
We note that a money-clearing market $\CM$ can be reduced to a more general Leontief-free market~\cite{GargMV14}. However, the agents in the reduced market remain satiated because buyers and sellers in $\CM$ are thrifty. The results for Leontief-free markets in~\cite{GargMV14} (such as membership in \classPPAD) require non-satiation of agents. Hence, these results are not directly applicable to markets $\CM$ via such a reduction.
\end{remark}
}

\section{Approximating the Nash Social Welfare}
\label{sec:NSW}

\subsection{Constant-Factor Approximation for Budget-Additive Valuations}
\label{sec:NSWalgo}
In this section, we present a $(2e^{1/(2e)}+\varepsilon)$-approximation algorithm for maximizing Nash social welfare with budget-additive valuations, for every constant $\varepsilon > 0$. 

Consider maximization of Nash social welfare for allocation of a set $G$ of indivisible items to a set $B$ of agents with budget-additive valuations. As a first step, we execute a simple adjustment to the valuation functions. Note that if for some agent $i\in B$ and some item $j \in G$ we have $v_{ij} \ge c_i$, we can equivalently assume that $v_{ij} = c_i$ since the valuation of $i$ can be at most $c_i$. More formally, let $v'_{ij} = \min(v_{ij}, c_i)$ and $v'_i(\x_i^S) = \min \left( c_i, \sum_{j \in G} v'_{ij}x_{ij}^S \right)$. The following lemma is straightforward and its proof is omitted. 

\begin{lemma}
For every integral allocation $\x$ we have $v'_i(\x) = v_i(\x)$.
\end{lemma}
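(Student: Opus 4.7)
The plan is to do a simple case analysis on whether the uncapped sum $\sum_j v_{ij} x_{ij}$ exceeds $c_i$ or not, exploiting that $x_{ij}\in\{0,1\}$. The inequality $v'_{ij}\le v_{ij}$ gives $\sum_j v'_{ij} x_{ij} \le \sum_j v_{ij} x_{ij}$ for free, so the content lies in showing that capping each $v_{ij}$ individually before summing produces the same post-cap value as summing first and then capping.

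First I would handle the case $\sum_j v_{ij} x_{ij} \le c_i$. By integrality, every item $j$ with $x_{ij}=1$ must satisfy $v_{ij} \le c_i$ (otherwise the single term $v_{ij}$ would already exceed $c_i$, contradicting the assumed inequality), so $v'_{ij}=v_{ij}$ on the support of $\x_i$, giving $\sum_j v'_{ij} x_{ij} = \sum_j v_{ij} x_{ij}$ and therefore $v'_i(\x_i) = v_i(\x_i)$.

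In the complementary case $\sum_j v_{ij} x_{ij} > c_i$, we have $v_i(\x_i)=c_i$, and it suffices to show $\sum_j v'_{ij} x_{ij} \ge c_i$. If some $j$ on the support has $v_{ij} > c_i$, then $v'_{ij}=c_i$ contributes $c_i$ to the sum already. Otherwise $v'_{ij}=v_{ij}$ for every $j$ with $x_{ij}=1$, and the sum equals $\sum_j v_{ij} x_{ij} > c_i$. Either way the outer $\min$ yields $c_i$, matching $v_i(\x_i)$.

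There is no real obstacle — the statement is essentially bookkeeping — but the one point worth emphasizing is the use of integrality. For a fractional $x_{ij}\in(0,1)$ with $v_{ij}>c_i$, capping $v_{ij}$ to $c_i$ can strictly reduce the contribution $v_{ij}x_{ij}$, so the identity $v'_i=v_i$ can fail; integrality is precisely what rules this out, since on the support of $\x_i$ the quantity $v_{ij}x_{ij}$ either is zero or already carries the full weight $v_{ij}$.
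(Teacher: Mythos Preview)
Your proof is correct; the paper itself omits the proof, calling the lemma ``straightforward,'' and your case analysis is exactly the natural argument one would supply. Your closing remark that integrality is essential (and that the identity can fail for fractional $\x$) is a worthwhile observation beyond what the paper states.
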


Henceforth, we will assume that $v_{ij} \le c_i$, for all $i \in B$, $j \in G$. 

To solve the integral maximization problem, consider the following convex program that describes a natural fractional relaxation.
\begin{equation}\label{pro:EG}
	\begin{array}[4]{rrcll}
	\text{Max.} & \multicolumn{4}{l}{\D \left(\prod_{i \in B} \D \left(\sum_{j \in G} v_{ij} x_{ij}\right) \right)^{1/n} }\vspace{0.2cm}\\
	\text{s.t.}	& \D \sum_{j \in G} v_{ij} x_{ij}& \leq & c_i & i \in B \vspace{0.2cm}\\
	          	& \D \sum_{i \in B} x_{ij} & \leq & 1 & j \in G\\
						  & x_{ij} & \ge & 0 & i \in B,\ j \in G\\
	\end{array}
\end{equation}
The optimal solution to this program is the allocation vector of a thrifty and modest equilibrium in a Fisher market, in which agent $i$ has a linear utility with $u_{ij} = v_{ij}$, a utility limit $c_i$ and a budget $m_i = 1$ (for details, see, e.g.~\cite{BeiGHM16,ColeDGJMVY17}). 

Unfortunately, the optimal fractional allocation of this program can be significantly better in terms of the Nash social welfare than any integral solution with all $x_{ij} \in \{0,1\}$ (see~\cite{ColeG18} for an example, in which the ratio is exponential in $|B|$). Hence, similar to~\cite{BeiGHM17,ColeDGJMVY17,ColeG18}, we introduce additional earning limits $d_j = 1$ for all $j \in G$ into the Fisher market. We will see that this lowers the achievable objective function value in equilibrium and allows us to round the fractional equilibrium allocation to an integral one that approximates the optimal Nash social welfare.

Consider the resulting Fisher market $\CM$ with utility and earning limits. Our first observation is that non-trivial instances of the Nash social welfare problem give rise to a market $\CM$ that is money clearing.

\begin{lemma}
  \label{lem:noMoneyNoNSW}
  Consider the Fisher market $\CM$ resulting from an instance of the Nash social welfare problem. If the market $\CM$ is not money clearing, then the maximum Nash social welfare for indivisible items is 0. 
\end{lemma}

\begin{proof}
Obviously, if market $\CM$ is not money clearing, then there exists a subset $B'$ of buyers such that the sum of earning caps of goods in $\Gamma(B') = \{j\ |\ v_{ij} > 0, i \in B'\}$ is less than the sum of budgets of buyers in $B'$. This implies that $|\Gamma(B')| < |B'|$. Hence, there is no allocation where each agent in $B'$ gets at least one item of positive valuation. Thus, the Nash social welfare must always be 0.
\end{proof}

When the market is not money clearing, every integral allocation has the optimal Nash social welfare. It is easy to check condition~\eqref{eqn:nsc} by a max-flow computation. 

Hence, for the remainder of this section, we assume that the instance of the Nash social welfare problem is non-trivial, i.e., the resulting Fisher market is money clearing. We have seen in Section~\ref{sec:exist} above that a money-clearing market $\CM$ always has a thrifty and modest equilibrium. Suppose we are given such an equilibrium $(\x, \p)$. 

The Nash social welfare objective allows scaling the valuation function of every agent $i$ by any factor $\gamma_i > 0$. This adjustment does neither change the equilibrium, the integral optimum solution of the Nash social welfare problem, nor the approximation factor. Given the equilibrium $(\x, \p)$, we want to \emph{normalize} the valuation function for agent $i$ based on the MBB ratio $\alpha_i$ of buyer $i$ in the market equilibrium. 

In equilibrium, there can be a set of goods $G_0 = \{ j \mid p_j = 0\}$. All buyers $B_0 = \{ i \mid u_{ij} > 0 \text{ for some } j \in G_0\}$ interested in any good $j \in G_0$ have infinite MBB ratio. Due to our equilibrium conditions, every $i \in B_0$ must be capped and receive allocation only from $G_0$, i.e., $u_i(\vecx) = c_i$ and $x_{ij} > 0$ only if $j \in G_0$ and $u_{ij} > 0$. Moreover, since no buyer $i \in B \setminus B_0$ has positive utility for any of the goods $G_0$, these goods are allocated only to $B_0$. Therefore, we can treat items $G_0$ and agents $B_0$ separately in the analysis.

For all $i \in B \setminus B_0$, we normalize $v'_{ij} = v_{ij}/\alpha_i$ and $c'_i = c_i/\alpha_i$. This does not change the demand bundle for buyer $i$, and thus $(\x,\p)$ remains an equilibrium. In the resulting instance, every such buyer has MBB of 1 in $(\x,\p)$. Consequently, $v'_{ij} \le p_j$ for all $i\in B \setminus B_0, j\in G$, where equality holds if and only if $j$ is an MBB good of buyer $i$. For simplicity we assume that $v$ and $c$ fulfill these conditions directly, i.e., $v_{ij} = v'_{ij}$ and $c_i = c_i'$. Together with the fact that $v_{ij} \le c_i, \forall (i,j)$ this implies 
\begin{equation}
  \label{eq:ubMatch}
  v_{ij} \le \min(p_j, c_i),\hspace{1cm} \text{ for all } i\in B \setminus B_0, j\in G\enspace.
\end{equation}
The following lemma is a helpful insight on the structure of equilibria.

\begin{lemma}\label{lem:uncap}
Consider a money-clearing Fisher market $\CM$ with $m_i = 1$, $d_j = 1$, and $v_{ij} \le c_i$, for all $i\in B, j\in G$. Suppose we normalize the utilities based on a thrifty and modest equilibrium $(\vecx, \vecp)$. Then the following properties hold.

\begin{compactenum}[\mbox{}\hspace{\parindent}(a)]
\item A buyer $i \in B \setminus B_0$ spends $m_i^a = \min(1,c_i)$ units of money. Its valuation $v_i(\x)$ is equal to $m_i^a$. If $i$ is capped, $c_i \le 1$. 
\item If $i$ is capped, it is allocated at least one unit of goods.
\item If $i$ is capped and $j$ is an MBB good for buyer $i$. Then $p_j \le 1$.
\item If $p_j < 1$, $j$ is completely sold.
\item The money spent on good $j$ is $p_j^a = \min(p_j,1)$. 
\end{compactenum}
\end{lemma}

\begin{proof} An uncapped buyer $i \in B \setminus B_0$ spends his entire budget as otherwise $\x_i$ would not be a demand bundle. Since MBB = 1, the valuation $v_i(\x)$ is equal to the money spent by $i$. If $i$ is capped, its valuation is equal to $c_i$ and hence the money spent is equal to $c_i$. 

Since $v_{ij} \le c_i$ always and $c_i = v_i(\x)$ for a capped agent, we have 
$c_i = v_i(\x) \le c_i \sum_j x_{ij}$ and hence $\sum_j x_{ij} \ge 1$. 

If $j$ is an MBB good for a capped buyer $i$, then $p_j = v_{ij} \le \min(p_j,c_i)$ according to (\ref{eq:ubMatch})
and hence $p_j \le c_i \le 1$, where the last inequality was established in  (a). 

If $0 < p_j < 1$, the supply $e_j = \min(1,d_j/p_j) = \min(1,1/p_j) = 1$. Thus $j$ is completely sold. 

Finally, the money spent on $j$ is $p_j e_j = p_j \min(1,1/p_j) = \min(p_j,1) = p_j^a$. 
\end{proof}

Our subsequent analysis proceeds as follows. First, in the following Section~\ref{sec:ubNSW}, we describe an upper bound on the optimal Nash social welfare of any integral solution. The upper bound is based on the properties of any thrifty and modest equilibrium described above. In Section~\ref{sec:algoNSW} we then show how to round an equilibrium and obtain an integral solution that is a $2e^{1/(2e)}$-approximation for the optimal Nash social welfare. Finally, since our FPTAS from Section~\ref{sec:FPTAS} computes an exact equilibrium in a perturbed market, we discuss in Section~\ref{sec:perturbNSW} the impact of perturbing the market on the approximation guarantee.

\subsubsection{Upper Bound}
\label{sec:ubNSW}
In this section, we describe an upper bound on the optimal Nash social welfare when valuations are normalized based on an equilibrium $(\x, \p)$. The bound relates to prices and utility caps of the capped buyers in $(\x,\p)$. We denote by $B_c$ and $B_u$ the set of capped and uncapped buyers in $(\x,\p)$, respectively. Recall that since $(\x,\p)$ is a thrifty and modest equilibrium, buyers may not spend their entire budget and sellers may not sell their entire supply. We denote by $m_i^a = \min(1, c_i)$ the active budget of buyer $i$ and by $p_j^a = \min(p_j, d_j)$ the active price of good $j$. The following result is a generalization of a similar bound shown in~\cite{ColeG18}. The main difference is to carefully account for the contribution of capped buyers.

\begin{theorem}
For valuations $v$ and caps $c$ normalized according to equilibrium prices $\p$, we have
\[
    \left(\prod_{i\in B} v_i(\x^*)\right)^{1/n} \le \left(\prod_{i\in B_c} c_i \prod_{j:p_j > 1} p_j\right)^{1/n}\enspace,
\]
where $x^*$ is an integral allocation that maximizes the Nash social welfare.
\end{theorem}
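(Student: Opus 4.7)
The plan is to split the product $\prod_i v_i(\x^*)$ into contributions from capped and uncapped buyers, and bound each group separately, in the spirit of the Cole--Gkatzelis argument for additive valuations.

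For every capped buyer $i \in B_c$, the bound $v_i(\x^*) \le c_i$ is immediate from the definition of the budget-additive valuation: the value is capped at $c_i$ for any integral allocation. This accounts for the factor $\prod_{i \in B_c} c_i$ on the right-hand side.

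For every uncapped buyer $i \in B_u$, I would invoke the normalization~\eqref{eq:ubMatch} (which gives $v_{ij} \le p_j$) to write
\[v_i(\x^*) \;\le\; \sum_{j \in S_i^*} v_{ij} \;\le\; \sum_{j \in S_i^*} p_j,\]
where $S_i^*$ denotes $i$'s bundle in $\x^*$. The remaining combinatorial task is the inequality
\[\prod_{i \in B_u} \sum_{j \in S_i^*} p_j \;\le\; \prod_{j:\, p_j > 1} p_j.\]
My plan here is to split each bundle into its heavy part (items with $p_j > 1$) and its light part (items with $p_j \le 1$). The heavy items from distinct bundles are pairwise disjoint, so their prices can feed directly into the right-hand side; each light item contributes at most $p_j \le 1$ to its buyer's sum. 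I then plan to absorb the aggregate light-item contribution using the equilibrium identities -- each heavy seller satisfies $p_j x_j = d_j = 1$, and each uncapped buyer spends its full budget $m_i = 1$ on MBB goods -- combined with an AM-GM-type estimate.

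The hard part will be the combinatorial inequality for uncapped buyers. A single uncapped buyer can in principle receive many heavy and many light items in $\x^*$ while other uncapped buyers receive none, and one must prevent the product of bundle sums from exceeding $\prod_{j:p_j>1} p_j$ in such configurations. The delicate step is to use the equilibrium money-flow constraints to balance the heavy-item contributions across uncapped buyers and to control the aggregate light-item slack tightly enough to close the bound.
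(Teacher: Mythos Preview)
Your decoupling into a separate bound $\prod_{i\in B_u}\sum_{j\in S_i^*}p_j \le \prod_{j:p_j>1}p_j$ for the uncapped buyers is the crux of the plan, and unfortunately that inequality is \emph{false} in general, even at the Nash-optimal allocation $\x^*$. Here is a concrete instance (already normalized): one capped buyer with $c_1=0.9$, one uncapped buyer with $c_2>2$, three goods with prices $0.7,0.7,0.5$ (all $\le 1$, so the right-hand product is empty and equals $1$), and $v_{ij}=p_j$ for all $i,j$. Market clearing holds since $0.9+1=0.7+0.7+0.5$. The integral allocation maximizing the product gives one of the $0.7$-goods to buyer~$1$ and the remaining two goods to buyer~$2$, so $\sum_{j\in S_2^*}p_j=1.2>1$. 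Your intermediate inequality fails, yet the theorem's bound $c_1\cdot 1=0.9$ is still respected because $0.7\cdot 1.2=0.84\le 0.9$. The slack you discarded when replacing $v_1(\x^*)=0.7$ by $c_1=0.9$ is exactly what is needed to cover the excess on the uncapped side; once you bound the capped buyers by $c_i$ irrespective of what they actually receive in $\x^*$, there is no mechanism left to control the light-item mass flowing to uncapped buyers.

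The paper's proof does not decouple. It keeps track of the set $G_c$ of goods that $\x^*$ assigns to capped buyers and uses the market-clearing identity
\[
\sum_j p_j \;=\; \sum_{j:p_j>1}p_j \;+\; \sum_{i\in B_c\setminus B_0} c_i \;+\; |B_u| \;-\; |\{j:p_j>1\}|
\]
to bound the \emph{total} value available to the uncapped buyers who do not receive a heavy item by $|B_u|-|G_u^1|+\sum_{i\in B_c}c_i-\sum_{j\in G_c}p_j^a$. It then performs a fractional relaxation over all non-heavy-item buyers simultaneously and argues, via an exchange argument, that the product is maximized precisely when each capped buyer gets value $c_i$, each heavy-item recipient gets one heavy good, and every remaining uncapped buyer gets value exactly~$1$. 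The coupling through the shared sum constraint is what makes the argument close; any proof along your lines would have to reinstate that coupling rather than treat $B_c$ and $B_u$ independently.
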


\begin{proof}
Consider an \emph{integral} allocation $\x^*$ that maximizes the Nash social welfare. For the agents $i \in B_0 \subseteq B_c$, a simple upper bound is $\prod_{i \in B_0} v_i(\x^*) \le \prod_{i \in B_0} c_i$. To obtain an upper bound on $\prod_{i \in B \setminus B_0} v_i(\x^*)$, we have to work harder. We denote by $G_c$ the set of goods allocated to agents in $B_c \setminus B_0$ in $\x^*$ and by $G_u^1 = \{j \in G\ |\ p_j > 1; x^*_{ij} = 1 \text{ for some } i\in B_u\}$ the set of items with price more than 1 that are assigned in $\x^*$ to buyers in $B_u$. Let $B_u^1 = \{i\in B_u \ |\ x^*_{ij} = 1 \text{ for some }  j \in G_u^1\}$ be the set of buyers from $B_u$ that receive an item of $G_u^1$ in $\x^*$. Note that $|B_u^1| \le |G_u^1|$. 

\newcommand{\abs}[1]{| #1 |}
We will construct a fractional allocation $\tilde{\x}$ with $\prod_{i\in B \setminus B_0} v_i(\x^*) \le \prod_{i\in B \setminus B_0} v_i(\tilde{\x})$ and then bound the latter product. A key step in the construction of $\tilde{\x}$ is to bound $\sum_{i \in B_u \setminus B_u^1}  v_i(\x^*) $. We have
\begin{align}
\nonumber
\sum_{i \in B_u \setminus B_u^1} v_i(\x^*) &\le \sum_{i \in B_u \setminus B_u^1} \sum_{j \in G \setminus (G_c \cup G_u^1)} p_j \x^*_{ij} \le \sum_{j \in G \setminus (G_c \cup G_u^1)} p_j \\
\nonumber
&= \sum_{i\in B_c \setminus B_0} c_i + |B_u| - |G_u^1|  - \sum_{j\in G_c} p_j^a \\
\label{eq:nubBu}
&\le \sum_{i\in B_c \setminus B_0} c_i + |B_u| - |G_u^1|  - \sum_{i\in B_c \setminus B_0} v_i(\x^*)\ .
\end{align}
The first inequality holds since $v_{ij} \le \min(c_i,p_j)$ for all $i$ and $j$ and the goods in $G_c$ and $G_u^1$ are allocated to the agents in $B_c$ and $B_u^1$, respectively. The second line follows since the total money flow into 
the goods is $\sum_{j \in G_c} p_j^a + \abs{G_u^1} + \sum_{j \in G \setminus (G_c \cup G_u^1)} p_j$ and the total money flow out of the agents is $\sum_{i \in B_c \setminus B_0} c_i + \abs{B_u}$. The last line follows from $\sum_{i\in B_c \setminus B_0} v_i(\x^*) \le \sum_{j\in G_c} p_j^a$. This holds since $v_i(\x^*) \le c_i \le 1$ for every $i \in B_c \setminus B_0$ and hence any good $j \in G_c$ can contribute at most $\min(c_i,p_j) \le \min(1,p_j) = p_j^a$ to $v_i(\x^*)$. 

We now take a fractional improvement step and relax the integrality condition on $\x^*$ for buyers in $B \setminus (B_u^1 \cup B_0)$. 
We take the goods assigned to $B \setminus (B_u^1 \cup B_0)$ and redistribute them fractionally among these buyers. However, we require that the fractional solution respects the upper bound~\eqref{eq:nubBu}. We denote by $\tilde{\x}$ the best solution obtained in this improvement step. Note that the Nash social welfare can only increase. Further, note that $\tilde{\x}$ is integral for the buyers in $B_u^1$. 

Since $\tilde{\x}$ satisfies~\eqref{eq:nubBu}, 
\begin{equation}\label{eqn:t2}
\sum_{i \in B_u \setminus B_u^1} v_i(\tilde{\x})  \le |B_u| - |G_u^1| + \sum_{i\in B_c \setminus B_0} c_i - \sum_{i\in B_c \setminus B_0} v_i(\tilde{\x})\ .
\end{equation}
We want to bound
\begin{align*}
 \prod_{i\in B \setminus B_0} v_i(\tilde{\x}) &= \prod_{i\in B_c \setminus B_0} v_i(\tilde{\x}) \cdot \prod_{i\in B_u^1 } v_i(\tilde{\x}) \cdot \prod_{i\in B_u \setminus B_u^1} v_i(\tilde{\x}) \\
       &\le \prod_{i\in B_c \setminus B_0} v_i(\tilde{\x}) \cdot \prod_{i\in B_u^1 } v_i(\tilde{\x}) \cdot \left(\frac{ \sum_{i\in B_u \setminus B_u^1} v_i(\tilde{\x}) }{\abs{B_u \setminus B_u^1}} \right)^{\abs{B_u \setminus B_u^1}}.
\end{align*} 
We will show that the maximum value is obtained when each buyer $i\in B_c \setminus B_0$ gets value $c_i$, each buyer $i \in B_u^1$ gets exactly one good of $G_u^1$, i.e., $|B_u^1| = |G_u^1|$, and each buyer $i\in B_u\setminus B_u^1$ gets value $1$. This will prove the claim. 

Assume first that $\sum_{i \in B_u \setminus B_u^1} v_i(\tilde{\x}) > \abs{B_u \setminus B_u^1}$. Then there must be a buyer $i\in B_c \setminus B_0$ that gets value less than $c_i$ and hence less than one and a buyer $i' \in B_u \setminus B_u^1$ that gets value more than one. Since $\tilde{\x}$ is allowed to be fractional for buyers in $B_c \setminus B_0$ and $B_u\setminus B_u^1$, we can reallocate some amount of good from $i'$ to $i$. This increases the Nash social welfare, a contradiction.
We now have $\sum_{i \in B_u \setminus B_u^1} v_i(\tilde{\x}) \le  \abs{B_u \setminus B_u^1}$. 

Each buyer in $i \in B_u^1$ gets at least one good $j$ with price $p_j > 1$ in $\tilde{\x}$. Since $v_{ij} = \min(c_i, p_j) > 1$, we have $v_i(\tilde{\x}) > 1$ for all $i \in B_u^1$. Suppose now that at $\tilde{\x}$ we have $|B_u^1| < |G_u^1|$, i.e., a buyer $i'\in B_u^1$ gets at least two goods of $G_u^1$. Then \eqref{eqn:t2} implies that either
$ \sum_{i \in B_u \setminus B_u^1} v_i(\tilde{\x}) < \abs{B_u \setminus B_u^1}$ and hence there is an $i \in B_u \setminus B_u^1$ with $v_i(\tilde{\x}) < 1$ or there is an 
$i\in B_c \setminus B_0$ that gets value less than $c_i$ and hence less than one or both. Since $\tilde{\x}$ is allowed to be fractional for buyers in $B_c \setminus B_0$ and $B_u\setminus B_u^1$, we can reallocate one entire good from $i'$ to $i$. This increases the Nash social welfare, a contradiction.
We now have $\abs{B_u^1} = \abs{G_u^1}$, i.e., the goods of price higher than one are in one-to-one correspondence to the buyers in $B_u^1$. Thus
\[ \prod_{i\in B \setminus B_0} v_i(\tilde{\x}) = \prod_{i\in B_c \setminus B_0} c_i  \cdot \prod_{j: p_j > 1 } p_j \cdot \left( 1 \right)^{\abs{B_u \setminus B_u^1}} \ .\qedhere\]
\end{proof}

\subsubsection{Rounding Equilibria}
\label{sec:algoNSW}

In this section, we give an algorithm to round a fractional allocation of a thrifty and modest equilibrium $(\x,\p)$ to an integral one. Without loss of generality, we may assume that the allocation graph $(B\cup G, E)$ with $E = \{\{i,j\} \in B \times G \mid x_{ij} > 0\}$ is a forest~\cite{Orlin10,DuanGM16}. In the following, we only discuss how to round the trees in $(B \setminus B_0) \times (G \setminus G_0)$. For trees in $B_0 \times G_0$, the rounding and the analysis are very similar, but independent of prices and slightly simpler (see Appendix~\ref{app:roundPrice0}). Consider the following procedure:
\medskip

\begin{description}
\item[Preprocessing:] 
It consists of three substeps.
\begin{compactenum}[(a)]
\item For each tree component of the allocation graph, assign some agent to be a root node. 
\item For every good $j$ keep at most one child agent. This child-agent $i$ must buy the largest amount of $j$ among the child agents (ties are broken arbitrarily) and must have an active budget which is less than twice the price of $j$, i.e., $m_i^a/2 < p_j$. In other words, child agent $i$ is cut off from good $j$ if a sibling buys more of good $j$ (ties are broken arbitrarily) or if $p_j \le m_i^a/2$. Note that if a sibling buys more of good $j$, it also spends more on good $j$. 
\item  Agents whose connection to their parent-good is severed in step (b) become roots.
\end{compactenum}

\item[Rounding:] It consists of two substeps.
\begin{compactenum}[(a)]
\item Goods with no child agent are assigned to their parent agent.
\item For each non-trivial tree component, do the following recursively: Assign the root agent a child good $j$ that gives him the maximum value (among all children goods) in the fractional solution. 
Except in the subtree rooted at $j$, assign each good to its child agent in the remaining tree. Make the child agent of good $j$ the root node of the newly created tree. 
\end{compactenum}
\end{description}

\begin{lemma}\label{lem:tree}
After preprocessing, the valuation of each root agent $r$ is at least $v_r(\x)/2$. For all other agents $i$ the valuation is at least $v_i(\x)$. If good $j$ has child agents and $p_j > 1$, then $j$ keeps a child agent. 
\end{lemma}

\begin{proof}
Whenever an agent $i$ loses allocation because the connection to its parent-good $j$ is cut, a new tree component is being created and $i$ becomes its root node. Since $v_{ij} \le c_i, \forall (i,j)$, each capped agent needs to buy in total at least one unit of goods, and each uncapped agent spends his entire budget. 
If $i$ is cut from $j$, then $x_{ij} \le 1/2$ or $p_j \le m_i^a/2$.  In the former case, we distinguish cases. Since the total spending on good $j$ is at most one and a sibling of $i$ spends at least as much on $j$ as $i$, $i$ spends at most $1/2$ on $j$. Thus, if $i$ is uncapped, it receives at most half of its utility from $j$. 
If $i$ is capped, $v_{ij}x_{ij} \le c_i/2 = v_i(\x)/2$, i.e., $i$ receives at most half of its utility from $j$.  
In the latter case, $v_{ij} x_{ij} \le \min\{c_i,p_j\} \le p_j \le m_i^a/2 = v_i(\x)/2$, i.e., $i$ receives at most half of its utility from $j$.   

For a good $j$ with child agents and $p_j > 1$, the child agent $i$ that buys most of $j$ is kept as a child since $p_j > 1/2 = m_i^a/2$. 
\end{proof}

\begin{lemma}\label{lem:trivialtrees} 
After step (a) of rounding, each tree component $T$ has $k_T + 1$ agents and $k_T$ goods for some $k_T \ge 0$. Suppose agent $i$ in $T$ is assigned a good $j$ with $p_j > 1$ during step (a) of rounding. Then $i$ spends all its money on $j$, and $i$ is the only agent that spends money on $j$. Moreover, the valuation of $i$ after rounding is $p_j$, and $B_c \cap T = \emptyset$.
\end{lemma}
\begin{proof} The first part is straightforward since after step (a) of rounding, every remaining good has exactly one parent agent and one child agent. For the second part, consider any good $j$ with price $p_j > 1$. If $j$ has children-agents, the one that spends most on $j$ stays as a child. Thus $j$ is not assigned during preprocessing. If $j$ is a leaf of the initial forest, only its parent agent spends money on it, call it $i$.  Since the money inflow into $j$ is one and $i$ has only one unit to spend, only $i$ spends on $j$ and $i$ spends one unit on $j$. Thus $i$ is not capped in the equilibrium and $v_{i j} = p_j$ since $j$ is an MBB-good for $i$ and hence $p_j \le c_i$ by (\ref{eq:ubMatch}). Thus the valuation of $i$ after rounding is $p_j$ and $B_c \cap T = \emptyset$. 
\end{proof}

\begin{lemma}\label{lem:half}
After rounding, each agent $i$ that is assigned its parent good obtains a valuation of at least $v_i(\x)/2$. 
\end{lemma}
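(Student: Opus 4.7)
The plan is to exploit the preprocessing invariant that every good $j$ surviving into the preprocessed tree satisfies $p_j > m_i^a/2$, where $i$ is its unique child-agent. Once we translate this pricing statement into a valuation statement, it immediately gives $v_{ij} > v_i(\x)/2$ for the parent good, and that good alone will dominate the bound.

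First I would pin down exactly what agent $i$ receives in the integral allocation. The second preprocessing step leaves at most one child-agent per good, so the non-root agent $i$ is the child-agent of only its parent good $j$. Hence in the rounding step $i$ receives only $j$ from the tree (plus, possibly, some initial leaf goods folded into $i$ during the first preprocessing step, which can only help). Using the standing assumption $v_{ij} \le c_i$, the valuation $i$ obtains is at least $\min(c_i, v_{ij}) = v_{ij}$. Next I would show $v_{ij} = p_j$: since the edge $\{i,j\}$ is present in the original allocation graph, $x_{ij} > 0$, and the thriftiness of the equilibrium forces $j$ to be an MBB good for $i$. Combined with the normalization that sets $\alpha_i = 1$ for every $i \in B \setminus B_0$, this yields $v_{ij} = \alpha_i p_j = p_j$.

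Finally I would apply the preprocessing invariant $p_j > m_i^a/2$ and identify $m_i^a$ with $v_i(\x)$ in both the capped and uncapped cases. For an uncapped $i$, the bundle is MBB and the budget is fully spent, so $v_i(\x) = \sum_j p_j x_{ij} = m_i = 1$, while $m_i^a = \min(m_i, c_i/\alpha_i) = \min(1, c_i) = 1$ since uncappedness forces $c_i > v_i(\x) = 1$. For a capped $i$, $v_i(\x) = c_i$, and the modest demand condition together with the unit budget gives $c_i = \sum_j v_{ij} x_{ij} = \sum_j p_j x_{ij} \le m_i = 1$, so $m_i^a = \min(1, c_i) = c_i$. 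Chaining, the valuation of $i$ in the integral assignment is at least $v_{ij} = p_j > m_i^a/2 = v_i(\x)/2$.

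The main subtlety I anticipate is the two-case identification $m_i^a = v_i(\x)$; in particular, the derivation $c_i \le 1$ for capped buyers (via the modest demand condition combined with the unit budget and the normalized MBB ratio) is the one nontrivial ingredient. Everything else is a direct consequence of the preprocessing invariants and the MBB structure of the equilibrium.
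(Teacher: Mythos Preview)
Your proposal is correct and follows essentially the same approach as the paper's proof. The paper's argument is a terse two-sentence version of yours: it invokes the preprocessing invariant $p_j > m_i^a/2$ and then asserts that the child agent's valuation from $j$ is at least half of its equilibrium valuation, leaving implicit precisely the two steps you spell out (namely $v_{ij}=p_j$ via MBB plus the unit-MBB normalization, and the case-split identification $m_i^a = v_i(\x)$).
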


\begin{proof}
Consider any good $j$ in the tree in the rounding step. Since $j$ was not assigned to its parent agent during preprocessing, its price is at least half of the active budget of its child agent, i.e., $p_j \ge m_i^a/2$. Since $j$ is MBB for $i$, $v_{ij} = p_j$ and hence from this good the child-agent obtains a valuation of at least half of the valuation in the equilibrium.
\end{proof}

Consider a tree $T$ at the beginning of the step (b) of rounding with $k_T+1$ agents and $k_T$ goods. Assume $k_T \ge 1$ first. Let $a_1, g_1, a_2, g_2,$ $\dots, a_l, g_\ell, a_{\ell+1}$ be the \emph{recursion path} in $T$ starting from the root agent $a_1$ and ending at the leaf agent $a_{\ell +1}$ such that $a_1,\ldots, a_{\ell+1}$ became root agents of the trees formed recursively during the rounding step, and good $g_i$ is assigned to $a_i$ in this process, for $1\le i\le \ell$. Note that $a_{\ell + 1}$ is not assigned any good in step (b) of rounding. However, as the proof of the following Lemma shows, it must have been assigned some good during step (a) of rounding. We denote by $k_i$ the number of children for agent $a_i$, for $1 \le i \le \ell$. If $k_T = 0$, then $\ell = 0$ and $a_1 = a_{\ell +1}$ is the root of a tree containing no goods after step (a) of rounding. 

\begin{lemma}\label{lem:treeb}
The product of the valuations of agents in $T$ in the rounded solution is at least
\[ 
  \left(\frac{1}{2}\right)^{k_T-\ell+1} \cdot \frac{1}{k_1\cdots k_\ell} \cdot \prod_{i\in T\cap B_c} c_i \prod_{j \in T: p_j > 1} p_j \enspace.
\]
\end{lemma}

\begin{proof} Let $\bc_i = \min\{1, c_i\}, \forall i\in B$. 

We first deal with the case $k_T = 0$. Then $\ell = 0$. If a good $j$ of price $p_j > 1$ is assigned to $a_1$ during step (a) of rounding, then the valuation of $a_1$ after rounding is $p_j$ and $B_c \cap T = \emptyset$ by Lemma~\ref{lem:trivialtrees}. This establishes the claim even without the leading factor $1/2$. If all goods assigned to $a_1$ during step (a) of rounding have price at most one then $\{j \in T : p_j > 1\} = \emptyset$ and 
$T \cap B_c \subseteq \{a_1\}$. Moreover, the value of $a_1$ after preprocessing is at least $v_{a_1}(\x)/2 = \bc_{a_1}/2$. Rounding does not decrease the value.

We turn to the case $k_T \ge 1$. Let $q_i = x_{a_i,g_i} > 0$ be the amount of good $g_i$ bought by agent $a_i$ in the equilibrium, for $1 \le i \le l$. Then $a_i$ spends $q_i p_i$ on good $g_i$. 

In the market equilibrium, the root agent $a_1$ receives at least half of its valuation from its children. Thus $q_1 p_1 \ge \bc_1/(2k_1)$. 

We next show that agent $a_i$, $2 \le i \le \ell+1$, receives at least value $q_{i-1}\max(\bc_i, p_{i-1})$ from its children in the market equilibrium. Agent $i$ can spend at most $\bc_i - q_{i-1} p_{i-1}$ on good $i - 1$. Thus it must spend $q_{i-1} p_{i-1}$ on its children in the market equilibrium. This establishes the claim if $\bc_i \le p_{i-1}$. So assume $p_{i-1} < \bc_i$. Agent $i$ can receive at most a fraction $1 - q_{i-1}$ of good $i-1$. Hence the value it receives from this good is at most $(1 - q_{i-1})p_{i-1} \le (1 - q_{i-1})\bc_i$. Thus it must receive value at least $q_{i-1}\bc_i$ from its children goods. 

$q_i p_i \ge q_{i-1}\max(\bc_i, p_{i-1})/k_i$ for $2 \le i \le \ell$, since agent $a_i$ spends $q_i p_i$ on good $g_i$ and this is at least a fraction $1/k_i$ of what it spends totally on its children. 

The product of the valuations of $a_1$ to $a_{\ell + 1}$ in the rounded solution is at least $p_1 \ldots p_\ell \cdot q_\ell \max(\bc_\ell, p_\ell)$. This holds since $g_i$ is assigned to $a_i$ for $1 \le i \le \ell$ and $a_{\ell + 1}$ receives a value at least $q_\ell \max(\bc_\ell, p_\ell)$ from its children in the market equilibrium. Since these children are assigned to $a_{\ell + 1}$ during step (a) of rounding, it receives at least this value in the rounded solution. 

Combining the arguments above we obtain 
\begin{align*}
p_1 \cdots p_\ell \cdot &q_\ell \max(\bc_\ell, p_\ell)\\
&\ge  \frac{\bc_1}{2 q_1 k_1}\frac{q_1 \max(\bc_2,p_{1})}{q_2 k_2}\cdots \frac{q_{\ell - 1}\max(\bc_\ell,p_{\ell-1})}{q_\ell k_\ell}\cdot q_\ell \max(\bc_{\ell+1},p_\ell) \\
&=\frac{1}{2} \left(\frac{1}{k_1\dots k_{l}}\right) \bc_1 \cdot \prod_{2\le i \le \ell+1} \max({\bc_i,p_{i-1}})\\
&\ge \frac{1}{2} \left(\frac{1}{k_1\dots k_{l}}\right) \prod_{1 \le i \le \ell+1} \bc_i \cdot \prod_{1 \le i \le \ell; p_i > 1} p_i,
\end{align*}
where the last inequality follows from $\max(\bc_i,p_{i-1}) \ge \bc_i \cdot \max(1,p_{i-1})$ for all $i$. 

Each of the remaining $k_T- \ell$ agents in $T$ get a value at least $\max(v_i(\x)/2, p)$, where $p$ is the price of the parent-good. Finally, since at most one good is assigned to each agent during the rounding step, each capped good is assigned to a separate agent, the product of the valuations of agents in $T$ in the rounded solution is at least
\[
    \left(\frac{1}{2}\right)^{k-l+1}\left(\frac{1}{k_1\dots k_{l}}\right) \prod_{i\in T\cap B_c} c_i \prod_{j \in T: p_j > 1} p_j\enspace. \qedhere
\]
\end{proof}

\begin{theorem}
The rounding procedure gives a $2e^{1/2e}$-approximation for the optimal Nash social welfare with budget-additive valuations. Note that $2e^{1/2e} < 2.404$.
\end{theorem}

\begin{proof}
Suppose there are trees $T^1, T^2, \dots, T^a$ at the beginning of the rounding. Let $k^i + 1$ and $k^i$ be the number of agents and goods in tree $T^i$, respectively. Let $l^i+1$ be the number of agents on the path in $T^i$ traced during the rounding step, and let $k^i_1,\ldots,k^i_{l_i}$ be the degrees of the number of children goods for agents along that path.

The bound in Lemma \ref{lem:treeb} for trees $T \subseteq (B \setminus B_0) \times (G \setminus G_0)$ can also be obtained for our rounding of trees $T \subseteq B_0 \times G_0$ (Lemma~\ref{lem:treeb0} in the Appendix). Thus, the Nash social welfare of the rounded solution is at least 
\begin{align*}
& \left(\left(\frac{1}{2}\right)^{\sum_{i=1}^a (k^i-l^i+1)} 
\left( \frac{1}{k^1_1\dots k^1_{l^1} k^2_1\dots k^2_{l^2} \dots k^a_1\dots k^a_{l^a}}\right)
\prod_{i\in B_c} c_i \prod_{j : p_j > 1} p_j\right)^{1/n} \\
= \quad &\frac{1}{2} \cdot 2^{\sum_{i=1}^a l^i/n} \cdot \left(\frac{1}{\prod_{i=1}^{a} \prod_{j=1}^{l^i}k^i_j}\right)^{1/n} \left(\prod_{i\in B_c} c_i \prod_{j : p_j > 1} p_j\right)^{1/n} \\
\ge \quad &\frac{1}{2}\left(\frac{2\sum_{i=1}^a l^i}{\sum_{i=1}^a\sum_{j=1}^{l^i} k^i_j}\right)^{\sum_{i=1}^a l^i/n} \left(\prod_{i\in B_c} c_i \prod_{j : p_j > 1} p_j\right)^{1/n} 
\ge \quad \frac{1}{2e^{1/2e}}\left(\prod_{i\in B_c} c_i \prod_{j : p_j > 1} p_j\right)^{1/n}\enspace. 
\end{align*}
The first equation uses $\sum_{i} (k_i + 1) = n$. The subsequent inequality follows from the standard relation of arithmetic and geometric mean applied to the set of all $k_j^i$, i.e., $\left(\prod_i \prod_j k^i_j\right)^{1/\sum_i l^i} \le \sum_i \sum_j k^i_j/ \sum_i l^i$. The last inequality uses $\sum_{i=1}^a \sum_{j=1}^{l^i} k_j^i \le n$ and the fact that $(2x)^x$ is minimum at $x = 1/2e$. 
\end{proof}

\subsubsection{Rounding Equilibria of Perturbed Markets}
\label{sec:perturbNSW}

Given a parameter $\eps' > 0$, our FPTAS in Section~\ref{sec:FPTAS} computes an exact equilibrium for a perturbed market, which results when agents have perturbed valuations $\tilde{v}_i(\x) = \min\left(c_i, \sum_{j} \tilde{v}_{ij}x_{ij}\right)$ with the same caps $c_i$ and $\tilde{v}_{ij} \ge v_{ij} \ge \tilde{v}_{ij}/(1+\eps')$. Suppose we apply our rounding algorithm to the exact equilibrium for $\tilde{v}$. It obtains an allocation $S$ such that
\begin{align*}
    \left(\prod_{i} v_i(\x_i^S)\right)^{1/n} &\ge \frac{1}{(1+\eps')} \left(\prod_{i} \tilde{v}_i(\x_i^S)\right)^{1/n} \\
                          &\ge \frac{1}{(1+\eps')} \cdot \frac{1}{2e^{1/2e}} \left(\prod_{i} \tilde{v}_i(\x^*)\right)^{1/n}\\
                          &\ge \frac{1}{(1+\eps') \cdot 2e^{1/2e}} \cdot \left(\prod_{i} v_i(\x^*)\right)^{1/n} \enspace.
\end{align*}
If we apply the FPTAS with $\eps'$, then this yields an approximation ratio of at most $2e^{1/2e} + \eps$ for $\eps = 2e^{1/(2e)}\eps'$. We summarize our main result:

\begin{corollary}
  For every $\eps > 0$ there is an algorithm with running time polynomial in $n$, $m$, $\log \max_{i,j}\{v_{ij}, c_i\}$, and $1/\eps$ that computes an allocation which represents a $(2e^{1/2e}+\eps)$-approximation for the optimal Nash social welfare.
\end{corollary}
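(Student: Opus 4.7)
The plan is to bundle three ingredients already developed in the preceding sections: the reduction of the NSW problem to a Fisher market $\CM$ with earning and utility limits (Section~\ref{sec:NSW}), the FPTAS of Section~\ref{sec:FPTAS} that computes an exact thrifty and modest equilibrium of a perturbed market $\tilde{\CM}$ in time polynomial in $n$, $m$, $\log U$, and $1/\eps'$, and the rounding theorem of Section~\ref{sec:algoNSW}, which turns any such equilibrium into an integral allocation whose Nash social welfare is within a factor $2e^{1/2e}$ of the NSW-optimal integral allocation.

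Given an NSW instance with budget-additive valuations, I first build $\CM$ exactly as described at the start of Section~\ref{sec:NSW}: $m_i=1$, $u_{ij}=v_{ij}$, $d_j=1$, utility cap $c_i$. I check money clearing by a single max-flow; if it fails, Lemma~\ref{lem:noMoneyNoNSW} says every allocation is optimal, so I output any one. Otherwise, I invoke the FPTAS with a parameter $\eps'>0$ to be fixed later, obtaining an exact equilibrium $(\x,\p)$ of $\tilde{\CM}$, whose utilities $\tilde u_{ij}$ satisfy $\tilde u_{ij}/(1+\eps')<u_{ij}\le\tilde u_{ij}$ and whose caps equal $c_i$. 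I then run the Section~\ref{sec:algoNSW} rounding on $(\x,\p)$ to produce an integral allocation $S$.

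The approximation bound comes from the three-inequality chain already displayed just before the corollary. Coordinate-wise, $v_i(\x^S_i)\ge \tilde v_i(\x^S_i)/(1+\eps')$ because $\tilde u_{ij}\le (1+\eps')u_{ij}$ and caps are shared. Applied to $\tilde{\CM}$ itself, the rounding theorem yields $\prod_i \tilde v_i(\x^S_i)\ge (1/2e^{1/2e})\prod_i \tilde v_i(\x^*)$, where $\x^*$ is a NSW-optimal integral allocation. Finally, since $\tilde u_{ij}\ge u_{ij}$, we have $\tilde v_i(\x^*)\ge v_i(\x^*)$. Multiplying, $\prod_i v_i(\x^S_i)\ge \prod_i v_i(\x^*)/\bigl((1+\eps')^n\cdot 2e^{1/2e}\bigr)$.

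To finish, set $\eps'=\eps''/n$ so $(1+\eps')^n\le e^{\eps''}$; choosing $\eps''=\ln\bigl(1+\eps/(2e^{1/2e})\bigr)=\Theta(\eps)$ makes the overall ratio at most $2e^{1/2e}+\eps$. Since $1/\eps'=\Theta(n/\eps)$, the FPTAS runs in time polynomial in $n$, $m$, $\log U$, and $1/\eps$, and the rounding is polynomial. There is no substantive obstacle here; the only item to verify is that the Section~\ref{sec:algoNSW} analysis transfers verbatim to $\tilde{\CM}$, which it does because that proof invokes only the axiomatic equilibrium conditions and the MBB normalization $v_{ij}\le\min(p_j,c_i)$, not the numerical values of $u_{ij}$.
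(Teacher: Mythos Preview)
Your proposal is correct and follows essentially the same approach as the paper: you reproduce the three-inequality chain $\prod_i v_i(\x^S)\ge (1+\eps')^{-n}\prod_i \tilde v_i(\x^S)\ge (1+\eps')^{-n}(2e^{1/2e})^{-1}\prod_i \tilde v_i(\x^*)\ge (1+\eps')^{-n}(2e^{1/2e})^{-1}\prod_i v_i(\x^*)$ and then set $\eps'=\eps''/n$, exactly as the paper does. Your additional remarks (the explicit money-clearing check and the observation that the rounding analysis depends only on the equilibrium axioms and the normalized inequality $v_{ij}\le\min(p_j,c_i)$, hence transfers to $\tilde\CM$) are helpful elaborations but do not change the argument.
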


\newcommand{\xbrack}[1]{\langle #1 \rangle}

\subsection{Hardness of Approximation}
\label{sec:LB}

In this section, we provide a result on the hardness of approximation of the maximum Nash social welfare with additive valuations. The best previous bound was a factor of 1.00008~\cite{Lee17}. Our improved lower bound of $\sqrt{8/7} > 1.069$ follows by adapting a construction in~\cite{ChakrabartyG10} for (sum) social welfare with budget-additive valuations.

\begin{theorem}
  \label{theo:hardness}
  For every constant $\delta > 0$, there is no $(\sqrt{8/7} - \delta)$-approximation algorithm for maximizing Nash social welfare with additive valuations unless \classP=\classNP.
\end{theorem}

For clarity, we first describe the proof for budget-additive valuations with caps. Subsequently, we show how to drop the assumption of caps and apply the proof even for additive valuations.

\begin{lemma}
  \label{lem:hardness}
  For every constant $\delta > 0$, there is no $(\sqrt{8/7} - \delta)$-approximation algorithm for maximizing Nash social welfare with budget-additive valuations unless \classP=\classNP.
\end{lemma}

\begin{proof}
  Chakrabarty and Goel~\cite{ChakrabartyG10} show hardness for (sum) social welfare by reducing from MAX-E3-LIN-2. An instance of this problem consists of $n$ variables and $m$ linear equations over GF(2). Each equation consists of 3 distinct variables. For the Nash social welfare objective, we require slightly more control over the behavior of the optimal assignments. Therefore, we consider the stronger problem variant Ek-OCC-MAX-E3-LIN-2, in which each variable occurs exactly $k$ times in the equations.

  \begin{theorem}[\cite{ChlebikC03}]
    For every constant $\varepsilon \in (0,\frac 14)$ there is a constant $k(\varepsilon)$ and a class of instances of Ek-OCC-MAX-E3-LIN-2 with $k \ge k(\varepsilon)$, for which we cannot decide if the optimal variable assignment fulfills more than $(1-\varepsilon)m$ equations or less than $(1/2 + \varepsilon)m$ equations, unless \classP=\classNP.
  \end{theorem}

  Our reduction follows the construction in~\cite{ChakrabartyG10}. We only sketch the main properties here. For more details see~\cite[Section 4]{ChakrabartyG10}.

  For each variable $x_i$ we introduce two agents $\xbrack{x_i : 0}$ and $\xbrack{x_i : 1}$. Each of these agents has a cap of $c_i = 4k$, where $k$ is the number of occurrences of $x_i$ in the equations. Since in Ek-OCC-MAX-E3-LIN-2 every variable occurs exactly $k$ times, we have $c_i = 4k$ for all agents. Moreover, for each variable $x_i$ there is a \emph{switch item}. The switch item has value $4k$ for agents $\xbrack{x_i : 0}$ and $\xbrack{x_i : 1}$, and value 0 for every other agent. It serves to capture the assignment of the variable -- if $x_i$ is set to $x_i = 1$, the switch item is given to $\xbrack{x_i : 0}$ (for $x_i = 0$, the switch item goes to $\xbrack{x_i : 1}$). When given a switch item, an agent cannot generate value for any additional equation items defined as follows.

  For each equation $x_i + x_j + x_k = \alpha$ with $\alpha \in \{0,1\}$, we introduce 4 classes of equation items -- one class for each satisfying assignment. In particular, we get class $\xbrack{x_i : \alpha; x_j : \alpha; x_k : \alpha}$ as well as classes $\xbrack{x_i :  \bar{\alpha}, x_j : \bar{\alpha}, x_k : \alpha}$, $\xbrack{x_i :  \bar{\alpha}, x_j : \alpha, x_k : \bar{\alpha}}$ and $\xbrack{x_i : \alpha, x_j : \bar{\alpha}, x_k : \bar{\alpha}}$. For each of these classes, we introduce three items. Hence, for each equation we introduce 12 items in total. An item $\xbrack{<x_i : \alpha_i, x_j : \alpha_j, x_k : \alpha_k}$ has a value of 1 for the three agents $\xbrack{x_i : \alpha_i}$, $\xbrack{x_j : \alpha_j}$, and $\xbrack{x_k : \alpha_k}$, and value 0 for every other agent.

  It is easy to see that w.l.o.g.\ every optimal assignment of items to agents assigns all switch items. Hence, every optimal assignment yields some variable assignment for the underlying instance of Ek-OCC-MAX-E3-LIN-2. 

  Consider an equation $x_i + x_j + x_k = \alpha$ that becomes satisfied by setting the variables $(x_i, x_j, x_k) = (\alpha_i, \alpha_j, \alpha_k)$. Then none of the agents $\xbrack{x_i : \alpha_i}$,  $\xbrack{x_j : \alpha_j}$, and $\xbrack{x_k : \alpha_k}$ gets a switch item, and we can assign exactly 4 equation items to each of these agents (for details see~\cite{ChakrabartyG10}). Hence, all 12 equation items generate additional value. In particular, it follows that if $x_i$ is involved in a satisfied equation, one of its agent gets a switch item, and the other one can receive at least 3 equation items.
  
  Consider an equation $x_i + x_j + x_k = \alpha$ that becomes unsatisfied by setting the variables $(x_i, x_j, x_k) = (\alpha_i, \alpha_j, \alpha_k)$. Then for one class of equation items, all agents that value these items have already received switch items (for details see~\cite{ChakrabartyG10}). This class of items cannot generate additional value. Hence, at most 9 equation items generate additional value. They can assigned to the agents that did not receive switch items such that each agent receives 3 items. In particular, it follows that if $x_i$ is involved in an unsatisfied equation, one of its agents gets a switch item, and the other one can receive at least 3 equation items. Hence, we can ensure that in every optimal solution the overall Nash social welfare is never 0.
  
  We now derive a lower bound on the optimal Nash social welfare when $(1-\varepsilon)m$ equations can be satisfied. In this case, we obtain value $4k$ for $n$ agents that receive the switch items. Moreover, we get an additional total value of $12m(1-\varepsilon) + 9m\varepsilon$ generated by the equation items. Note that $m = kn/3$. We strive to lower bound the Nash social welfare of an optimal assignment in this case. For this, it suffices to consider the assignment indicated above -- for each satisfied equation, all incident agents without switch items get 4 equation items. For each unsatisfied equation, all incident agents without switch items get 3 equation items. To obtain a lower bound on the Nash social welfare, we assume a value of $4k$ for a maximum of $n(1-\varepsilon)$ agents, while the others get a value of $3k$. Therefore, when an assignment of items to agents generates Nash social welfare of more than
  \[
    \left( (4k)^n \cdot (4k)^{n(1-\varepsilon)}\cdot (3k)^{n\varepsilon}\right)^{-2n} = k \cdot 4^{\frac 12}\cdot 4^{\frac 12} \cdot (3/4)^{\frac{\varepsilon}{2}}\enspace,
  \]
  we take this as an indicator that at least $m(1-\varepsilon)$ equations can be fulfilled.
 
  In contrast, now suppose only $(1/2 + \varepsilon)m$ equations can be fulfilled. In this case, we obtain value $4k$ for $n$ agents that receive the switch items. Moreover, we get an additional total value of at most $12m(1/2 + \varepsilon) + 9m(1/2 - \varepsilon) = 10.5m + 3\varepsilon m$ generated by the equation items. We strive to upper bound the Nash social welfare of such an assignment. For this, we assume that all agents that do not receive a switch item get an equal share of the value generated by equation items, i.e., a share of $3.5k + k\varepsilon$. Therefore, when an assignment of items to agents generates Nash social welfare of less than
  \[
    \left( (4k)^n \cdot \left(k(3.5 + \varepsilon)\right)^n\right)^{-2n} = k \cdot 4^{\frac 12} \cdot \left(3.5 + \varepsilon \right)^{\frac 12}\enspace,
  \]
  we take this as an indicator that at most $m(1/2 + \varepsilon)$ equations can be fulfilled. 

  Hence, if we can approximate the optimal Nash social welfare by at most a factor of
  \[
    \frac{4^{\frac 12} \cdot (3/4)^{\frac{\varepsilon}{2}}}{(3.5 + \varepsilon)^{\frac{1}{2}}} = \left(\frac{4 \cdot (3/4)^{\varepsilon}}{3.5+\varepsilon}\right)^{\frac 12}\enspace,
  \]
  we can decide whether the instance of Ek-OCC-MAX-E3-LIN-2 has an optimal assignment with at least $m(1-\varepsilon)$ or at most $m(1/2+\varepsilon)$ satisfied equations. This shows that we cannot approximate Nash social welfare with a factor of $\sqrt{8/7} > 1.069$ unless \classP=\classNP.
\end{proof}

Having established the result for budget-additive valuations, we now show how to adjust the construction for additive valuations.

\begin{proof}[Proof of Theorem~\ref{theo:hardness}]
  We use the same construction as in Lemma~\ref{lem:hardness} with the following adjustments. Switch items have a large value $M \gg 4k$ for the respective agents. All agent valuations are additive and have no caps (i.e., all $c_i = \infty$). 

  First, we again establish the lower bound on the optimal Nash social welfare when  $(1-\varepsilon)m$ equations can be satisfied. Then $12m(1-\varepsilon) + 9m\varepsilon$ equation items can be given to the agents without switch items. The remaining $3m\varepsilon$ equation could be assigned to agents with switch items. Instead, to construct a lower bound, we simply drop these items from consideration. Therefore, when an assignment of items to agents generates Nash social welfare of more than
  \[
    \left( M^n \cdot (4k)^{n(1-\varepsilon)}\cdot 3k^{n\varepsilon}\right)^{-2n} = k^{\frac 12} \cdot M^{\frac 12} \cdot 4^{\frac 12}\cdot (3/4)^{\frac{\varepsilon}{2}}
  \]
  we take this as an indicator that at least $m(1-\varepsilon)$ equations can be fulfilled.

  Now suppose only $(1/2 + \varepsilon)m$ equations can be fulfilled. To construct an upper bound on the optimal Nash social welfare, we apply the reasoning above. $n$ agents receive switch items according to an optimal variable assignment. The $12m(1/2 + \varepsilon) + 9m(1/2 - \varepsilon) = 10.5m + 3\varepsilon m$ equation items are assigned in equal shares to agents without switch items. The remaining $3m(1/2 - \varepsilon)$ items are assigned equal shares to agents with switch items. Therefore, when an assignment of items to agents generates Nash social welfare of less than
  \[
    \left(  \left(M+k\left(\frac 12 - \varepsilon\right)\right)^n \cdot \left(k(3.5 + \varepsilon)\right)^n\right)^{-2n} =  k^{\frac 12} \cdot \left(M+k\left(\frac 12 - \varepsilon\right)\right)^{\frac 12} \cdot (3.5 + \varepsilon)^{\frac 12}
  \]
  we take this as an indicator that at most $m(1/2 + \varepsilon)$ equations can be fulfilled. 

  Hence, if we can approximate the optimal Nash social welfare by at most a factor of
  \[
    \left( \frac{M}{M+k\left(\frac 12 - \varepsilon\right)}\right)^{\frac 12} \cdot \left( \frac{4 \cdot (3/4)^{\varepsilon}}{3.5 + \varepsilon} \right)^{\frac 12}
  \]
  we can decide whether the instance of Ek-OCC-MAX-E3-LIN-2 has an optimal assignment with at least $m(1-\varepsilon)$ or at most $m(1/2+\varepsilon)$ satisfied equations. The second fraction clearly grows to $\sqrt{8/7}$ as $\varepsilon \to 1$. For a fixed number $M$, the first fraction decreases, since $k = k(\varepsilon)$ increases with decreasing $\varepsilon$. However, we can choose a number $M = o(k)$ since $k$ is the number of occurrences of a single variable and, thus, the input size is polynomial in $k$. For example, with $M = k^k$ the first term approaches 1 rapidly as $\varepsilon \to 0$ (and $k \to \infty$). This shows that we cannot approximate Nash social welfare with a factor of $\sqrt{8/7} > 1.069$ unless \classP=\classNP, even for additive valuations.
\end{proof}

\section{Future Directions}
\label{sec:Future}

There are many interesting questions arising from our work. Equilibria in linear Fisher markets with both earning and utility limits turn out to have intriguing structure. Although equilibrium may not always exist in these markets, we showed that it always exists when the market satisfies the money clearing condition, i.e., money clearing is a (polynomial-time checkable) sufficiency condition for existence. While existence is guaranteed in this case, the set of equilibria is non-convex. Nevertheless, we managed to obtain an FPTAS to compute an approximate equilibrium. Beyond money-clearing markets, however, even deciding existence is not well-understood -- is it \classNP-hard, or can it be tightly characterized by a simple condition that can be checked in polynomial time? In addition to deciding existence, can we efficiently \emph{find} exact equilibria in general (if they exist) or in money-clearing markets? Is the problem of finding equilibria for money-clearing markets in the class \classCLS\ (Continuous Local Search)~\cite{DaskalakisP11}? 

We showed that an approximate equilibrium can be rounded to obtain a constant-factor approximation of Nash social welfare for allocating indivisible goods among agents with budget-additive valuations. This shows that in order to obtain a constant-factor approximation algorithm for the Nash social welfare problem, we only need a constant-factor approximate equilibrium of an appropriate Fisher market. It would be interesting to see if Fisher markets with limits can yield good approximation algorithms for the Nash social welfare problem also beyond budget-additive and additive-separable concave valuations. 

\appendix

\newcommand{\MM}{\mbox{\boldmath $M$}}
\newcommand{\yy}{\mbox{\boldmath $y$}}
\newcommand{\pq}{\mbox{\boldmath $q$}}
\newcommand{\CP}{\mbox{${\cal P}$}}
\newcommand{\pa}{\mbox{\boldmath $a$}}
\newcommand{\one}{\mbox{\boldmath $1$}}
\newcommand{\pv}{\mbox{\boldmath $v$}}

\section{The Linear Complementarity Problem and Lemke's Algorithm}
\label{app:lcp}

Given an $n \times n$ matrix $\MM$, and a vector $\pq$, the linear complementarity problem\footnote{We refer the reader to~\cite{CottlePS92} for a comprehensive treatment of notions presented in this section.} asks for a vector $\yy$ satisfying the following
conditions:
\begin{equation}
\label{eq.a} 
\MM \yy \leq \pq,  \ \ \ \   \yy \geq 0 \ \ \ \ \mbox{and} \ \ \ \  \yy \cdot (\pq - \MM \yy) = 0.  
\end{equation}

The problem is interesting only when $\pq \not \geq 0$, since otherwise $\yy = 0$ is a trivial solution. Let us introduce slack 
variables $\pv$ to obtain the equivalent formulation.
\begin{equation}
\label{eq.b} 
 \MM \yy  + \pv = \pq, \ \ \ \  \yy \geq 0, \ \ \ \ \pv \geq 0 \ \ \ \ \mbox{and} \ \ \ \ \yy \cdot \pv = 0 .
\end{equation}

The reason for imposing non-negativity on the slack variables is that the first condition in (\ref{eq.a}) implies $\pq - \MM \yy \geq
0$. Let $\CP$ be the polyhedron in $2n$ dimensional space defined by the first three conditions; we will assume that $\CP$ is
non-degenerate\footnote{A polyhedron in $n$-dimension is said to be {\em non-degenerate} if on its $d$-dimensional faces exactly $n-d$
of its constraints hold with equality. For example on vertices ($0$-dimensional face) exactly $n$ constraints hold with equality. There
are many other equivalent ways to describe this notion.}.
Under this condition, any solution to (\ref{eq.b}) will be a vertex of $\CP$, since it must satisfy $2n$ equalities. Note that the set
of solutions may be disconnected.

An ingenious idea of Lemke was to introduce a new variable and consider the system, which is called the {\em augmented LCP}:
\begin{equation}
\label{eq.c} 
 \MM \yy  + \pv -z \one  = \pq, \ \ \ \  \yy \geq 0, \ \ \ \ \pv \geq 0, \ \ \ \  z \geq 0  \ \ \ \ \mbox{and} \ \ \ \ \yy \cdot \pv = 0 .
\end{equation}

Let $\CP'$ be the polyhedron in $2n + 1$ dimensional space defined by the first four conditions of the augmented LCP; again we 
will assume that $\CP'$ is non-degenerate. Since any solution to (\ref{eq.c}) must still satisfy $2n$ equalities,
the set of solutions, say $S$, will be a subset of the one-skeleton of $\CP'$, i.e., it will consist of edges and vertices of $\CP'$.
Any solution to the original system must satisfy the additional condition $z = 0$ and hence will be a vertex of $\CP'$.

Now $S$ turns out to have some nice properties. Any point of $S$ is {\em fully labeled} in the sense that for each $i$, $y_i = 0$ or
$v_i = 0$.\footnote{These are also known as {\em almost complementary solutions} in the literature.}  We will say that a point of $S$ {\em has double label i} if $y_i = 0$ and $v_i = 0$ are both
satisfied at this point. Clearly, such a point will be a vertex of $\CP'$ and it will have only one double label. 
Since there are exactly two ways of relaxing this double label, this vertex must have exactly two edges of $S$ incident at it.
Clearly, a solution to the original system (i.e., satisfying $z = 0$) will be a vertex of $\CP'$ that does not have a double label. 
On relaxing $z=0$, we get the unique edge of $S$ incident at this vertex.

As a result of these observations, it follows that $S$ consists of paths and cycles.  Of these paths, Lemke's algorithm explores a
special one.  An unbounded edge of $S$ such that the vertex of $\CP'$ it is incident on has $z > 0$ is called a {\em ray}.  Among the
rays, one is special -- the one on which $\yy = 0$. This is called the {\em primary ray} and the rest are called {\em secondary rays}.
Now Lemke's algorithm explores, via pivoting, the path starting with the primary ray. This path must end either in a vertex satisfying
$z = 0$, i.e., a solution to the original system, or a secondary ray. In the latter case, the algorithm is unsuccessful in finding a
solution to the original system; in particular, the original system may not have a solution.

{\bf Remark:}  Observe that $z \one$ can be replaced by $z \pa$, where vector $\pa$ has a 1 in each row in which $\pq$ is negative and
has either a 0 or a 1 in the remaining rows, without changing its role; in our algorithm, we will set a row of $\pa$ to 1 if and only
if the corresponding row of $\pq$ is negative.  As mentioned above, if $\pq$ has no negative components, (\ref{eq.a}) has the trivial
solution $\yy = 0$. Additionally, in this case Lemke's algorithm cannot be used for finding a non-trivial solution, since it is simply
not applicable. However, Lemke-Howson scheme is applicable for such a case; it follows a complementary path in the original
polyhedron (\ref{eq.b}) starting at $\yy=0$, and guarantees termination at a non-trivial solution if the polyhedron is
bounded.

\section{Rounding Trees with Zero Price Goods}
\label{app:roundPrice0}

In this section, we give an algorithm to round trees $T_0 \subseteq B_0 \times G_0$ of the equilibrium  $(\x,\p)$ to an integral allocation. Recall that in such trees, all goods have price $p_j = 0$ and all buyers reach their cap $c_i$. Consider the following procedure which is similar to the procedure in Section \ref{sec:algoNSW}. It uses only the allocation $\x$ and does not rely on prices. In particular, the only price-based assignment rule is in the preprocessing step, and it can be replaced here with an equivalent, more direct criterion:

\begin{description}
\item[Preprocessing:] 
It consists of three substeps.
\begin{compactenum}[(a)]
\item For each zero-price tree component of the allocation graph, assign some agent to be a root node. 
\item For every good $j$ keep at most one child agent. This child agent $i$ must buy the largest amount of $j$ among the child agents (ties are broken arbitrarily) and must receive a utility that is at least half of the total utility, i.e., $u_{ij}x_{ij} > c_i/2$. In other words, child agent $i$ is cut off from good $j$ if a sibling buys more of good $j$ (ties are broken arbitrarily) or if $u_{ij}x_{ij} \le c_i/2$. 
\item  Agents whose connection to their parent good is severed in step (b) become roots.
\end{compactenum}

\item[Rounding:] It consists of two substeps.
\begin{compactenum}[(a)]
\item Goods with no child agent are assigned to their parent agent.
\item For each non-trivial zero-price tree component, do the following recursively: Assign the root agent a child good $j$ that gives him the maximum value (among all children goods) in the fractional solution. 
Except in the subtree rooted at $j$, assign each good to its child agent in the remaining tree. Make the child agent of good $j$ the root node of the newly created tree. 
\end{compactenum}
\end{description}

Prices $p_j$ play a role in exactly two places in Section \ref{sec:algoNSW}. 

First, when we assign a good $j$ to the parent and drop the child agent $i$ with $p_j \le m_i^a/2$, this ensures that the valuation of agent $i$ in the newly created tree is at least half of the original valuation. We use this fact to prove Lemmas~\ref{lem:tree} and~\ref{lem:half}. In case of zero price goods, our choice of assigning a good $j$ to the parent and dropping the child agent $i$ if $u_{ij}x_{ij} \le c_i/2$ is an equivalent notion in terms of allocation. 

Second, in Lemmas~\ref{lem:trivialtrees} and~\ref{lem:treeb} we argue that a good with price more than $1$ is only assigned to an uncapped agent $i$ which gives agent $i$ at least $p_j$ amount of value. Since each agent of $B_0$ is capped, we do not need this property in Lemmas~\ref{lem:trivialtrees0} and~\ref{lem:treeb0}.

As a result, the proofs of the following lemmas are almost completely identical to the proofs of Lemmas~\ref{lem:tree}-\ref{lem:treeb}, respectively, and hence are omitted. 

\begin{lemma}\label{lem:tree0}
After preprocessing, the valuation of each root agent $r$ is at least $c_r/2$. For all other agents $i$ the valuation is at least $c_i$. 
\end{lemma}

\begin{lemma}\label{lem:trivialtrees0} 
After step (a) of rounding, each tree component $T$ has $k_T + 1$ agents and $k_T$ goods for some $k_T \ge 0$.
\end{lemma}

\begin{lemma}\label{lem:half0}
After rounding, each agent $i$ that is assigned its parent good obtains a valuation of at least $c_i/2$. 
\end{lemma}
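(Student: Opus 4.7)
The plan is to mirror the proof of Lemma~\ref{lem:half}, replacing the price-based preprocessing criterion with its allocation-based analogue. The key observation is that after preprocessing, for every good $j$ that survives as a child of agent $i$ in its tree, the edge $\{i,j\}$ was not deleted, so the preprocessing rule guarantees
\[
u_{ij}\,x_{ij} > c_i/2 \enspace.
\]
Since $x_{ij}\in[0,1]$, this immediately yields $u_{ij} > c_i/(2 x_{ij}) \ge c_i/2$.

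Now consider any non-root agent $i$ in a tree $T \subseteq B_0\times G_0$ after the recursive rounding step. Such an agent is precisely one that is assigned its parent good $j$ (either because $i$ is a leaf whose parent good gets handed down, or because, along the recursion path, $j$ was not selected by its parent and is thus given to its child $i$). In the integral allocation produced by rounding, agent $i$ receives the whole good $j$, which contributes the single-item valuation
\[
v_i \ge \min\bigl(c_i,\, u_{ij}\bigr) \enspace.
\]
Combining with $u_{ij} > c_i/2$ from the preprocessing bound, we get $\min(c_i, u_{ij}) \ge c_i/2$ in both cases ($u_{ij}\le c_i$ gives $\min=u_{ij}\ge c_i/2$; $u_{ij}>c_i$ gives $\min=c_i\ge c_i/2$), which is exactly the claim.

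There is no real obstacle here: the entire content of the lemma is encapsulated in the preprocessing rule, whose purpose is to ensure precisely that any surviving parent–child edge carries at least half of the child's fractional value in the equilibrium. The only point worth stating carefully is that, because every buyer in $B_0$ is capped at $c_i$ in the equilibrium (as observed at the beginning of Section~\ref{sec:algoNSW}), the fractional valuation of agent $i$ equals $c_i$, so the threshold $c_i/2$ really is half of $v_i(\x)$ and the lemma matches its non-zero-price counterpart. This is what allows Lemma~\ref{lem:treeb} to be transplanted to the zero-price setting in Lemma~\ref{lem:treeb0}, exactly as the authors indicate.
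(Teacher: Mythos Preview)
Your proof is correct and follows exactly the approach the paper intends: it transplants the argument of Lemma~\ref{lem:half} by replacing the price-based preprocessing criterion $p_j > m_i^a/2$ with its allocation-based analogue $u_{ij}x_{ij} > c_i/2$, and then uses $x_{ij}\le 1$ to conclude $u_{ij} > c_i/2$. The paper omits the proof precisely because it is this direct substitution, so there is nothing to add.
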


Consider a zero-price tree $T$ at the beginning of the step (b) of rounding with $k_T+1$ agents and $k_T$ goods. Assume $k_T \ge 1$ first. Let $a_1, g_1, a_2, g_2,$ $\dots, a_l, g_\ell, a_{\ell+1}$ be the \emph{recursion path} in $T$ starting from the root agent $a_1$ and ending at the leaf agent $a_{\ell +1}$ such that $a_1,\ldots, a_{\ell+1}$ became root agents of the trees formed recursively during the rounding step, and good $g_i$ is assigned to $a_i$ in this process, for $1\le i\le \ell$. Note that $a_{\ell + 1}$ is not assigned any good in step (b) of rounding. However, it must have been assigned some good during step (a) of rounding. We denote by $k_i$ the number of children for agent $a_i$, for $1 \le i \le \ell$. If $k_T = 0$, then $\ell = 0$ and $a_1 = a_{\ell +1}$ is the root of a tree containing no goods after step (a) of rounding. 

\begin{lemma}\label{lem:treeb0}
The product of the valuations of agents in $T$ in the rounded solution is at least
\[ 
  \left(\frac{1}{2}\right)^{k_T-\ell+1} \cdot \frac{1}{k_1\cdots k_{\ell}} \cdot \prod_{i\in T} c_i \enspace.
\]
\end{lemma}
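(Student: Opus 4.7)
The plan is to repeat, nearly verbatim, the telescoping calculation used to prove Lemma~\ref{lem:treeb}, taking advantage of two simplifying features of the zero-price case: every agent of $T \subseteq B_0 \times G_0$ is capped (so the quantity $\bar c_i = \min\{1,c_i\}$ of the general proof collapses to $c_i$), and all goods in $T$ have price $0$ (so there is no separate case for goods with $p_j > 1$ and no need to invoke Lemma~\ref{lem:uncap}). The preprocessing rule ``delete the edge $(i,j)$ whenever $u_{ij}x_{ij} \le c_i/2$'' plays the exact same role as ``$p_j \le m_i^a/2$'' does in the general case, ensuring that once the recursion strips a good from a capped agent $i$, agent $i$ still retains at least half of its utility. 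This is captured formally by Lemmas~\ref{lem:tree0} and~\ref{lem:half0}, which are the direct analogs of Lemmas~\ref{lem:tree} and~\ref{lem:half}.

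First I would introduce the abbreviation $q_i = x_{a_i,g_i}$ along the recursion path $a_1,g_1,\ldots,a_l,g_l,a_{l+1}$. At the root, Lemma~\ref{lem:tree0} gives a fractional value of at least $c_1/2$ and hence a maximum-value child contribution of at least $c_1/(2k_1)$; promoting the fractional share of $g_1$ from $q_1$ to $1$ scales this contribution by $1/q_1$, so $a_1$ ends up with value at least $c_1/(2q_1 k_1)$ after rounding (with the valuation simply capped at $c_1$ if this formal bound exceeds $c_1$). Next, since $a_2$ was the child of $g_1$ with fractional share $1-q_1$, the reassignment of $g_1$ removes at most $(1-q_1)u_{a_2,g_1}\le (1-q_1)c_2$ from $a_2$'s fractional utility, leaving at least $c_2 q_1$ to be distributed over $a_2$'s $k_2$ children. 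The same max-child-plus-promotion step then gives $a_2$ a rounded value of at least $c_2 q_1/(q_2 k_2)$, and iterating yields $c_i q_{i-1}/(q_i k_i)$ for $2 \le i \le l$ and $c_{l+1}q_l$ at the leaf $a_{l+1}$.

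Multiplying these path contributions telescopes cleanly: all $q_i$ cancel and what remains is $\tfrac{1}{2}\cdot\tfrac{1}{k_1\cdots k_l}\cdot\prod_{i\in\mathrm{path}}c_i$. For each of the remaining $k_T-l$ agents of $T$ not on the recursion path, Lemma~\ref{lem:half0} furnishes a rounded valuation of at least $c_i/2$, and combining both parts yields exactly $(1/2)^{k_T-l+1}\cdot (k_1\cdots k_l)^{-1}\cdot\prod_{i\in T}c_i$, as claimed. The only point that needs a little care is the first step, where the formal lower bound $c_1/(2q_1 k_1)$ could exceed the cap $c_1$; this is the same subtlety addressed in the proof of Lemma~\ref{lem:treeb}, and it is resolved by observing that whenever the cap is active we may fall back on the stronger lower bound $c_1$, which only strengthens the product. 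No other step presents a genuine obstacle, since the structural absence of prices eliminates both the price-normalization bookkeeping and the separate treatment of goods with $p_j>1$ that constitute the bulk of the proof of Lemma~\ref{lem:treeb}.
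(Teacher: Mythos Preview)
Your proposal is correct and follows exactly the approach the paper intends (the paper omits the proof, stating it is ``almost completely identical'' to that of Lemma~\ref{lem:treeb}). One small correction: your remark that if the formal bound $c_1/(2q_1k_1)$ exceeds $c_1$ we may ``fall back on the stronger lower bound $c_1$, which only strengthens the product'' has the direction wrong---$c_1$ would then be the \emph{smaller} number and would spoil the telescoping---but this case in fact never arises, since $u_{a_1,g_1}\le c_1$ together with $u_{a_1,g_1}\,q_1\ge c_1/(2k_1)$ forces $2q_1k_1\ge 1$; that is precisely the resolution used in the proof of Lemma~\ref{lem:treeb}.
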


\bibliographystyle{abbrv} 
\bibliography{./literature,./martin}

\end{document}